\newcommand{\longTitle}{Toward Completing the Picture of Control in Schulze and Ranked Pairs Elections: Supplementary Material}
\newcommand{\OMIT}[1]{}
\newcommand{\sproofof}[1]{{\textsc{Proof of {#1}.}}\hspace*{1em}}
\newcommand{\eproofof}[1]{{\hspace*{0.1in} \hfill \qed~{\scriptsize #1}}\bigskip}
\newcommand{\sproofsketch}{{\textit{Proof sketch.}}\hspace*{1em}}
\newcommand{\eproofsketch}{\bigskip}
\newcommand{\threesat}{\ensuremath{\mathrm{3\textsc{SAT}}}}
\newcommand{\rxthreec}{\textsc{RX3C}}
\newcommand{\ppvclong}{\textsc{Path-Preserving Vertex Cut}}
\newcommand{\mippvclong}{\textsc{Maximum-Inclusion Path-Preserving Vertex Cut}}
\newcommand{\cvclong}{\textsc{Colored Path-Preserving Vertex Cut}}
\newcommand{\dcdc}{\textsc{Destructive Control by Deleting Candidates}}
\newcommand{\dcdcg}{\textsc{Destructive Control by Deleting Candidate Groups}}
\newcommand{\dcacg}{\textsc{Destructive Control by Adding Candidate Groups}}
\newcommand{\ccrv}{\textsc{Constructive Control by Replacing Voters}}
\newcommand{\dcrc}{\textsc{Destructive Control by Replacing Candidates}}
\newcommand{\dcrv}{\textsc{Destructive Control by Replacing Voters}}
\newcommand{\schulzeproblem}[1]{Schulze-\textsc{#1}}
\newcommand{\rpproblem}[1]{Ranked-Pairs-\textsc{#1}}
\newcommand{\voteW}[2]{\ensuremath{W{(#1,#2)}}}
\newcommand{\strongestpath}[2]{\ensuremath{P{(#1,#2)}}}
\newcommand{\numberrankedabove}[3]{\ensuremath{N_{#1}{(#2,#3)}}}
\newcommand{\differencerankedabove}[3]{\ensuremath{D_{#1}{(#2,#3)}}}
\newcommand{\pairwise}[3][V]{\ensuremath{N_{#1}{(#2,#3)}}}
\newcommand{\pairwiseDiff}[3][V]{\ensuremath{D_{#1}{(#2,#3)}}}
\newcommand{\str}{\text{str}}
\newcommand{\strPath}[3][V]{\ensuremath{\str_{#1}{(#2,#3)}}}
\newcommand{\lConst}{\ensuremath{\mathcal{L}}}
\newcommand{\lAV}{\ensuremath{\ell_{AV}}}
\newcommand{\lDV}{\ensuremath{\ell_{DV}}}
\newcommand{\lAC}{\ensuremath{\ell_{AC}}}
\newcommand{\lDC}{\ensuremath{\ell_{DC}}}
\newcommand{\lRV}{\ensuremath{\ell_{RV}}}
\newcommand{\lB}{\ensuremath{\ell_{B}}}
\newcommand{\Klasse}[1]{\ensuremath{\mathrm{#1}}}
\newcommand{\p}{\Klasse{P}}
\newcommand{\np}{\Klasse{NP}}
\newtheorem{theorem}{Theorem}
\newtheorem{corollary}{Corollary}
\newtheorem{lemma}{Lemma}
\newtheorem{example}{Example}
\newtheorem{proposition}{Proposition}
\DeclarePairedDelimiter{\cardinality}{\lvert}{\rvert}
\newcommand{\EP}[3]{
	\smallskip
	\begin{center}
		{\small 
			\begin{tabularx}{1.0\columnwidth}{Ll}
				\toprule
				\multicolumn{2}{c}{\sc{#1}} \\
				\midrule
				{\bf Given:}& \parbox[t]{0.79\columnwidth}{#2\vspace*{1mm}} \\
				{\bf Question:}& \parbox[t]{0.79\columnwidth}{#3\vspace*{.5mm}} \\ 
				\bottomrule
			\end{tabularx}
		}
	\end{center}
	\smallskip
}
\newcolumntype{L}{>{\raggedright\arraybackslash}X}
\newcolumntype{R}{>{\raggedleft\arraybackslash}X}
\newcolumntype{C}{>{\centering\arraybackslash}X}
\title{\longTitle}
\author{
Cynthia Maushagen
\and
David Niclaus
\and
Paul Nüsken
\and
Jörg Rothe
\And
Tessa Seeger
\\
\affiliations
Heinrich-Heine-Universität Düsseldorf, MNF, Institut für Informatik, Düsseldorf, Germany\\
\emails
\{cynthia.maushagen, david.niclaus, paul.nuesken, rothe, tessa.seeger\}@hhu.de
}
\newcommand\unnumberedfootnote[1]{%
	\begingroup
	\renewcommand\thefootnote{}\footnote{#1}%
	\addtocounter{footnote}{-1}%
	\endgroup
}
\begin{document}

\maketitle

\begin{abstract}
	Both Schulze and ranked pairs are voting rules that satisfy many natural, desirable axioms. 
	Many standard types of electoral control (with a chair seeking to change the outcome of an election by interfering with the election structure) have already been studied.
	However, for control by replacing candidates or voters and for (exact) multimode control that combines multiple standard attacks, many questions remain open.
	We solve a number of these open cases for Schulze and ranked pairs.
	In addition, we fix a flaw in the reduction of Menton and Singh~[IJCAI~\citeyear{men-sin:c:control-complexity-schulze}] showing that Schulze is resistant to constructive control by deleting candidates and re-establish a vulnerability result for destructive control by deleting candidates.
	In some of our proofs, we study variants of $s$-$t$ vertex cuts in graphs that are related to our control problems.
\end{abstract}

\section{Introduction}\label{sec:introduction}

\unnumberedfootnote{
	This paper is to appear in the proceedings of IJCAI 2024 and in addition contains the supplementary material.
}
Elections play a fundamental role in decision-making processes of societies.
Both the Schulze method~\cite{sch:j:schulze-voting,sch:t:schulze-method-of-voting} and the ranked pairs method~\cite{tid:j:independence-of-clones} satisfy many natural, desirable axioms. %

The Schulze method is a relatively new voting rule and has gained unusual popularity over the past decade due to its outstanding axiomatic properties.
In the real world, organizations like the Wikimedia Foundation, Kubernetes, or the Debian Vote Engine~\cite{sch:t:schulze-method-of-voting} have used it in their decision-making processes. 
Although winner determination with the Schulze method is fairly complicated compared to most other voting rules, it can still be done in polynomial time~\cite{sch:j:schulze-voting,sor-vas-xu:c:fine-grained-complexity-schulze}.
The ranked pairs method was specifically designed to satisfy the \text{independence of clones} criterion~\cite{tid:j:independence-of-clones}.
In general, its axiomatic properties are as outstanding as Schulze's. 
That ranked pairs is barely widespread might be due to the fact that winner determination strongly depends on the handling of ties: When using ``parallel universe tie-breaking''~\cite{con-rog-xia:c:mle}, the winner determination problem is \np-complete. However, it becomes tractable when defining ranked pairs as a resolute rule by using a fixed tie-breaking method~\cite{bri-fis:c:price-of-neutrality}.

We study \emph{electoral control} where a so-called \emph{election chair} (or, simply \emph{chair}) attempts to change the outcome of an election by changing its structure.
Common examples are adding, deleting, partitioning~\cite{bar-tov-tri:j:control,hem-hem-rot:j:destructive-control}, or replacing~\cite{lor-nar-ros-ven-wal:c:replacing-candidates} candidates or voters.
In addition to these control types, we also study multimode control~\cite{fal-hem-hem:j:multimode-control} where the chair can combine several attacks into one. 
For each corresponding electoral control problem, there is a \textit{constructive} case~\cite{bar-tov-tri:j:control} where the goal is to make a favored candidate win the election, and a \textit{destructive} case~\cite{hem-hem-rot:j:destructive-control} where the chair's aim is to prevent a despised candidate from winning. 
It is natural to assume that it is beneficial for a voting rule to be immune to control, i.e., it is impossible to change the outcome of an election by that control type.
However, immunity to control types does not occur often.\footnote{Immunity to control corresponds to strategyproofness against manipulation (a.k.a.\ strategic voting), since both notions formalize the impossibility to reward strategic behaviour with success.
}
In fact, most voting rule are susceptible to control~\cite{fal-rot:b:handbook-comsoc-control-and-bribery}, i.e., control is possible in a least some instances.
Computational intractability can then be seen as a form of resistance to control: If it is computationally hard for an agent to decide if the goal of the attack can be achieved, it may deter the attacker from spending resources on this task. 
On the other hand, not all forms of control are malicious. In many cases, deciding if a control action can be achieved in polynomial time is beneficial for deciding whether to allocate resources to, e.g., a voter drive or spawning new nodes in the context of large clusters. 
We call a voting rule \textit{vulnerable} to a control type if the corresponding decision problem can be decided in polynomial time.

While many standard control types have already been studied for Schulze and ranked pairs (Table~\ref{tab:controlresults}), control by replacing candidates or voters and exact multimode control remained open. 
There are many real world situations where a chair must adhere to a specific number of candidates, e.g., if the election's size is predetermined and a fixed number of candidates are already nominated, forcing the chair to nominate exactly the missing number of candidates. The restriction makes control impossible in some situations, where control by adding fewer candidates would be successful.
Other (practical) settings include autonomous agents where, e.g., the size of the cluster is predetermined. A chair may be able to influence some part of the cluster, but is forced to adhere to the overall determined size (examples can be found in the supplementary material).

\paragraph{Related Work:}
Bartholdi et al.~\shortcite{bar-tov-tri:j:control} introduced constructive control types and Hemaspaandra et al.~\shortcite{hem-hem-rot:j:destructive-control} the corresponding destructive cases.
Control by replacing candidates or voters was introduced by Loreggia et al.~\shortcite{lor-nar-ros-ven-wal:c:replacing-candidates}, while Faliszewski et al.~\shortcite{fal-hem-hem:j:multimode-control} introduced and studied multimode control. 
Erd{\'e}lyi et al.~\shortcite{erd-nev-reg-rot-yan-zor:j:towards-completing-the-puzzle} provide an extensive study and overview of various control problems, including replacing candidates or voters and also
multimode control.\footnote{Besides for voting, control has also been studied in, e.g., judgment aggregation \cite{bau-erd-erd-rot-sel:j:complexity-of-control-in-judgment-aggregation-for-uniform-premise-based-quota-rules} and for weighted voting games \cite{rey-rot:j:structural-control-in-weighted-voting-games,kac-rot:j:controlling-weighted-voting-games-by-deleting-or-adding-players-with-or-without-changing-the-quota} and graph-restricted weighted voting games \cite{kac-rot-tal:c:complexity-of-control-by-adding-or-deleting-edges-in-graph-restricted-weighted-voting-games}.}

Other types of strategic influence on elections are \emph{manipulation}, where a voter or a group of voters state their preferences strategically (i.e., untruthfully), and \emph{bribery}, where a controlling agent bribes voters to change their preferences  (see, e.g., \cite{bar-orl:j:polsci:strategic-voting,bar-tov-tri:j:manipulating,con-san:c:nonexistence,fal-hem-hem:j:bribery,fal-hem-hem-rot:j:llull-copeland-full-techreport}).
Control, bribery, and manipulation have been studied for a wide range of voting rules, as surveyed by Faliszwski and Rothe~\shortcite{fal-rot:b:handbook-comsoc-control-and-bribery} (bribery and control) and Conitzer and Walsh~\shortcite{con-wal:b:handbook-comsoc-manipulation} (manipulation), see also \cite{bau-rot:b:economics-and-computation-preference-aggregation-by-voting}.

For Schulze and ranked pairs, Parkes and Xia~\shortcite{par-xia:c:strategic-schulze-ranked-pairs}, Xia et al.~\shortcite{xia-zuc-pro-con-ros:c:unweighted-coalitional-manipulation}, Menton and Singh~\shortcite{men-sin:c:control-complexity-schulze}, and Gaspers et al.~\shortcite{gas-ka-nar-wal:c:coalitional-manipulation-schulze} studied constructive and destructive control by adding or deleting voters or candidates, bribery, and manipulation.
Table~\ref{tab:controlresults} gives an overview of known results.
Hemaspaandra et al.~\shortcite{hem-lav-men:c:schulze-ranked-pairs-fpt} showed fixed-parameter tractability for bribing, controlling, and manipulating Schulze and ranked pairs elections with respect to the number of candidates and provided algorithms with uniform polynomial running time that are independent of the number of candidates. 
Menton and Singh~\shortcite{men-sin:c:control-complexity-schulze} also provided results on control by partition and runoff partition of candidates and partition of voters for Schulze and further showed some results for all Condorcet-consistent voting rules.

\begin{table}
	\caption[Overview of complexity results for standard control (AC, DC, AV, DV), bribery (B), and manipulation (M) in Schulze and ranked pairs elections.]{%
		Overview of complexity results for standard control (AC, DC, AV, DV), bribery (B), and manipulation (M) in Schulze and ranked pairs elections.
		Our results are in \textcolor{blue}{\bf blue}.
		Results marked by $\spadesuit$ are due to Parkes and Xia~\shortcite{par-xia:c:strategic-schulze-ranked-pairs};
		by $\clubsuit$ due to Xia et al.~\shortcite{xia-zuc-pro-con-ros:c:unweighted-coalitional-manipulation}; and 
		by $\blacklozenge$
		claimed to be in $\p$ by Menton and Singh~\shortcite{men-sin:t:manipulation-control-schulze-voting-v1}, but later stated as open~\shortcite{men-sin:c:control-complexity-schulze} (and omitted from their most recent arXiv version, v4, dated May 24, 2013), and
		re-established in Theorem~\ref{thm:schulze-dcdc-our-result}.
		$\bigstar$ marks a result by Parkes and Xia~\shortcite{par-xia:c:strategic-schulze-ranked-pairs}, extended by Gaspers et al.~\shortcite{gas-ka-nar-wal:c:coalitional-manipulation-schulze}.
		The original proof of Menton and Singh~\shortcite{men-sin:c:control-complexity-schulze} for the result marked by $\diamondsuit$ is corrected in Section~\ref{sec:control-ccdc} and extended to the unique-winner model in Theorem ~\ref{thm:schulze-ccdc-unique}.
		All results, except Schulze-\textsc{DCDC} (where the unique-winner model remains open), hold in both winner models.
	}
	\label{tab:controlresults}
	\begin{tabularx}{\columnwidth}{@{}l@{\hspace*{1mm}}|@{\hspace*{1mm}}C@{\hspace*{1mm}}@{\hspace*{1mm}}C@{\hspace*{1mm}}|@{\hspace*{1mm}}C@{\hspace*{1mm}}@{\hspace*{1mm}}C@{\hspace*{1mm}}}
		\toprule
		& \multicolumn{2}{@{\hspace*{1mm}}c|@{\hspace*{1mm}}}{Schulze} &  \multicolumn{2}{c}{Ranked pairs} \\
		& Constructive & Destructive & Constructive & Destructive \\
		\midrule
		AC & NP-c.$^\spadesuit$ &  ? &  NP-c.$^\spadesuit$ &  NP-c.$^\spadesuit$ \\
		DC & \textcolor{blue}{\bf NP-c.}$^{\diamondsuit}$ & \textcolor{blue}{\bf P}$^\blacklozenge$ &  NP-c.$^\spadesuit$ &  NP-c.$^\spadesuit$ \\
		AV & NP-c.$^\spadesuit$ & NP-c.$^\spadesuit$ &  NP-c.$^\spadesuit$ &  NP-c.$^\spadesuit$ \\
		DV & NP-c.$^\spadesuit$ & NP-c.$^\spadesuit$ &  NP-c.$^\spadesuit$ & NP-c.$^\spadesuit$  \\
		B & NP-c.$^\spadesuit$ & NP-c.$^\spadesuit$ &  NP-c.$^\spadesuit$ &  NP-c.$^\spadesuit$ \\
		M & P$^{\bigstar}$ & P$^\spadesuit$ & NP-c.$^\clubsuit$ & NP-c.$^\spadesuit$ \\
		\bottomrule		
	\end{tabularx}
\end{table}

\section{Preliminaries}
\label{sec:prelim}

An \textit{election} is a pair $(C,V)$, where $C= \{c_1, \ldots, c_m\}$ is a set of \textit{candidates} and $V$ is a list of $n$ votes in form of \textit{preferences} over all candidates in~$C$. 
In this paper, preferences are expressed as a strict linear order over~$C$,\footnote{Some voting rule (notably, e.g., Schulze voting~\shortcite{sch:j:schulze-voting}) allow voters to express preferences as weak orders. Following previous work (e.g., \cite{men-sin:c:control-complexity-schulze,par-xia:c:strategic-schulze-ranked-pairs,hem-lav-men:c:schulze-ranked-pairs-fpt}), we restrict
	them here to linear orders.} where voters rank the candidates in descending order from most to least preferred.
We write $c\succ_{v_i} d$ to express that a voter $v_i \in V$ prefers candidate $c$ over~$d$.
When it is clear from the context, we omit $\succ_{v_i}$ and simply write $c\, d$.
For a set of candidates $A \subseteq C$, we refer to the lexicographic order (we first consider the alphabetic order and then the subscript) of those candidates by simply writing~$\overrightarrow{A}$, and $\overleftarrow{A}$ for the reverse.
To shorten notation of votes we will sometimes include sets of candidates in the votes.
If a set $A$ occurs in a vote, the candidates from $A$ are ranked in lexicographic order at the given place of the voters ranking.
For example, $c\,A$ is the same as $c\,\overrightarrow{A}$ and denotes the vote that ranks $c$ first and then all candidates from $A$ follow in increasing lexicographic order.
For a set of candidates $C$ and two candidates $c,d \in C$, we will write $\voteW{c}{d}$ for the two votes $c \, d \, \overrightarrow{C\setminus \{c,d\}}$ and $\overleftarrow{C\setminus \{c,d\} } \, c \, d$.
For a given election $(C,V)$, let $\numberrankedabove{V}{c}{d}$ be the number of votes, in which candidate $c$ is ranked above candidate~$d$. We call this the \textit{pairwise comparison} between $c$ and $d$. Similarly, let $\differencerankedabove{V}{c}{d} = \numberrankedabove{V}{c}{d} - \numberrankedabove{V}{d}{c}$.
Note that we omit the list of votes if it is clear from the context.
Before we introduce the procedures for winner determination in Schulze and ranked pairs election, we first introduce the \textit{weighted majority graph (WMG)}.
A WMG for an election $(C,V)$ is a weighted directed graph $G = (\hat{V},E,w)$, where $\hat{V} = C$ and $(c,d) \in E$ with weight $w(c,d) = \differencerankedabove{V}{c}{d}$ for each ordered pair of candidates $c,d \in C$. Note that, since the votes are expressed as strict orders, we have $\differencerankedabove{V}{c}{d}  = - \differencerankedabove{V}{d}{c}$. Therefore, we omit edges with a negative weight in depictions. %

A voting rule $r:\{(C,V)\mid (C,V) \text{ is an election}\} \rightarrow 2^C$ determines the set of winners of an election $(C,V)$. We focus on the voting rules \textit{Schulze} and \textit{ranked pairs}.
A candidate $c \in C$ is a \textit{Condorcet winner} of an election $(C,V)$ if $\numberrankedabove{V}{c}{d} > \numberrankedabove{V}{d}{c}$ holds for all $d \in C$ with $c \neq d$.
A candidate $c \in C$ is a \textit{weak Condorcet winner} if $\numberrankedabove{V}{c}{d} \geq \numberrankedabove{V}{d}{c}$ for all $d \in C$ with $c \neq d$.
Note that there can be at most one Condorcet winner but possibly multiple weak Condorcet winners.
Both Schulze and ranked pairs are \emph{Condorcet-consistent voting rules}, meaning they choose the Condorcet winner whenever there is one.

\paragraph{Schulze:} Let the \textit{strength of a path $p$} ($\str(p)$) be the weight of the weakest edge, i.e., the minimum weight, in a directed path~$p$ between two candidates in the WMG. 
For each distinct pair of candidates $c,d \in C$, let the \textit{strength of the strongest path} be
$P(c,d) = \max \{ \str(p) \mid \text{$p$ is a path from $c$ to $d$}\}$.
A candidate $c \in C$ is a \textit{Schulze winner of $(C,V)$} if $P(c,d) \geq P(d,c)$ for each $d \in C\setminus \{c\}$. To find these candidates one can build a second directed graph, which has an edge from $c$ to $d$ if and only if $P(c,d) > P(d,c)$.
Each candidate with an in-degree of zero in this graph is a \emph{Schulze winner}. %
There can be multiple Schulze winners.

\paragraph{Ranked pairs:}
For an election $(C,V)$, we first calculate $\differencerankedabove{V}{c}{d}$ for all distinct pairs of candidates $c,d \in C$, %
and order the pairs by weight from highest to lowest, i.e., we order the values of $\differencerankedabove{V}{c}{d}$. 
Now in each step, we consider the top pair $(c,d)$ of this weight order, which has not yet been considered. 
Following Parkes and Xia~\shortcite{par-xia:c:strategic-schulze-ranked-pairs}, we break ties according to a fixed tie-breaking rule.  
We add an edge $(c,d)$ to a directed graph~$G = (V',E)$ where $V'=C$, unless inserting this edge would create a cycle, in which case the pair (edge) is disregarded.
When all pairs have been considered, the \emph{ranked pairs winner of $(C,V)$} (subject to the fixed tie-breaking) is the candidate corresponding to the source of G.\footnote{In the original definition, the voting rule ranked pairs~\cite{tid:j:independence-of-clones}
	returns a complete ranking of the candidates.
	We use a slightly simplified
	definition of ranked pairs introduced by Berker et al.~\shortcite{ber-cas-ong-rob:t:obvious-independence-of-clones} that only returns the winner of the election. 
}

\smallskip

For examples of the Schulze and ranked pairs method we refer to the supplementary material.
Note that Tideman~\shortcite{tid:j:independence-of-clones} originally gives an irresolute procedure and corresponding function for ranked pairs, which considers all possible ways of breaking ties and is \np-complete~\cite{bri-fis:c:price-of-neutrality}. 
Interestingly, the full ranking of candidates and winner determination including ties can be computed in polynomial time for Schulze~\shortcite{sch:t:schulze-method-of-voting}.
For our reductions, it is important that the procedures run in polynomial time (which we achieve by using a fixed tie-breaking scheme for ranked pairs), but it is not necessary to be highly efficient.
Consequently, there may be other procedures that %
are faster (see, e.g., \cite{sor-vas-xu:c:fine-grained-complexity-schulze} for Schulze) or handle ties in a different way for ranked pairs (see, e.g., \cite{bri-fis:c:price-of-neutrality,wan-sik-sha-zha-jia-xia:c:algorithms-multi-stage-voting-put}).

We study various types of electoral control, starting with constructive control by deleting candidates (CCDC) which was defined by Bartholdi et al.~\shortcite{bar-tov-tri:j:control} for a voting rule $\mathcal{E}$: %

\EP{$\mathcal{E}$-\textsc{Constructive Control By Deleting Candidates}}
{An election $(C,V)$, a distinguished candidate $p \in C$, and $\ell \in \mathbb{N}$.}
{Is it possible to make $p$ the unique winner of the $\mathcal{E}$ election resulting from $(C,V)$ by deleting at most $\ell$ candidates?}

In the setting of replacing candidates or voters~\cite{lor-nar-ros-ven-wal:c:replacing-candidates,erd-nev-reg-rot-yan-zor:j:towards-completing-the-puzzle}, the chair must not alter the size of the election and instead must add a candidate or voter for each one she deletes.
Formally, in $\mathcal{E}$-\textsc{Constructive Control By Replacing Candidates} ($\mathcal{E}$-\textsc{CCRC}) we are given %
two disjoint sets of candidates, $C$ and~$D$, a list of votes over $C \cup D$, a distinguished candidate $p \in C$, and $\ell \in \mathbb{N}$, and
we ask if it is possible to make $p$ the unique winner of the $\mathcal{E}$ election resulting from $(C,V)$ by replacing at most $\ell$ candidates $C' \subseteq C$ with candidates $D' \subseteq D$, where $\vert C' \vert = \vert D' \vert$.
$\mathcal{E}$-\ccrv\ ($\mathcal{E}$-\textsc{CCRV}) is defined analogously by asking whether it is possible to make a preferred candidate $p$ the unique winner by replacing at most $\ell$ votes $V' \subseteq V$ with votes $U' \subseteq U$ such that $\vert V' \vert = \vert U' \vert$, where $U$ is a list of as yet unregistered votes.

In these control scenarios, a chair's goal is to make a preferred candidate the unique winner.
A chair may also be interested in preventing a candidate from winning.
This setting is known as destructive control~\cite{hem-hem-rot:j:destructive-control}.
Instead of asking whether a candidate can be made the winner, we ask whether the sole victory of a candidate can be prevented (note that for a control action to be successful, it is enough to have the candidate be a winner among others).
We write $\mathcal{E}$-\textsc{DCDC} for $\mathcal{E}$-\dcdc, $\mathcal{E}$-\textsc{DCRC} for $\mathcal{E}$-\dcrc, and $\mathcal{E}$-\textsc{DCRV} for $\mathcal{E}$-\dcrv.
Aside from the \textit{unique-winner model} which is used in the previous definitions, in the \textit{nonunique-winner model} we ask whether a preferred candidate can be made a winner (possibly among others) in the constructive case, and whether a despised candidate can be prevented from winning altogether in the destructive case.
Note that when interpreting a voting rule as \emph{resolute} (i.e., to always yield exactly one winner), the unique-winner and nonunique-winner models are the same.

In addition to the above control problems, we also study variations of multimode control as introduced by Faliszewski et al.~\shortcite{fal-hem-hem:j:multimode-control}, where several of the standard control attacks and bribery can be combined into one action:

\EP{$\mathcal{E}$-\textsc{Constructive Control By AC+DC+AV+DV+B}}
{Two disjoint sets of candidates, $C$ and~$D$, two disjoint lists of votes over $C \cup D$, $V$ and $U$, a distinguished candidate $p \in C$, and $\lAC$, $\lDC$, $\lAV$, $\lDV$, $\lB \in \mathbb{N}$.}
{Is it possible to find two sets, $C' \subseteq C \setminus \{p\}$ and $D' \subseteq D$, and two sublists of votes, $V' \subseteq V$ and $U' \subseteq U$, such that $p$ is the unique winner of the $\mathcal{E}$ election that results from $((C \setminus C') \cup D', (V \setminus V') \cup U')$ by bribing at most $\lB$ votes in $(V \setminus V') \cup U')$, and
	$\vert D' \vert \leq \lAC$,
	$\vert C' \vert \leq \lDC$,
	$\vert U' \vert \leq \lAV$, and
	$\vert V' \vert \leq \lDV$?}

We abbreviate multimode control problems in the obvious way; e.g., we use the shorthand $\mathcal{E}$-\textsc{CCAC+DC+AV+DV+B} for the above problem.
Faliszewski et al.~\shortcite{fal-hem-hem:j:multimode-control} define a method to classify all $2^5-1$ variants of multimode control for a voting rule, called classification rule A.
Using the classification rule A and the known results for adding and deleting candidates or voters and for bribery (see Table~\ref{tab:controlresults}), it immediately follows that, except for Schulze-DCAC+DC, Schulze is resistant to any multimode attack.
Since ranked pairs is resistant to all single-pronged attacks, it clearly also resists all combinations of multimode control.

In any instance of $\mathcal{E}$-\textsc{Exact Constructive Control by AC+DC+AV+DV+B},
it must hold that $\vert D' \vert = \lAC$, $\vert C' \vert = \lDC$, $\vert U' \vert = \lAV$,  $\vert V' \vert = \lDV$, and exactly $\lB$ voters in $(V \setminus V') \cup U'$ are bribed. Each corresponding nonexact control problem polynomial-time Turing reduces to the exact control problem.
The destructive variants $\mathcal{E}$-\textsc{Destructive Control by AC+DC+AV+DV+B} and $\mathcal{E}$-\textsc{Exact Destructive Control by AC+DC+AV+DV+B} are defined analogously by asking whether it is possible to make $p$ not a (unique) winner, and we again use the obvious shorthands.
Sometimes, we exclude certain %
actions from multimode control, considering, e.g., only candidate control ($\mathcal{E}$-\textsc{DCAC+DC}) and
omit the unneeded input parameters. %

Note that we do not allow candidates in $D$ or voters in $U$ to be deleted, as we do not consider it realistic to remove a candidate or voter from an election right after adding.
Even though this does not make a difference for the nonexact problems (as it is allowed to simply add or delete fewer candidates or voters), it may affect the result for exact multimode control as shown in the supplementary material.

\section{Schulze Resists Constructive Control by Deleting Candidates}
\label{sec:control-ccdc}

In this section, we prove the following result by describing and fixing a flaw in the proof of Menton and Singh~\shortcite{men-sin:c:control-complexity-schulze}.\footnote{Menton and Singh agree that our construction fixes their flaw.
	Menton writes (private email communication on December 5, 2023), \emph{``I have tried my best to understand the old proof and your correction of it.
		The diagrams in your paper were very helpful for this!
		[\ldots] Your solution changes the votes so there are lower weight edges in the graph and additional paths from $p$ back to the clause candidates such that there is an equal weight path from $p$ to the clause candidates to those persisting paths.
		Deleting the appropriate literal candidates is still necessary to break the higher-weight paths from the clause candidates to~$p$.
		Looks good to me!''}}

\begin{theorem}\label{thm:schulze-ccdc}
	Schulze-\textsc{CCDC}\ is \np-complete in the nonunique-winner model.
\end{theorem}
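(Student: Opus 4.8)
The plan is to prove membership in $\np$ and $\np$-hardness separately, with essentially all the work in the hardness reduction. Membership is immediate: guess a set $C'\subseteq C\setminus\{p\}$ with $\cardinality{C'}\le\ell$ and check in polynomial time whether $p$ is a Schulze winner of $(C\setminus C',V)$; this is possible since Schulze winners, including the handling of ties, can be computed in polynomial time. For hardness I would reduce from $\threesat$, keeping the overall architecture of the reduction of Menton and Singh~\shortcite{men-sin:c:control-complexity-schulze} but modifying the votes so as to repair the gap described above.

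From a formula $\varphi$ with variables $x_1,\dots,x_n$ and clauses $\gamma_1,\dots,\gamma_k$, I would build an election whose candidates are the distinguished candidate $p$, a pair of \emph{literal candidates} $x_i,\bar x_i$ per variable, a \emph{clause candidate} $g_j$ per clause, and a constant number of auxiliary candidates used only to calibrate edge weights, with deletion bound $\ell=n$. The votes are designed so that in the WMG: (i) every clause candidate $g_j$ has a \emph{high-weight} directed path to $p$ running through exactly the three literal candidates occurring in $\gamma_j$, so that deleting any one of those three severs it, while every other path into $p$ is strictly weaker; (ii) for each variable $i$, the literal candidate $x_i$ beats $p$ (in the Schulze comparison) unless $\bar x_i$ is deleted, and symmetrically; and (iii) $p$ has \emph{medium-weight} paths back to every clause candidate. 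Thus before any deletion $p$ is not a Schulze winner. The intended encoding of a satisfying assignment $\alpha$ deletes, for each variable, the literal candidate naming the literal $\alpha$ sets true; then for each variable exactly one of $\{x_i,\bar x_i\}$ is removed, and a clause $\gamma_j$ is satisfied exactly when at least one of its literal candidates is deleted, i.e.\ exactly when the high-weight $g_j\patharrow\cdots\patharrow p$ path is cut. The repair enters here: the auxiliary candidates and the extra low-weight edges are chosen so that after such a deletion the strongest path from $p$ to every surviving candidate $d$ has weight at least $\strongestpath{d}{p}$; in particular the medium-weight path $p\patharrow\cdots\patharrow g_j$ is calibrated to match in strength whatever residual path from $g_j$ to $p$ survives, so that $p$ ends up a (possibly non-unique) Schulze winner --- which is exactly what is needed in the nonunique-winner model.

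I would then verify the two directions. Forward: from a satisfying $\alpha$, delete the $n$ prescribed literal candidates and check, on the small and highly structured residual WMG, that $\strongestpath{p}{d}\ge\strongestpath{d}{p}$ for all surviving $d$, so $p$ wins. Backward: suppose deleting at most $n$ candidates makes $p$ a Schulze winner. Using the literal-versus-$p$ comparisons in (ii), for each variable at least one of $\{x_i,\bar x_i\}$ must be deleted; since the budget is only $n$, a successful deletion removes exactly one literal candidate per variable and nothing else. Then the clause candidates force every clause to contain a deleted literal candidate, so the induced assignment satisfies $\varphi$.

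The step I expect to be the main obstacle is the weight bookkeeping guaranteeing $\strongestpath{p}{d}\ge\strongestpath{d}{p}$ for \emph{all} surviving candidates $d$ under \emph{every} admissible deletion --- not just the clause candidates and not just the ``honest'' deletions. This is precisely where the original proof broke down: once literal candidates were deleted, the path from $p$ back to a clause candidate became too weak, so $p$ failed the Schulze condition even though all blocking paths into $p$ had been cut. Getting the low-weight return paths from $p$ to line up in strength with the paths that persist toward $p$, while still ensuring that no unintended deletion (e.g.\ one mixing literal and clause candidates) can make $p$ a winner, is the delicate, construction-specific part of the argument and the reason the extra edges and auxiliary candidates are needed.
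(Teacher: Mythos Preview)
Your plan diverges from the paper's in a structural way: the paper (following Menton and Singh) uses three literal candidates \emph{per clause} and sets the deletion budget to $k=\cardinality{Cl}$, whereas you use two literal candidates \emph{per variable} and set the budget to $n$. In the paper, consistency of the encoded assignment is enforced not by your condition~(ii) but by separate \emph{negation candidates}: for every pair $x_i^j,x_m^n$ of literal candidates that name complementary literals, there are $k{+}1$ candidates whose only incoming $p$-paths run through $x_i^j$ or $x_m^n$, so deleting \emph{both} makes $p$ lose to all of them. The fix to Menton--Singh is then purely a recalibration of weights plus new weight-$2$ edges from $a$ to the clause candidates, giving $p$ a weight-$2$ return path that matches the residual paths from clause candidates to~$p$.

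Your condition~(ii), however, cannot be realised in a WMG as stated, and this is a genuine gap beyond the bookkeeping you flag. You require that $x_i$ beats $p$ unless $\bar x_i$ is deleted, \emph{and symmetrically}. For deleting $\bar x_i$ to drop $\strongestpath{x_i}{p}$, the vertex $\bar x_i$ must lie on some strongest (simple) $x_i$--$p$ path; but then the suffix of that path from $\bar x_i$ to $p$ is itself a simple path of the same strength that avoids $x_i$, so $\strongestpath{\bar x_i}{p}$ is at least that value regardless of whether $x_i$ is present. Hence deleting $x_i$ cannot reduce $\strongestpath{\bar x_i}{p}$ below the original $\strongestpath{x_i}{p}$, and the ``symmetrically'' half of~(ii) fails. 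In other words, if $\bar x_i$ is a bottleneck on $x_i$'s best route to $p$, then $x_i$ is \emph{not} a bottleneck on $\bar x_i$'s best route to~$p$. So your backward direction collapses: you no longer get ``at least one of $\{x_i,\bar x_i\}$ must be deleted'' for every variable, and the budget-counting argument that pins the deletion set down to exactly one literal per variable does not go through.

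The paper sidesteps this entirely by never asking a literal candidate to beat $p$ on its own; instead, the penalty for an inconsistent choice is borne by the negation candidates, and the clause structure (with $k{+}1$ undeleteable copies of each clause candidate) forces exactly one literal candidate to be removed per clause. If you want to salvage your per-variable architecture, you would need a gadget per variable that beats $p$ unless \emph{some} element of $\{x_i,\bar x_i\}$ is deleted, implemented via a third candidate whose $p$-return path is routed through both of them --- essentially reinventing the negation-candidate idea at the variable level rather than at the level of cross-clause literal pairs.
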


\begin{proof}
	The proof of this result, due to Menton and Singh~\shortcite[Thm.~2.2]{men-sin:c:control-complexity-schulze}, shows a clever reduction from $\threesat$, but it is technically flawed.
	We briefly present their reduction %
	and give a counterexample showing that it is not correct. %
	
	In the \textsc{3-Satisfiability} problem ($\threesat$), we are given a set $X$ of variables and a set $Cl = \{Cl_1, \ldots, Cl_k\}$ of clauses over~$X$, each having exactly three literals, and we ask whether there is a satisfying assignment for~$\varphi$, where $\varphi$ is the conjunction of all clauses $Cl_i \in Cl$.
	Given a $\threesat$ instance $(X, Cl)$, Menton and Singh~\shortcite{men-sin:c:control-complexity-schulze} construct a Schulze-\textsc{CCDC} instance $((C,V'),p,k)$ as follows.
	The set of candidates $C$ contains 
		$k+1$ \textit{clause candidates} $c_i^1, \dots , c_i^{k+1}$ for 
		each clause $Cl_i \in Cl$,
		three \textit{literal candidates} $x_i^1, x_i^2, x_i^3$ for each  
		clause~$Cl_i$, where $x_i^j$ is the \textit{j}th literal in clause~$Cl_i$, 
		$k+1$ \textit{negation candidates} $n^1_{i,j,m,n}, \dots ,
		n^{k+1}_{i,j,m,n}$ for each pair of literals $x_i^j, x_m^n$, where one 
		is the negation of the other, and
		the distinguished candidate $p$ and an additional candidate~$a$. 
	Let $C_i = \{ c_i^1, \dots , c_i^{k+1}\}$ be the set of all clause candidates for clause $Cl_i \in Cl$ and let $K=\bigcup_{i=1}^{k+1} C_i$ be the set of all clause candidates.
	Let $L_i = \{x_i^1, x_i^2, x_i^3\}$ be the set of literal candidates for the clause $Cl_i$ and let $L = \bigcup_{i=1}^{k} L_i$ be the set of all literal candidates.
	Let $N_{ijmn} = \{ n_{ijmn}^1, \dots , \ n_{ijmn}^{k+1}  \}$ be the set of negation candidates for the literals $x_i^j$, $x_m^n$ that are a negation of each other, and let $N$ be the set of all such negation candidates.
	For a positive integer~$z$, we write $[z] = \{1, \ldots , z\}$ as a shorthand.
	
	Menton and Singh~\shortcite{men-sin:c:control-complexity-schulze} define the following list of votes $V'$ (which we will change later to fix the proof):
	
	{
		\vspace{1em}
		\noindent
		\begin{tabularx}{\columnwidth}{cclL}
			&$\#$ & preferences & for each \\
			\midrule
			(A)&$1$ & $\voteW{c_i^j}{x_i^1}$ & $i\in [k]$, $j\in [k+1]$\\
			(B)&$1$ & $\voteW{x_i^1}{x_i^2}$ & $i\in [k]$\\
			(C)&$1$ & $\voteW{x_i^2}{x_i^3}$ & $i\in [k]$\\
			(D)&$1$ & $\voteW{x_i^3}{p}$ & $i\in [k]$\\
			(E)&$1$ & $\voteW{a}{x}$ & $x\in L$\\
			(F)&$1$ & $\voteW{p}{a}$ & \\
			(G)&$1$ & $\voteW{x_i^j}{n^l_{ijmn}}$ & $l \in [k+1]$ where $x_m^n$ is the negation of $x_i^j$\\
			(H)&$1$ & $\voteW{n}{p}$ &  $n\in N$\\ 
		\end{tabularx}
	}
	
	The deletion limit is~$k$, the number of clauses. 
	Menton and Singh~\shortcite{men-sin:c:control-complexity-schulze} argue that $p$ can be made a Schulze winner by deleting at most $k$ candidates from $C$ if and only if there
	is a truth assignment that makes the given $\threesat$ instance true.
	
	We now briefly present our counterexample, where we map a yes-instance of $\threesat$ to a no-instance of Schulze-\textsc{CCDC}.
	Let $(X,Cl)$ be our given $\threesat$ instance, with $X=\{x_1,x_2,x_3\}$ and $Cl=\{(x_1\vee x_2 \vee \neg {x}_3), \left(\neg {x}_{1} \vee x_{2} \vee x_3\right)\}$, i.e., we consider the CNF formula
	\[
	\varphi = \left( x_{1} \vee x_{2} \vee \neg {x}_{3}\right) \wedge
	\left(\neg {x}_{1} \vee x_{2} \vee x_3\right).
	\]
	A detailed description of this counterexample is given in the supplementary material.
	Their reduction is quite clever, but unfortunately wrong, as shown by the counterexample.
	However, by modifying it appropriately, we can ensure that $p$ can indeed be made a Schulze winner of the election by deleting at most $k$ candidates if and only if $(X,Cl)$ is a yes-instance of $\threesat$.
	For our modifications, it is only necessary to change the list of votes. For votes (A) through (F) we adjust the number of votes to $2$. Additionally, we add $1$ vote $W(a,c)$ for each $c \in K$.
	\begin{figure}
		\begin{center} 
			\tikzset{every picture/.style={line width=0.75pt}} %
			\vspace*{-1.4cm}
			\hspace*{-5em}
			\small\begin{tikzpicture}[x=0.43pt,y=0.43pt,yscale=-1,xscale=1]
				\draw   (53,45) .. controls (53,37.82) and (58.82,32) .. (66,32) .. controls (73.18,32) and (79,37.82) .. (79,45) .. controls (79,52.18) and (73.18,58) .. (66,58) .. controls (58.82,58) and (53,52.18) .. (53,45) -- cycle ;
				\draw   (53,85) .. controls (53,77.82) and (58.82,72) .. (66,72) .. controls (73.18,72) and (79,77.82) .. (79,85) .. controls (79,92.18) and (73.18,98) .. (66,98) .. controls (58.82,98) and (53,92.18) .. (53,85) -- cycle ;
				\draw   (53,125) .. controls (53,117.82) and (58.82,112) .. (66,112) .. controls (73.18,112) and (79,117.82) .. (79,125) .. controls (79,132.18) and (73.18,138) .. (66,138) .. controls (58.82,138) and (53,132.18) .. (53,125) -- cycle ;
				\draw   (53,222) .. controls (53,214.82) and (58.82,209) .. (66,209) .. controls (73.18,209) and (79,214.82) .. (79,222) .. controls (79,229.18) and (73.18,235) .. (66,235) .. controls (58.82,235) and (53,229.18) .. (53,222) -- cycle ;
				\draw   (53,262) .. controls (53,254.82) and (58.82,249) .. (66,249) .. controls (73.18,249) and (79,254.82) .. (79,262) .. controls (79,269.18) and (73.18,275) .. (66,275) .. controls (58.82,275) and (53,269.18) .. (53,262) -- cycle ;
				\draw   (53,302) .. controls (53,294.82) and (58.82,289) .. (66,289) .. controls (73.18,289) and (79,294.82) .. (79,302) .. controls (79,309.18) and (73.18,315) .. (66,315) .. controls (58.82,315) and (53,309.18) .. (53,302) -- cycle ;
				\draw   (325,72.01) .. controls (332.18,72.01) and (338,77.83) .. (338,85.01) .. controls (338,92.19) and (332.18,98.01) .. (325,98.01) .. controls (317.82,98.01) and (312,92.18) .. (312,85) .. controls (312,77.83) and (317.82,72.01) .. (325,72.01) -- cycle ;
				\draw   (245,72) .. controls (252.18,72) and (258,77.82) .. (258,85) .. controls (258,92.18) and (252.18,98) .. (245,98) .. controls (237.82,98) and (232,92.18) .. (232,85) .. controls (232,77.82) and (237.82,72) .. (245,72) -- cycle ;
				\draw   (163,71.99) .. controls (170.18,71.99) and (176,77.82) .. (176,85) .. controls (176,92.17) and (170.18,97.99) .. (163,97.99) .. controls (155.82,97.99) and (150,92.17) .. (150,84.99) .. controls (150,77.81) and (155.82,71.99) .. (163,71.99) -- cycle ;
				\draw   (325,250.01) .. controls (332.18,250.01) and (338,255.83) .. (338,263.01) .. controls (338,270.19) and (332.18,276.01) .. (325,276.01) .. controls (317.82,276.01) and (312,270.18) .. (312,263) .. controls (312,255.83) and (317.82,250.01) .. (325,250.01) -- cycle ;
				\draw   (245,250) .. controls (252.18,250) and (258,255.82) .. (258,263) .. controls (258,270.18) and (252.18,276) .. (245,276) .. controls (237.82,276) and (232,270.18) .. (232,263) .. controls (232,255.82) and (237.82,250) .. (245,250) -- cycle ;
				\draw   (163,248.99) .. controls (170.18,248.99) and (176,254.82) .. (176,262) .. controls (176,269.17) and (170.18,274.99) .. (163,274.99) .. controls (155.82,274.99) and (150,269.17) .. (150,261.99) .. controls (150,254.81) and (155.82,248.99) .. (163,248.99) -- cycle ;
				\draw    (79,45) -- (148.26,84.01) ;
				\draw [shift={(150,84.99)}, rotate = 209.39] [fill={rgb, 255:red, 0; green, 0; blue, 0 }  ][line width=0.08]  [draw opacity=0] (12,-3) -- (0,0) -- (12,3) -- cycle    ;
				\draw    (79,85) -- (148,84.99) ;
				\draw [shift={(150,84.99)}, rotate = 179.99] [fill={rgb, 255:red, 0; green, 0; blue, 0 }  ][line width=0.08]  [draw opacity=0] (12,-3) -- (0,0) -- (12,3) -- cycle    ;
				\draw    (79,125) -- (148.26,85.97) ;
				\draw [shift={(150,84.99)}, rotate = 150.6] [fill={rgb, 255:red, 0; green, 0; blue, 0 }  ][line width=0.08]  [draw opacity=0] (12,-3) -- (0,0) -- (12,3) -- cycle    ;
				\draw    (79,222) -- (148.26,261.01) ;
				\draw [shift={(150,261.99)}, rotate = 209.39] [fill={rgb, 255:red, 0; green, 0; blue, 0 }  ][line width=0.08]  [draw opacity=0] (12,-3) -- (0,0) -- (12,3) -- cycle    ;
				\draw    (79,262) -- (148,261.99) ;
				\draw [shift={(150,261.99)}, rotate = 179.99] [fill={rgb, 255:red, 0; green, 0; blue, 0 }  ][line width=0.08]  [draw opacity=0] (12,-3) -- (0,0) -- (12,3) -- cycle    ;
				\draw    (80,301) -- (148.25,262.96) ;
				\draw [shift={(150,261.99)}, rotate = 150.87] [fill={rgb, 255:red, 0; green, 0; blue, 0 }  ][line width=0.08]  [draw opacity=0] (12,-3) -- (0,0) -- (12,3) -- cycle    ;
				\draw   (203.76,158.28) .. controls (210.94,158.15) and (216.86,163.86) .. (216.99,171.04) .. controls (217.12,178.22) and (211.41,184.14) .. (204.23,184.27) .. controls (197.05,184.4) and (191.13,178.69) .. (191,171.51) .. controls (190.87,164.33) and (196.58,158.41) .. (203.76,158.28) -- cycle ;
				\draw   (163.76,159) .. controls (170.94,158.87) and (176.87,164.59) .. (177,171.76) .. controls (177.13,178.94) and (171.41,184.87) .. (164.24,185) .. controls (157.06,185.13) and (151.13,179.41) .. (151,172.24) .. controls (150.87,165.06) and (156.59,159.13) .. (163.76,159) -- cycle ;
				\draw   (123.77,159.73) .. controls (130.95,159.6) and (136.87,165.31) .. (137,172.49) .. controls (137.13,179.67) and (131.42,185.59) .. (124.24,185.72) .. controls (117.06,185.85) and (111.14,180.14) .. (111.01,172.96) .. controls (110.88,165.78) and (116.59,159.86) .. (123.77,159.73) -- cycle ;
				\draw   (364.76,156.28) .. controls (371.94,156.15) and (377.86,161.86) .. (377.99,169.04) .. controls (378.12,176.22) and (372.41,182.14) .. (365.23,182.27) .. controls (358.05,182.4) and (352.13,176.69) .. (352,169.51) .. controls (351.87,162.33) and (357.58,156.41) .. (364.76,156.28) -- cycle ;
				\draw   (324.76,157) .. controls (331.94,156.87) and (337.87,162.59) .. (338,169.76) .. controls (338.13,176.94) and (332.41,182.87) .. (325.24,183) .. controls (318.06,183.13) and (312.13,177.41) .. (312,170.24) .. controls (311.87,163.06) and (317.59,157.13) .. (324.76,157) -- cycle ;
				\draw   (284.77,157.73) .. controls (291.95,157.6) and (297.87,163.31) .. (298,170.49) .. controls (298.13,177.67) and (292.42,183.59) .. (285.24,183.72) .. controls (278.06,183.85) and (272.14,178.14) .. (272.01,170.96) .. controls (271.88,163.78) and (277.59,157.86) .. (284.77,157.73) -- cycle ;
				\draw[dashed]    (163,97.99) -- (124.84,158.04) ;
				\draw [shift={(123.77,159.73)}, rotate = 302.43] [fill={rgb, 255:red, 0; green, 0; blue, 0 }  ][line width=0.08]  [draw opacity=0] (12,-3) -- (0,0) -- (12,3) -- cycle    ;
				\draw[dashed]    (163,97.99) -- (163.74,157) ;
				\draw [shift={(163.76,159)}, rotate = 269.28] [fill={rgb, 255:red, 0; green, 0; blue, 0 }  ][line width=0.08]  [draw opacity=0] (12,-3) -- (0,0) -- (12,3) -- cycle    ;
				\draw[dashed]    (163,97.99) -- (202.64,156.62) ;
				\draw [shift={(203.76,158.28)}, rotate = 235.94] [fill={rgb, 255:red, 0; green, 0; blue, 0 }  ][line width=0.08]  [draw opacity=0] (12,-3) -- (0,0) -- (12,3) -- cycle    ;
				\draw[dashed]    (163,248.99) -- (125.29,187.43) ;
				\draw [shift={(124.24,185.72)}, rotate = 58.51] [fill={rgb, 255:red, 0; green, 0; blue, 0 }  ][line width=0.08]  [draw opacity=0] (12,-3) -- (0,0) -- (12,3) -- cycle    ;
				\draw[dashed]    (163,248.99) -- (164.2,187) ;
				\draw [shift={(164.24,185)}, rotate = 91.1] [fill={rgb, 255:red, 0; green, 0; blue, 0 }  ][line width=0.08]  [draw opacity=0] (12,-3) -- (0,0) -- (12,3) -- cycle    ;
				\draw[dashed]    (163,248.99) -- (203.15,185.96) ;
				\draw [shift={(204.23,184.27)}, rotate = 122.5] [fill={rgb, 255:red, 0; green, 0; blue, 0 }  ][line width=0.08]  [draw opacity=0] (12,-3) -- (0,0) -- (12,3) -- cycle    ;
				\draw[dashed]    (325,97.99) -- (285.89,156.07) ;
				\draw [shift={(284.77,157.73)}, rotate = 303.96] [fill={rgb, 255:red, 0; green, 0; blue, 0 }  ][line width=0.08]  [draw opacity=0] (12,-3) -- (0,0) -- (12,3) -- cycle    ;
				\draw[dashed]    (325,97.99) -- (324.77,155) ;
				\draw [shift={(324.76,157)}, rotate = 270.23] [fill={rgb, 255:red, 0; green, 0; blue, 0 }  ][line width=0.08]  [draw opacity=0] (12,-3) -- (0,0) -- (12,3) -- cycle    ;
				\draw[dashed]    (325,97.99) -- (363.63,154.62) ;
				\draw [shift={(364.76,156.28)}, rotate = 235.7] [fill={rgb, 255:red, 0; green, 0; blue, 0 }  ][line width=0.08]  [draw opacity=0] (12,-3) -- (0,0) -- (12,3) -- cycle    ;
				\draw[dashed]    (326,249.99) -- (286.29,185.43) ;
				\draw [shift={(285.24,183.72)}, rotate = 58.41] [fill={rgb, 255:red, 0; green, 0; blue, 0 }  ][line width=0.08]  [draw opacity=0] (12,-3) -- (0,0) -- (12,3) -- cycle    ;
				\draw[dashed]    (326,249.99) -- (325.26,185) ;
				\draw [shift={(325.24,183)}, rotate = 89.34] [fill={rgb, 255:red, 0; green, 0; blue, 0 }  ][line width=0.08]  [draw opacity=0] (12,-3) -- (0,0) -- (12,3) -- cycle    ;
				\draw[dashed]    (325,250.01) -- (364.21,183.99) ;
				\draw [shift={(365.23,182.27)}, rotate = 120.71] [fill={rgb, 255:red, 0; green, 0; blue, 0 }  ][line width=0.08]  [draw opacity=0] (12,-3) -- (0,0) -- (12,3) -- cycle    ;
				\draw    (176,85) -- (230,85) ;
				\draw [shift={(232,85)}, rotate = 180] [fill={rgb, 255:red, 0; green, 0; blue, 0 }  ][line width=0.08]  [draw opacity=0] (12,-3) -- (0,0) -- (12,3) -- cycle    ;
				\draw    (258,85) -- (310,85) ;
				\draw [shift={(312,85)}, rotate = 180] [fill={rgb, 255:red, 0; green, 0; blue, 0 }  ][line width=0.08]  [draw opacity=0] (12,-3) -- (0,0) -- (12,3) -- cycle    ;
				\draw    (176,263) -- (230,263) ;
				\draw [shift={(232,263)}, rotate = 180] [fill={rgb, 255:red, 0; green, 0; blue, 0 }  ][line width=0.08]  [draw opacity=0] (12,-3) -- (0,0) -- (12,3) -- cycle    ;
				\draw    (258,263) -- (310,263) ;
				\draw [shift={(312,263)}, rotate = 180] [fill={rgb, 255:red, 0; green, 0; blue, 0 }  ][line width=0.08]  [draw opacity=0] (12,-3) -- (0,0) -- (12,3) -- cycle    ;
				\draw   (503.76,155.28) .. controls (510.94,155.15) and (516.86,160.86) .. (516.99,168.04) .. controls (517.12,175.22) and (511.41,181.14) .. (504.23,181.27) .. controls (497.05,181.4) and (491.13,175.69) .. (491,168.51) .. controls (490.87,161.33) and (496.58,155.41) .. (503.76,155.28) -- cycle ;
				\draw   (424.76,156) .. controls (431.94,155.87) and (437.87,161.59) .. (438,168.76) .. controls (438.13,175.94) and (432.41,181.87) .. (425.24,182) .. controls (418.06,182.13) and (412.13,176.41) .. (412,169.24) .. controls (411.87,162.06) and (417.59,156.13) .. (424.76,156) -- cycle ;
				\draw[dashed]   (123.77,159.73) .. controls (174.74,121.19) and (353.2,100.21) .. (423.71,155.17) ;
				\draw [shift={(424.76,156)}, rotate = 218.76] [fill={rgb, 255:red, 0; green, 0; blue, 0 }  ][line width=0.08]  [draw opacity=0] (12,-3) -- (0,0) -- (12,3) -- cycle    ;
				\draw[dashed]   (163.76,159) .. controls (214.74,120.47) and (353.6,100.2) .. (423.71,155.17) ;
				\draw [shift={(424.76,156)}, rotate = 218.76] [fill={rgb, 255:red, 0; green, 0; blue, 0 }  ][line width=0.08]  [draw opacity=0] (12,-3) -- (0,0) -- (12,3) -- cycle    ;
				\draw[dashed]    (203.76,158.28) .. controls (254.73,119.74) and (354,100.2) .. (423.72,155.17) ;
				\draw [shift={(424.76,156)}, rotate = 218.76] [fill={rgb, 255:red, 0; green, 0; blue, 0 }  ][line width=0.08]  [draw opacity=0] (12,-3) -- (0,0) -- (12,3) -- cycle    ;
				\draw[dashed]    (285.24,183.72) .. controls (312.86,221.81) and (377.35,237.84) .. (424.53,182.83) ;
				\draw [shift={(425.24,182)}, rotate = 130.15] [fill={rgb, 255:red, 0; green, 0; blue, 0 }  ][line width=0.08]  [draw opacity=0] (12,-3) -- (0,0) -- (12,3) -- cycle    ;
				\draw[dashed]    (325.24,183) .. controls (352.85,221.08) and (377.75,237.83) .. (424.53,182.83) ;
				\draw [shift={(425.24,182)}, rotate = 130.15] [fill={rgb, 255:red, 0; green, 0; blue, 0 }  ][line width=0.08]  [draw opacity=0] (12,-3) -- (0,0) -- (12,3) -- cycle    ;
				\draw[dashed]    (365.23,182.27) .. controls (374.02,209.2) and (373.07,236.07) .. (424.46,182.81) ;
				\draw [shift={(425.24,182)}, rotate = 133.83] [fill={rgb, 255:red, 0; green, 0; blue, 0 }  ][line width=0.08]  [draw opacity=0] (12,-3) -- (0,0) -- (12,3) -- cycle    ;
				\draw    (438,168.76) -- (489,168.52) ;
				\draw [shift={(491,168.51)}, rotate = 179.73] [fill={rgb, 255:red, 0; green, 0; blue, 0 }  ][line width=0.08]  [draw opacity=0] (12,-3) -- (0,0) -- (12,3) -- cycle    ;
				\draw  (337.97,84.13) .. controls (380,80) and (410.02,100) .. (424.76,156) ;
				\draw [shift={(424.76,156)}, rotate =-110] [fill={rgb, 255:red, 0; green, 0; blue, 0 }  ][line width=0.08]  [draw opacity=0] (12,-3) -- (0,0) -- (12,3) -- cycle    ;
				\draw  (337.97,265.01) .. controls (390,270) and (410.02,230) .. (424.76,184) ;
				\draw [shift={(424.76,184)}, rotate =110] [fill={rgb, 255:red, 0; green, 0; blue, 0 }  ][line width=0.08]  [draw opacity=0] (12,-3) -- (0,0) -- (12,3) -- cycle    ;

				\draw    (503.76,154) .. controls (503,41.57) and (244.6,-53.04) .. (244.98,70.13) ;
				\draw [shift={(245,72)}, rotate = 268.63] [fill={rgb, 255:red, 0; green, 0; blue, 0 }  ][line width=0.08]  [draw opacity=0] (12,-3) -- (0,0) -- (12,3) -- cycle    ;
				\draw    (503.76,154) .. controls (503,41.57) and (323.81,-53.04) .. (324.97,70.13) ;
				\draw [shift={(325,72.01)}, rotate = 268.63] [fill={rgb, 255:red, 0; green, 0; blue, 0 }  ][line width=0.08]  [draw opacity=0] (12,-3) -- (0,0) -- (12,3) -- cycle    ;
				\draw    (503.76,155.28) .. controls (503,42.84) and (163.42,-53.04) .. (162.98,70.12) ;
				\draw [shift={(163,71.99)}, rotate = 268.63] [fill={rgb, 255:red, 0; green, 0; blue, 0 }  ][line width=0.08]  [draw opacity=0] (12,-3) -- (0,0) -- (12,3) -- cycle    ;
				\draw    (504.23,181.27) .. controls (505.99,298.27) and (164,452) .. (163,274.99) ;
				\draw [shift={(163,274.99)}, rotate = 89.68] [fill={rgb, 255:red, 0; green, 0; blue, 0 }  ][line width=0.08]  [draw opacity=0] (12,-3) -- (0,0) -- (12,3) -- cycle    ;
				\draw    (504.23,181.27) .. controls (505.99,298.27) and (246,453.01) .. (245,276) ;
				\draw [shift={(245,276)}, rotate = 89.68] [fill={rgb, 255:red, 0; green, 0; blue, 0 }  ][line width=0.08]  [draw opacity=0] (12,-3) -- (0,0) -- (12,3) -- cycle    ;
				\draw    (504.23,181.27) .. controls (505.99,298.27) and (326,453.01) .. (325,276.01) ;
				\draw [shift={(325,276.01)}, rotate = 89.68] [fill={rgb, 255:red, 0; green, 0; blue, 0 }  ][line width=0.08]  [draw opacity=0] (12,-3) -- (0,0) -- (12,3) -- cycle    ;

				\draw[dashed]    (504.23,156) .. controls (505.99,-100) and (100, -80) .. (65,33) ;
				\draw [shift={(65,33)}, rotate = -65.68] [fill={rgb, 255:red, 0; green, 0; blue, 0 }  ][line width=0.08]  [draw opacity=0] (12,-3) -- (0,0) -- (12,3) -- cycle    ;
				
				\draw[dashed]    (504.23,156) .. controls (505.99,-120) and (-20, -100) .. (54,78) ;
				\draw [shift={(54,78)}, rotate = -110.68] [fill={rgb, 255:red, 0; green, 0; blue, 0 }  ][line width=0.08]  [draw opacity=0] (12,-3) -- (0,0) -- (12,3) -- cycle    ;
				
				\draw[dashed]    (504.23,156) .. controls (505.99,-160) and (-60, -120) .. (54,118) ;
				\draw [shift={(54,118)}, rotate = -113.68] [fill={rgb, 255:red, 0; green, 0; blue, 0 }  ][line width=0.08]  [draw opacity=0] (12,-3) -- (0,0) -- (12,3) -- cycle    ;

				\draw[dashed]    (504.23,184) .. controls (505.99,470.27) and (100, 420) .. (65,315) ;
				\draw [shift={(65,315)}, rotate = 65.68] [fill={rgb, 255:red, 0; green, 0; blue, 0 }  ][line width=0.08]  [draw opacity=0] (12,-3) -- (0,0) -- (12,3) -- cycle    ;
				
				\draw[dashed]    (504.23,184) .. controls (505.99,490.27) and (-20, 450) .. (54,265) ;
				\draw [shift={(54,265)}, rotate = 110.68] [fill={rgb, 255:red, 0; green, 0; blue, 0 }  ][line width=0.08]  [draw opacity=0] (12,-3) -- (0,0) -- (12,3) -- cycle    ;
				
				\draw[dashed]    (504.23,184) .. controls (505.99,500.27) and (-60, 490) .. (54,225) ;
				\draw [shift={(54,225)}, rotate = 110.68] [fill={rgb, 255:red, 0; green, 0; blue, 0 }  ][line width=0.08]  [draw opacity=0] (12,-3) -- (0,0) -- (12,3) -- cycle    ;

				\draw (498,162) node [anchor=north west][inner sep=0.75pt]  [xscale=0.87,yscale=0.87] [align=left] {\tiny{$a$}};
				\draw (420,162) node [anchor=north west][inner sep=0.75pt]  [xscale=0.87,yscale=0.87] [align=left] {\tiny{$p$}};
				\draw (58,33) node [anchor=north west][inner sep=0.75pt]  [xscale=0.87,yscale=0.87]  {\tiny{$c_{1}^{1}$}};
				\draw (58,73) node [anchor=north west][inner sep=0.75pt]  [xscale=0.87,yscale=0.87]  {\tiny{$c_{1}^{2}$}};
				\draw (58,113) node [anchor=north west][inner sep=0.75pt]  [xscale=0.87,yscale=0.87]  {\tiny{$c_{1}^{3}$}};
				\draw (57,210) node [anchor=north west][inner sep=0.75pt]  [xscale=0.87,yscale=0.87]  {\tiny{$c_{2}^{1}$}};
				\draw (57,249) node [anchor=north west][inner sep=0.75pt]  [xscale=0.87,yscale=0.87]  {\tiny{$c_{2}^{2}$}};
				\draw (57,290) node [anchor=north west][inner sep=0.75pt]  [xscale=0.87,yscale=0.87]  {\tiny{$c_{2}^{3}$}};
				\draw (154,73) node [anchor=north west][inner sep=0.75pt]  [xscale=0.87,yscale=0.87]  {\tiny{$x_{1}^{1}$}};
				\draw (236,73) node [anchor=north west][inner sep=0.75pt]  [xscale=0.87,yscale=0.87]  {\tiny{$x_{1}^{2}$}};
				\draw (316,73) node [anchor=north west][inner sep=0.75pt]  [xscale=0.87,yscale=0.87]  {\tiny{$x_{1}^{3}$}};
				\draw (153,250) node [anchor=north west][inner sep=0.75pt]  [xscale=0.87,yscale=0.87]  {\tiny{$x_{2}^{1}$}};
				\draw (236,251) node [anchor=north west][inner sep=0.75pt]  [xscale=0.87,yscale=0.87]  {\tiny{$x_{2}^{2}$}};
				\draw (316,251) node [anchor=north west][inner sep=0.75pt]  [xscale=0.87,yscale=0.87]  {\tiny{$x_{2}^{3}$}};
				\draw (115,161.4) node [anchor=north west][inner sep=0.75pt]  [xscale=0.87,yscale=0.87]  {\tiny{$n_{1}^{1}$}};
				\draw (154,160.4) node [anchor=north west][inner sep=0.75pt]  [xscale=0.87,yscale=0.87]  {\tiny{$n_{1}^{2}$}};
				\draw (194,160.4) node [anchor=north west][inner sep=0.75pt]  [xscale=0.87,yscale=0.87]  {\tiny{$n_{1}^{3}$}};
				\draw (275,158) node [anchor=north west][inner sep=0.75pt]  [xscale=0.87,yscale=0.87]  {\tiny{$n_{2}^{1}$}};
				\draw (315,158) node [anchor=north west][inner sep=0.75pt]  [xscale=0.87,yscale=0.87]  {\tiny{$n_{2}^{2}$}};
				\draw (355,158) node [anchor=north west][inner sep=0.75pt]  [xscale=0.87,yscale=0.87]  {\tiny{$n_{2}^{3}$}};
			\end{tikzpicture}
			\vspace*{-4em}
			\caption{\label{fig:schulze-CCDC}
				WMG corresponding to the election constructed from the $\threesat$ instance from our counterexample with the new reduction.
				Dashed edges have weight two and drawn edges have weight four.}
		\end{center}
		\vspace*{-1em}
	\end{figure}
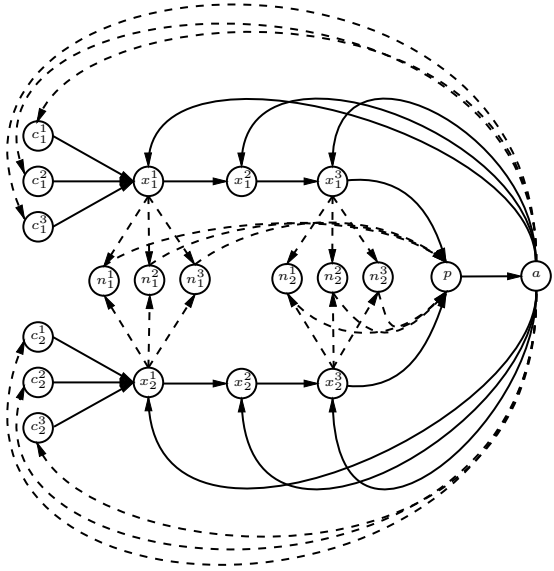
	The graph in Figure~\ref{fig:schulze-CCDC} shows the weighted majority graph for the $\threesat$ instance from our counterexample adapted to the new reduction.
	We claim that $(X,Cl)$ is a yes-instance of $\threesat$ if and only if $((C,V),p,k)$ is a yes-instance of Schulze-\textsc{CCDC} in the nonunique-winner model. 
	
	From left to right, let $(X,Cl)$ be a yes-instance of $\threesat$. 
	Since we have a yes-instance of $\threesat$, we have a truth assignment that makes at least one literal in each clause $Cl_i\in Cl$ true.
	We claim that $p$ can be made a Schulze winner by deleting one literal candidate corresponding to some true literal for each clause.
	The only path with weight four from a clause candidate $c_i^j\in K$ to $p$ is through the literal candidates: from $x_i^1$ via $x_i^2$ to~$x_i^3$.
	Since we deleted one literal candidate for each clause, there no longer exists a weight-$4$ path from a clause candidate to $p$ and $P(p,c) = 2 \geq P(c,p)$ for $c\in K$.
	For each $c\in L \cup \{a\}$, we have $P(p,c) = 4 \ge P(c,p)$.
	Since we deleted only literal candidates where the corresponding literal was assigned to be true, we never have the case that we deleted two literal candidates $x_i^j$, $x_m^n$, which negate each other.
	Thus we still have a weight-$2$ path from $p$ to each negation candidate and $P(p,c) = 2 = P(c,p)$ for each $c\in N$.
	It follows that $((C,V),p,k)$ is a yes-instance of Schulze-\textsc{CCDC} in the nonunique-winner model.
	
	From right to left, let $(X,Cl)$ be a no-instance of $\threesat$.
	Thus, for each assignment of the literals, there exists a clause which is false.
	To ensure that $p$ is a winner of the election, it is necessary that $P(p,c)\geq P(c,p)$ for each $c\in C\setminus\{p\}$.
	Since $P(p,c)= 2 < 4 = P(c,p)$ for each $c\in K$, we have to destroy each path of weight greater than two from the clause candidates to~$p$, in particular the path through the literal candidates $x_i^j \in C_i$, $j \in \{1,2,3\}$.
	Due to the deletion limit $k = \vert Cl \vert$, it is necessary to delete one literal candidate for each clause. 
	Consider any subset of literals of size~$k$ such that for each clause one literal is contained in the set.
	It follows that  this set contains at least two literals, $x_i^j$ and $x_m^n$, that negate each other, otherwise, the formula would be satisfiable and we would have a yes-instance of $\threesat$.
	By deleting the corresponding two literal candidates, we no longer have a path from $p$ to the negation candidates $n\in N_{ijmn}$.
	It follows that $P(p,n) < P(n,p)$ and it is impossible to make $p$ a Schulze winner of the election by deleting at most $k$ candidates.  
	Therefore, $((C,V),p,k)$ is a no-instance of Schulze-\textsc{CCDC} in the nonunique-winner model.
	
	From right to left, let $((C,V), p, k)$ be a yes-instance of Schulze-\textsc{CCDC}. To ensure that $p$ is a winner of the election, it is necessary that $P(p,c)\geq P(c,p)$ for each $c\in C\setminus\{p\}$.
	Since $P(p,c)= 2 < 4 = P(c,p)$ for each $c\in K$, we have to destroy each path of weight greater than two from the clause candidates to~$p$, in particular the path through the literal candidates $x_i^j \in C_i$, $j \in \{1,2,3\}$.
	Due to the deletion limit $k = \vert Cl \vert$, it is necessary to delete one literal candidate for each clause. 
	Consider any subset of literals of size~$k$ such that for each clause one literal is contained in the set.
	If all of these subsets contained at least two literals, $x_i^j$ and $x_m^n$, that negate each other, i.e., if the $\threesat$ formula were not satisfiable, by deleting the corresponding two literal candidates, we would no longer have a path from $p$ to the negation candidates $n\in N_{ijmn}$. That is, it would hold that $P(p,n) < P(n,p)$ and it would be impossible to make $p$ a Schulze winner %
	by deleting at most $k$ candidates. Thus, since we have a yes-instance of Schulze-\textsc{CCDC}, there must exist at least one subset of literals, such that for each clause one literal is contained in the set, i.e., set to true and we have $\threesat$-yes. 
	
	Finally, it is easy to see that Schulze-\textsc{CCDC} is in~\np.
\end{proof}

Previously, Schulze-\textsc{CCDC} was only studied in the nonunique-winner model. By slightly adapting the previous construction (see the supplementary material), we can now also show \np-completeness in the unique-winner model.  

\begin{theorem}\label{thm:schulze-ccdc-unique}
	Schulze-\textsc{CCDC} is~\np-complete in the unique-winner model.
\end{theorem}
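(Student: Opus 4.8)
The plan is to keep the reduction from $\threesat$ used in the proof of Theorem~\ref{thm:schulze-ccdc} essentially unchanged --- the same candidates ($K$, $L$, $N$, $p$, $a$), the same deletion bound $k$, and the same \emph{shape} of the weighted majority graph (edges $c_i^j \patharrow x_i^1$, the literal chains $x_i^1 \patharrow x_i^2 \patharrow x_i^3 \patharrow p$, edges $a \patharrow x$ for $x \in L$, $p \patharrow a$, $a \patharrow c$ for $c \in K$, $x_i^j \patharrow n$, and $n \patharrow p$) --- but to re-weight the edges so that in the ``yes'' case the distinguished candidate $p$ beats every other candidate \emph{strictly}, i.e.\ $\strongestpath{p}{d} > \strongestpath{d}{p}$ for all $d \ne p$. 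That is exactly what is needed: if $p$ strictly beats everyone, then $p$ has in-degree $0$ and every other candidate has in-degree $\ge 1$ in the graph whose edges are the pairs with $\strongestpath{\cdot}{\cdot} > \strongestpath{\cdot}{\cdot}$, so $p$ is the \emph{unique} Schulze winner. For the ``no'' direction I would re-run the argument of Theorem~\ref{thm:schulze-ccdc} showing that $p$ cannot be made even a (non-unique) winner, which a fortiori forbids making $p$ the unique winner.

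Concretely, the re-weighting I would use is: the literal chains and the edges $c_i^j \patharrow x_i^1$ get weight $6$; the edges $p \patharrow a$ and $a \patharrow x$ (for $x \in L$) get weight $8$; the edges $a \patharrow c$ (for $c \in K$) and $x_i^j \patharrow n$ get weight $4$; and the edges $n \patharrow p$ keep weight $2$. All weights are even, so the graph is realizable with polynomially many votes by taking $w/2$ copies of the $\voteW{\cdot}{\cdot}$-gadget for each intended edge of positive weight $w$, exactly as elsewhere in the paper, and the reduction is polynomial-time computable. The reasons for these numbers are: $8 > 6$ lets $p$ outrun the weight-$6$ literal chains to reach $L$ and $a$; $4 > 2$ lets the path $p \patharrow a \patharrow x \patharrow n$ (min-weight $4$) beat $n \patharrow p$ (weight $2$) and lets $p \patharrow a \patharrow c$ (min-weight $4$) beat the residual weight-$2$ paths $c \patharrow x_i^1 \patharrow n \patharrow p$; and $4 < 6$ keeps $\strongestpath{p}{c_i^j} \le 4 < 6 = \strongestpath{c_i^j}{p}$ \emph{before} any deletion, so $p$ is not already a winner of the un-deleted election and deletions remain genuinely necessary (without this the map would collapse to an all-yes reduction).

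The verification then splits as before; note first that every path into $p$ ends with an edge of weight at most $6$, so $\strongestpath{d}{p} \le 6$ for all $d \ne p$. \emph{Forward:} from a satisfying assignment, delete, for each clause, the literal candidate of one literal that is true; since a literal and its negation are never both true, no two deleted candidates negate each other, so every negation candidate still has an incoming edge from a surviving literal candidate. A routine case distinction over $d \in \{a\} \cup L \cup K \cup N$ then gives $\strongestpath{p}{d} > \strongestpath{d}{p}$ in every case (using $\strongestpath{p}{a} = \strongestpath{p}{x} = 8 > 6 \ge \strongestpath{d}{p}$ for $d \in \{a\} \cup L$; $\strongestpath{p}{c} = 4 > \strongestpath{c}{p}$, the latter being at most $2$ because the weight-$6$ chain to $p$ has been cut in every clause; and $\strongestpath{p}{n} = 4 > 2 = \strongestpath{n}{p}$), so $p$ is the unique Schulze winner. \emph{Backward:} if a set $D$ of at most $k$ candidates could be deleted so that $p$ becomes a winner, then $a \notin D$ (else $p$ has no outgoing edge and beats nobody), and for each clause $D$ must delete some literal candidate of that clause (else a surviving $c_i^j$ still reaches $p$ along the intact weight-$6$ chain, beating $\strongestpath{p}{c_i^j} \le 4$); since $|D| \le k$ this forces $D$ to consist of exactly one literal candidate per clause and nothing else. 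As the instance is unsatisfiable, such a $D$ must contain two literal candidates that negate each other, so the corresponding negation candidates lose all incoming edges, whence $p$ no longer reaches them and $\strongestpath{p}{n} < 2 = \strongestpath{n}{p}$ for such an $n$ --- contradicting that $p$ is a winner. Membership in $\np$ is immediate (guess the deleted candidates; the set of Schulze winners, ties included, is computable in polynomial time), so Schulze-\textsc{CCDC} is $\np$-complete in the unique-winner model.

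The step I expect to be the main obstacle is handling the negation candidates. In the original construction each negation candidate has a single outgoing edge of globally minimal weight, hence ties \emph{every} other candidate and is always a co-winner --- so no amount of extra machinery attached only \emph{around} $p$ can by itself yield a unique winner. The fix above is to make the edges $x_i^j \patharrow n$ strictly heavier than $n \patharrow p$, so that $p$ strictly dominates $n$ whenever at least one of its two literals survives; the two delicate checks are (i) that this extra weight does not let $p$ reach the negation candidates too strongly in the ``no'' case --- it does not, since the only edges into a negation candidate come from its two literal candidates, and a conflicting choice deletes both --- and (ii) that the weight of $a \patharrow c$ stays strictly below the literal-chain weight, so that $p$ is not already a winner of the election before any deletion.
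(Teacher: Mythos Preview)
Your proposal is correct and takes essentially the same approach as the paper: in fact, your chosen edge weights ($6$ on the clause-to-literal edges and literal chains, $8$ on $p\patharrow a$ and $a\patharrow x$, $4$ on $a\patharrow c$ and $x_i^j\patharrow n$, $2$ on $n\patharrow p$) are exactly the weights the paper uses, obtained there by taking $3$, $3$, $3$, $3$, $4$, $4$, $1$, $2$, $2$ copies of the respective $\voteW{\cdot}{\cdot}$ gadgets. The paper merely states that the proof is analogous to Theorem~\ref{thm:schulze-ccdc} with this adapted voter list, whereas you additionally spell out why these numbers work and carry through the strict-inequality case analysis.
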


\section{Destructive Control by Deleting Candidates and Variants Thereof}\label{sec:variantsdestructivecontrolbydeletingcandidates}

In this section, we examine a peculiarity of standard destructive control by deleting candidates for Schulze elections, which allows us to reduce the number of candidates we need to consider for deletion. %
Using this approach, we are able to re-establish the result that \textsc{DCDC} is polynomial-time solvable for Schulze elections, which Menton and Singh claimed in an early version (v1) of their arXiv preprint~\cite{men-sin:t:manipulation-control-schulze-voting-v1}.
However, Menton and Singh removed this result (and the corresponding result for Schulze-\textsc{DCAC}) from all subsequent versions of the arXiv preprint and from their IJCAI 2013 paper~\cite{men-sin:c:control-complexity-schulze}, stating these two control problems as open.
Additionally, we consider the relationship of variations of destructive control by deleting candidates to variations of $s$-$t$ vertex cut.
All omitted proofs can be found in the supplementary material.

We first introduce a variant of \emph{$s$-$t$ vertex cut} defined by Menton and Singh~\shortcite{men-sin:c:control-complexity-schulze}: \ppvclong.
Recall that for destructive control in Schulze elections to be successful, one candidate must be boosted to beat the despised candidate and that by deleting candidates we can only lower the strength of the strongest path but can never increase it.
Hence, we need to cut paths such that the best remaining path to the despised candidate is stronger than all paths from the despised to our chosen boosted candidate.
Menton and Singh~\shortcite{men-sin:c:control-complexity-schulze} capture this notion of \emph{path-preserving vertex cut}
by defining the problem %
as essentially asking whether there exists an $s$-$t$ vertex cut (destroying the stronger paths from $s$ to~$t$) of size at most~$k$, while at least one path from $t$ to $s$ must remain intact.
When searching for candidates to delete to boost some candidate to beat the despised candidate, it is useful if we can reduce the search space. 
We will at times refer to these candidates as \emph{rivals of the despised candidate}.
For the standard control problem we show that this is indeed possible and the chair can limit the search to the in-neighborhood of the boosted candidate. 

\begin{theorem}\label{thm:ccdc-neighbors}
	If destructive control by deleting candidates is possible for a given Schulze election, then there exists some candidate $c \in C$ who can beat the despised candidate $w$ by only deleting candidates who directly beat $c$ (i.e., are in-neighbors of $c$ in the WMG).\footnote{This result can be extended to any pair of candidates $c,w \in C$ where $P(w,c) \ge P(c,w)$ and
		we have a yes-instance of \ppvclong\ for the graph induced by the paths between these two candidates.}
\end{theorem}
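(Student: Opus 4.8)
The plan is to take any successful deletion, read off a threshold from it, and then slide the boosted candidate ``backwards'' along the strong arcs until the candidates that still have to be deleted are precisely some of its in-neighbors. Suppose destructive control by deleting candidates succeeds via a set $D_0$ with $|D_0|\le\ell$; since Schulze always has a winner, some $b\in C\setminus(D_0\cup\{w\})$ beats $w$ in $(C\setminus D_0,V)$, and we fix $\tau:=P(b,w)$ computed in that election. One checks $\tau\ge 0$ by looking at the single arc between $b$ and $w$. Let $G_{>\tau}$ be the digraph on $C$ keeping only the WMG-arcs of weight strictly greater than $\tau$, and call its directed paths \emph{$\tau$-paths}. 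We use two facts throughout: deleting candidates never raises any value $P(\cdot,\cdot)$, and, as $\tau\ge P(w,b)$, the set $D_0$ is a $w$--$b$ vertex cut in $G_{>\tau}$. We may assume $w$ is already the unique winner of $(C,V)$, since otherwise $D^*=\emptyset$ proves the theorem.

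Call $c$ \emph{useful} if $c$ beats $w$ in $(C\setminus D_0,V)$ with $P(c,w)\ge\tau\ge P(w,c)$ there. Then $b$ is useful; every useful $c$ lies outside $D_0$ and is reachable from $w$ in $G_{>\tau}$ (otherwise $c$ would already beat $w$ in $(C,V)$); and, crucially, whenever $c$ is useful and $d\notin D_0\cup\{w\}$ satisfies $D_V(d,c)>\tau$, then $d$ is useful as well --- prepending the arc $d\to c$ to a strongest $c$-to-$w$ path yields $P(d,w)\ge\tau$ in $C\setminus D_0$, and $P(w,d)\le\tau$ there because otherwise $w$ would reach $c$ through $d$ by a $\tau$-path, contradicting $\tau\ge P(w,c)$ (usefulness of $c$ rules out a direct weight-$({>}\tau)$ arc $w\to c$, so the case $d=w$ cannot occur). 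Starting from $c:=b$, repeat: if $D_0\cap\{d:D_V(d,c)>\tau\}$ is not a $w$--$c$ vertex cut in $G_{>\tau}$, pick a $\tau$-path from $w$ to $c$ avoiding that set, let $g$ be its last vertex lying in $D_0$, let $d$ be the vertex immediately after $g$ on this path, and set $c:=d$; this $d$ is useful, $d\notin D_0$, and $d\ne c$. When the loop stops, at a useful $c^*$, put $D^*:=D_0\cap\{d:D_V(d,c^*)>\tau\}$. Then $D^*\subseteq N^-(c^*)$ because $\tau\ge0$, $|D^*|\le|D_0|\le\ell$, and $D^*\subseteq D_0$; deleting $D^*$ leaves no $\tau$-path from $w$ to $c^*$, so $P(w,c^*)\le\tau$, while $P(c^*,w)\ge\tau$ still holds because fewer candidates are removed than for $D_0$. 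Hence $c^*$ beats $w$ in $(C\setminus D^*,V)$ having deleted only candidates who directly beat $c^*$, which is exactly the assertion of the theorem.

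The step I expect to be the main obstacle is showing that this loop halts, i.e.\ that a useful $c^*$ whose $D_0$-in-neighbors already cut it off from $w$ in $G_{>\tau}$ exists. Each iteration replaces $c$ by a $G_{>\tau}$-in-neighbor of $c$, so the candidates visited form a walk running backwards along the arcs of $G_{>\tau}$, all useful and outside $D_0$; because $G_{>\tau}$ can contain long directed cycles, finiteness does not by itself rule out looping, and one must supply a potential that strictly decreases at every step --- morally, the distance in $G_{>\tau}$ from the $D_0$-frontier to the current candidate. Normalizing $D_0$ to the arc-boundary $\partial R$ of the set $R$ of candidates reachable from $w$ by a $\tau$-path in $C\setminus D_0$ (this only shrinks $D_0$ and preserves usefulness of $b$) and using that $G_{>\tau}$ has no $2$-cycles when $\tau\ge0$ is what makes such a potential available; turning this intuition into a proof is the technical heart of the argument. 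The footnote's more general claim is the same argument run inside the digraph spanned by the arcs on paths between $c$ and $w$, with ``$P(w,c)\ge P(c,w)$ and \ppvclong\ is a yes-instance'' in place of ``control is possible''.
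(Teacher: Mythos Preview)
Your route is genuinely different from the paper's. The paper does not iterate at all: it fixes a candidate $c$ for which the minimum deletion set $Del^c_w$ needed to make $c$ beat $w$ has smallest cardinality over all possible rivals, and then argues by a short case split on the set $F$ of candidates in $c$'s post-cut component that have a deleted in-neighbor. If $|F|\le 1$, the minimal cut $Del^c_w$ already lies in a single candidate's in-neighborhood; if $|F|>1$, the minimality of $|Del^c_w|$ is invoked to force all members of $F$ to share the same deleted in-neighbors, reducing to the previous case. The virtue of the global minimality choice is precisely that it replaces your open-ended loop by a one-shot contradiction, so no termination argument is needed.

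The concrete gap in your proposal is exactly the one you flag, but it is more serious than your sketch suggests. First, the sentence ``each iteration replaces $c$ by a $G_{>\tau}$-in-neighbor of $c$'' does not match your construction: the new candidate $d$ is the out-neighbor of the last $D_0$-vertex $g$ on the chosen path, which may sit many steps before $c$; only the concatenated path segments, not the iterates themselves, form a backward walk in $G_{>\tau}$. Second, your proposed potential (distance in $G_{>\tau}$ from the $D_0$-frontier to the current candidate) does not decrease: every $d$ you ever produce is by construction an out-neighbor of some $g\in D_0$, so that distance is identically $1$ after the first move, and the absence of $2$-cycles in $G_{>\tau}$ says nothing about longer directed cycles among such boundary-adjacent useful candidates. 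Without a working potential, nothing you have written rules out the loop revisiting a candidate. There is also a smaller precision issue at the end: cutting only $G_{>\tau}$ yields $P(w,c^*)\le\tau$, and together with $P(c^*,w)\ge\tau$ this is a tie, not the strict inequality $P(c^*,w)>P(w,c^*)$ required to make $w$ a non-winner in the nonunique-winner model; you would need to anchor the threshold at $P(w,b)$ in $C\setminus D_0$ rather than at $P(b,w)$, or work with $G_{\ge\tau}$, for the final step to close.
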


\begin{proof}
	Let $c \in C$ be some candidate where $P(w,c) \ge P(c,w)$, i.e., $w$ has a stronger (or equally strong) path to $c$ than the other way round.
	Assume that $c$ can beat $w$ by deleting the minimal number of candidates needed (with regards to the number necessary for making other candidates beat~$w$).
	We claim that either those deleted candidates are direct neighbors of~$c$, or there exists some other candidate $c^* \in C$ for which we can reach the same goal by deleting equally many or even fewer candidates in the neighborhood of~$c^*$.
	
	First, we define some notation.
	Let $Del^c_w$ be the minimal number of removed candidates needed to make $c$ beat~$w$.
	Note that these candidates form a path-preserving vertex cut.
	We say that $x$ is \textit{before} $z$ if $x$ is closer to $w$ than~$z$.
	Let $N^+(c)$ be the in-neighborhood of a candidate $c$, i.e., $N^+(c)$ contains all candidates with a direct edge to~$c$.
	Finally, we define $Ind^c_w$ as the candidates belonging to the connected component of $c$ as induced by the vertex cut~$Del^c_w$.
	Intuitively, $Ind^c_w$ contains all candidates on stronger paths from $w$ to $c$ that are broken by deleting candidates and where the cut is before the candidate.
	
	Clearly, any $z^* \in Ind^c_w$ also beats $w$ as first $P(z^*,c) \ge P(c,w)$ and thus $P(z^*,w) \ge P(c,w)$, and secondly, no path from $w$ to $z^*$ with strength greater than $P(c,w)$ can exist.
	It follows that $\cardinality{Del^{z^*}_w} \le \cardinality{Del^{c}_w}$.
	We distinguish two cases.
	
	\begin{description}
		\item[Case 1:] \textbf{$Ind^c_w = \emptyset$.}
		Since $Del^c_w$ is a minimal cut, we have $Del^c_w \subseteq N^+(c)$.
		
		\item[Case 2:] \textbf{$Ind^c_w \neq \emptyset$.}		
		Let $F= \{f \in Ind^c_w \mid N^+(f) \cap Del^c_w \neq \emptyset \}$ be the set of all candidates in the connected component of $c$, where we deleted some candidates in the in-neighborhood of~$c$. 
		On the one hand, if $\cardinality{F} = 1$ we have a candidate $f \in F$ who also beats $w$ by deleting $Del^c_w$.
		Since $Del^c_w$ is minimal, it follows that $Del^f_w = Del^c_w$, and therefore, for a successful control action against~$w$, it suffices to delete from $N^+(f)$.
		On the other hand, if $\cardinality{F} > 1$, then $ N^+(f) \cap Del^c_w =  N^+(g) \cap Del^c_w$ for all $f,g \in F$.
		For a contradiction, assume there are two candidates $f,g \in F$ who do not share the same in-neighbors in~$Del^c_w$.
		Deleting either $N^+(f) \cap Del^c_w$ or $N^+(g) \cap Del^c_w$ is sufficient to make the respective candidate beat~$w$.
		Since $N^+(f) \cap Del^c_w \neq N^+(g) \cap Del^c_w$, we have a contradiction to $Del^c_w$ being minimal.\qedhere
	\end{description}
\end{proof}

This result can be used to design an algorithm for Schulze-\textsc{DCDC} in the nonunique-winner model, which runs in polynomial time. Unfortunately, we cannot easily transfer our algorithm to work in the unique-winner model.
The complete algorithm, proof of correctness (based on Theorem~\ref{thm:ccdc-neighbors}) and further explanation as well as an example of why the algorithm cannot solve the unique-winner model can be found in the supplementary material.
Intuitively, for a despised candidate~$d$, the algorithm works by considering every candidate $c\in C\setminus\{d\}$ as a possible rival of $d$ and testing whether $c$ is successful by considering the part of the in-neighborhood of $c$ that intersects the stronger paths from $d$ to $c$ for deletion.

\begin{theorem}\label{thm:schulze-dcdc-our-result}
	In the nonunique-winner model, Schulze-\textsc{DCDC} is solvable in polynomial time.
\end{theorem}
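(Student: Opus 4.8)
The plan is to reduce the existence of a successful control action to a polynomial number of maximum‑flow computations, using Theorem~\ref{thm:ccdc-neighbors} to keep each computation local. First, observe that in the nonunique‑winner model the despised candidate $w$ fails to be a Schulze winner exactly when some candidate $c$ has $P(c,w)>P(w,c)$; hence the chair succeeds iff there is a deletion set $T$ with $|T|\le\ell$ and a ``rival'' $c\in C\setminus(T\cup\{w\})$ such that $P_{C\setminus T}(c,w)>P_{C\setminus T}(w,c)$. By Theorem~\ref{thm:ccdc-neighbors}, whenever such a pair exists one can be chosen with $T\subseteq N^+(c)\setminus\{w\}$, the in‑neighbours of $c$ in the WMG. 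The algorithm therefore loops over all $c\in C\setminus\{w\}$ as the putative rival and, for each, decides whether $\ell$ deletions confined to $N^+(c)\setminus\{w\}$ make $c$ beat $w$; this is an instance of \ppvclong\ for the $w$–$c$ pair, and it is precisely the restriction of the cut to $N^+(c)$ that we expect to make it tractable.

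Second, fix the rival $c$. Since deleting candidates never increases a strongest‑path value, and $P(c,w)$ and $P(w,c)$ only ever equal one of the $O(m^2)$ edge weights of the WMG, I would guess the value $\beta$ that $P_{C\setminus T}(c,w)$ should attain. Writing $G_{\ge\beta}$ for the spanning subgraph keeping only edges of weight at least $\beta$, the requirement becomes: after deleting $T$, the graph $G_{\ge\beta}$ still contains a directed $c\to w$ path but no longer contains a $w\to c$ path (so that $P(w,c)<\beta$). For a fixed $\beta$, the cheapest $T\subseteq N^+(c)\setminus\{w\}$ that destroys every $w\to c$ path in $G_{\ge\beta}$ is a vertex‑capacitated minimum $w$–$c$ cut: split each vertex, assign capacity $1$ to the vertices of $N^+(c)\setminus\{w\}$ and capacity $\infty$ to the rest, and compute a minimum cut by max flow; an infinite value (in particular a direct edge $w\to c$ of weight $\ge\beta$, which cannot be cut because $w$ may not be deleted) rules the pair out. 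If the minimum cut has size at most $\ell$ and $G_{\ge\beta}-T$ still admits a $c\to w$ path, we report success. Ranging over all $O(m)$ rivals and $O(m^2)$ thresholds, and finally running the Schulze winner procedure once to confirm, yields a polynomial‑time algorithm.

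The delicate part of the correctness proof — and the step I expect to be the main obstacle — is that the single set $T$ must simultaneously destroy the backward $w\to c$ paths and spare some forward $c\to w$ path, so the global minimum cut above is not obviously the right one to test. The resolution is to argue, again via the residual‑component structure used in the proof of Theorem~\ref{thm:ccdc-neighbors} (the analysis of $Ind^c_w$ and $F$), that whenever \emph{some} admissible cut within $N^+(c)$ preserves a $c\to w$ path, a minimum cut that keeps that path still has size at most $\ell$; one then computes a minimum $w$–$c$ cut subject to forbidding the vertices of one fixed forward path, which is again a flow computation, possibly after replacing $c$ by a candidate lying in its residual component. One must also dispatch the boundary cases ($w\in N^+(c)$, or a direct heavy edge $w\to c$), and record why the argument is special to the nonunique‑winner model: there $w$ loses the moment a single rival overtakes it, whereas making $w$ merely a non‑unique winner would require forcing a tie that turns some rival into a co‑winner, which couples the rivals and breaks the per‑rival decomposition — consistent with the unique‑winner case being left open.
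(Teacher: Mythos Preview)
Your approach diverges from the paper's, and the divergence is exactly where your acknowledged obstacle lies. The paper does \emph{not} use max flow or threshold enumeration. Its algorithm is a direct greedy: for each putative rival $c$, let $\beta$ be the \emph{current} value $P(c,d)$, form the subgraph $G$ of all $d\to c$ paths with strength strictly greater than $\beta$, delete \emph{all} of $N^+_G(c)$ (the in-neighbours of $c$ that appear in $G$), recompute $P(c,d)$, and iterate until either $d$ is dethroned or the budget is exhausted. Because the threshold is always the actual surviving $P(c,d)$, the forward path is never something to be ``preserved'' separately; it is baked into the stopping condition. The correctness argument is then a one-liner from Theorem~\ref{thm:ccdc-neighbors}: for the rival $c$ that theorem produces, the greedy's deletions are exactly the in-neighbours that must go, so it succeeds within the budget.

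Your route, by contrast, computes a capacity-restricted minimum $w$--$c$ cut and then has to argue that the path-preserving requirement can be grafted on. That step, which you flag as the main obstacle, is not actually discharged. The minimum cut inside $N^+(c)$ need not be a subset of the cut $Del^c_w$ furnished by Theorem~\ref{thm:ccdc-neighbors}, so it may kill every $c\to w$ path even when $Del^c_w$ does not; your test in step~3 can therefore return ``no'' for every $(c,\beta)$ on a yes-instance. ``Forbidding the vertices of one fixed forward path'' does not help unless you can name the path, and there may be exponentially many candidates; ``replacing $c$ by a candidate in its residual component'' is the right instinct (it is essentially the $Ind^c_w$/$F$ argument), but you have not turned it into an algorithm---you would need to show that iterating this replacement terminates in polynomial time and still respects the budget. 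As written, the proposal is a plausible plan with a real gap at precisely the point you identified; the paper's greedy sidesteps the issue rather than resolving it.
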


For variants of destructive control, we need to encode the restriction, which is imposed on the deletion set, into the vertex cut, essentially creating corresponding new \ppvclong\ decision problems.
There are several natural restrictions one may apply to Schulze-\textsc{DCDC}, e.g., some candidates may be protected from deletion or labeled candidates must be deleted together.
We give examples and prove the relationship to vertex cut for some of these variants in the supplementary material.

\section{Exact Multimode Control and Control by Replacing}
\label{sec:controlbymultimodeandreplacing}

Now we turn to exact multimode control and control by replacing candidates or voters.
In the former control type, the chair must adhere to alter an \emph{exact} number of candidates or voters or both.
In the latter, the chair must add the \emph{same} number of either candidates or voters as previously have been deleted, i.e., must replace them.
Lorregia et al.~\shortcite{lor-nar-ros-ven-wal:c:replacing-candidates} showed that any voting rule that is resistant to constructive control by deleting candidates and satisfies \emph{insensitivity to bottom-ranked candidates (IBC)}~\cite{lan-mau-pol:c:equilibria-strategic-games} is also resistant to constructive control by replacing candidates. 
A voting rule $\mathcal{E}$ is said to be \emph{insensitive to bottom-ranked candidates} if, given an election $(C,V)$ and a new candidate~$x$, the elections $(C,V)$ and $(C \cup \{x\},V^x)$, where $V^x$ is the list of votes obtained by adding $x$ as the least preferred alternative to each vote in~$V$, have the same winners under $\mathcal{E}$. 
We extend their result to also apply to $\mathcal{E}$-\textsc{ECCAC+DC} and $\mathcal{E}$-\textsc{ECCRC}.

Table~\ref{tab:multimode-replacing-overview} provides an overview of our results.
All (full) proofs omitted in this section can be found in the supplementary material. 

\begin{table}
	\caption[Overview of complexity results for exact multimode control and control by replacing in both winner models.]{
		Overview of complexity results for exact multimode control and control by replacing in both winner models.
		Results marked by $^\spadesuit$ can be found in Corollary~\ref{cor:schulze-eccrc},
		by $^\heartsuit$ in Corollary~\ref{cor:rp-dcrc},
		by $^\diamondsuit$ in Corollary~\ref{cor:schulze-ccrc},
		by $^\star$ in Theorem~\ref{theo:votercontrol} and
		by $^\circ$ in Theorem~\ref{theo:rp-votercontrol}.
	}
	\label{tab:multimode-replacing-overview}
	\begin{tabularx}{\columnwidth}{@{}l@{\hspace*{1mm}}|@{\hspace*{1mm}}C@{\hspace*{1mm}}@{\hspace*{1mm}}C@{\hspace*{1mm}}|@{\hspace*{1mm}}C@{\hspace*{1mm}}@{\hspace*{1mm}}C@{\hspace*{1mm}}}
		\toprule
		& \multicolumn{2}{@{\hspace*{1mm}}c|@{\hspace*{1mm}}}{Schulze} &  \multicolumn{2}{c}{Ranked pairs} \\
		& Const. & Destr. & Constr. & Destr. \\
		\midrule
		Exact\phantom{C}AC+DC & NP-c.$^\spadesuit$   & ? & NP-c.$^\spadesuit$   &  NP-c.$^\heartsuit$ \\
		Exact\phantom{C}RC    & NP-c.$^\spadesuit$   & ? & NP-c.$^\spadesuit$   &  NP-c.$^\heartsuit$ \\
		\phantom{ExactC}RC     & NP-c.$^\diamondsuit$ & ? & NP-c.$^\diamondsuit$ &  NP-c.$^\heartsuit$ \\
		
		Exact\phantom{C}AV+DV & NP-c.$^\star$ &  NP-c.$^\star$ &  NP-c.$^\circ$ &  NP-c.$^\circ$ \\
		\phantom{ExactC}RV & NP-c.$^\star$ &  NP-c.$^\star$ &  NP-c.$^\circ$ &  NP-c.$^\circ$ \\
		\bottomrule		
	\end{tabularx}
\end{table}

\begin{lemma}\label{lem:ibc-plus-ccdc}
	Let $\mathcal{E}$ be a voting rule that satisfies IBC. In both the unique- and nonunique-winner model, if $\mathcal{E}$-\textsc{CCDC} is $\np$-hard, then so are $\mathcal{E}$-\textsc{ECCAC+DC} and $\mathcal{E}$-\textsc{ECCRC}.
\end{lemma}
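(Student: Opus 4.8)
The plan is to reduce $\mathcal{E}$-\textsc{CCDC} to each of $\mathcal{E}$-\textsc{ECCAC+DC} and $\mathcal{E}$-\textsc{ECCRC}, using the IBC property to ``pad'' an instance so that deleting a prescribed exact number of candidates (and, for the multimode version, adding a prescribed exact number) becomes available without changing the answer. First I would take an instance $((C,V),p,\ell)$ of $\mathcal{E}$-\textsc{CCDC}. Recall this asks whether $p$ can be made the unique winner of the $\mathcal{E}$ election by deleting \emph{at most} $\ell$ candidates. The idea is that ``at most $\ell$'' can be turned into ``exactly $\ell$'' by introducing $\ell$ fresh dummy candidates that are always bottom-ranked, so that deleting some of them is harmless by IBC and lets us pad any smaller deletion set up to size exactly $\ell$.

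For $\mathcal{E}$-\textsc{ECCRC}: introduce a set $D$ of $\ell$ new candidates $d_1,\dots,d_\ell$ and a set $E$ of $\ell$ further new candidates $e_1,\dots,e_\ell$; form the election on candidate set $C \cup D$ by appending $\overrightarrow{D}$ (all of $D$ in lexicographic order) to the bottom of each vote of $V$, call the result $V^D$. The $\mathcal{E}$-\textsc{ECCRC} instance has in-candidates $C \cup D$, spare candidates $E$, votes $V^D$ (extended also over $E$ as bottom-ranked in the prescribed tie-broken order), distinguished candidate $p$, and bound $\ell$. One direction: if $p$ can be made unique winner of $(C,V)$ by deleting $C' \subseteq C$ with $|C'| \le \ell$, then replace $C'$ together with $\ell - |C'|$ of the $d_i$'s by $\ell - |C'|$ of the $e_j$'s (and $|C'|$ more $e_j$'s); since all the $d_i$ and $e_j$ were bottom-ranked, by IBC applied repeatedly the resulting election has the same winner as $(C \setminus C', V|_{C\setminus C'})$, namely $p$. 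The converse direction uses that the $d_i$ and $e_j$ candidates, being bottom-ranked, can never help $p$ become the unique winner: any successful replacement set restricted to $C$ is a $\mathcal{E}$-\textsc{CCDC} solution of size $\le \ell$, again by IBC (stripping bottom-ranked candidates one at a time preserves winners).

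For $\mathcal{E}$-\textsc{ECCAC+DC}: set $\lDC = \ell$ and $\lAC = \ell$; take the in-candidates to be $C \cup D$ with $D = \{d_1,\dots,d_\ell\}$ fresh and bottom-ranked as above, the add-pool to be $E = \{e_1,\dots,e_\ell\}$ fresh and bottom-ranked, distinguished candidate $p$. The forward direction mirrors the replacement case: pad a $\mathcal{E}$-\textsc{CCDC} solution $C'$ by deleting $\ell - |C'|$ of the $d_i$ (so exactly $\ell$ deletions), and add exactly $\ell$ of the $e_j$; IBC guarantees the winner is unchanged. For the converse, note that candidates added from $E$ are bottom-ranked, hence by IBC removing them in thought changes nothing; and candidates of $D$ left undeleted are bottom-ranked, hence likewise irrelevant; so a successful exact action yields, after intersecting the deleted set with $C$, a deletion of at most $\ell$ candidates from $(C,V)$ making $p$ unique winner. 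Both reductions are clearly polynomial-time, and they work verbatim in both winner models since IBC is model-independent and the argument never relied on uniqueness beyond what the source problem already stipulates.

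The main obstacle I anticipate is the careful bookkeeping in the converse directions: one must verify that IBC, which is stated for adding a \emph{single} bottom-ranked candidate, composes correctly so that an arbitrary number of bottom-ranked dummy candidates (both the never-touched ones and the added ones) can be deleted from consideration without affecting the winner set — and that the tie-breaking rule used for ranked pairs is consistent with placing these dummies at the bottom in a fixed order. A secondary subtlety is ensuring that in the exact multimode case the chair cannot ``cheat'' by using the add-action on $E$ to somehow displace $p$; but since every $e_j$ sits strictly below every original candidate in every vote, no $e_j$ can out-perform $p$, so this cannot occur. Once these routine-but-delicate IBC-composition checks are in place, the lemma follows.
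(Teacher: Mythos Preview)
Your proposal is correct and follows essentially the same approach as the paper: reduce $\mathcal{E}$-\textsc{CCDC} to the exact variants by padding the original candidate set with $\ell$ bottom-ranked dummy candidates (so any deletion of size at most $\ell$ can be topped up to exactly $\ell$) and making the add-pool consist of further bottom-ranked dummies, with IBC guaranteeing that none of the dummies affects the winner. The only cosmetic difference is that the paper lets $\ell_{AC}$ be arbitrary rather than fixing it to $\ell$, and it does not dwell on the IBC-composition and tie-breaking issues you flag (which are indeed routine, and in any case the lemma is stated for arbitrary IBC rules, not just ranked pairs).
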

\begin{proof}
	We reduce $\mathcal{E}$-\textsc{CCDC} to $\mathcal{E}$-\textsc{ECCAC+DC} and $\mathcal{E}$-\textsc{ECCRC}.
	Let $(C,V,p,k)$ be an instance of $\mathcal{E}$-\textsc{CCDC}. 
	Define $C' = C \cup X$ with $X= \{x_1, \ldots, x_k\}$, $D= \{d_1, \ldots, d_{\ell_{AC}}\}$, $\ell_{DC} = \ell_{RC} = k$, and set $\ell_{AC} \in \mathbb{N}$ arbitrarily.
	Let $V' = v\,X\,D$ for every $v \in V$, i.e., add all candidates from $X$ at the bottom of every vote and then add all candidates from $D$ at the bottom of those votes.
	Construct an instance $(C', D, V', p, \ell_{AC}, \ell_{DC})$ of $\mathcal{E}$-\textsc{ECCAC+DC} and an instance $((C', V'), p, \ell_{RC})$ of $\mathcal{E}$-\textsc{ECCRC}.
	
	Assume we have a yes-instance of $\mathcal{E}$-\textsc{CCDC}. Then there exists a set $C^d \subset C$ with $\vert C^d \vert \leq \ell_{DC}$ such that $p$ wins the election $(C \setminus C^d,V)$.
	Delete all candidates in $C^d$ and some candidates in $X^d \subseteq X$ such that exactly $\ell_{DC}$ candidates were deleted and add $\ell_{AC}$ arbitrary candidates $D^a \subseteq D$ to the election.
	Since $\mathcal{E}$ is IBC, candidate $p$ is a winner of the election $(C' \cup D^a \setminus (C^d \cup X^d), V')$ and $\mathcal{E}$-\textsc{ECCAC+DC} is also a yes-instance.
	
	Now assume, we have a no-instance of $\mathcal{E}$-\textsc{CCDC}.
	Then there is no set $C^d \subset C$ with $\vert C^d \vert \leq k$ such that $p$ wins the election $(C \setminus C^d,V)$.
	Since $\mathcal{E}$ is IBC, deleting any set of candidates from $X$ or adding any candidate from $D$ has no influence on the winners of the election.
	Thus, to make $p$ an $\mathcal{E}$ winner, we need to find a set of at most $\ell_{DC}$ candidates in $C$ to delete, which is impossible as $\mathcal{E}$-\textsc{CCDC} is a no-instance.
	
	The argument for $\mathcal{E}$-\textsc{ECCRC} is analogous.
\end{proof}

\begin{lemma}\label{lem:ibc-plus-dcdc}
	Let $\mathcal{E}$ be a voting rule that satisfies IBC. In both the unique- and nonunique-winner model, if $\mathcal{E}$-\textsc{DCDC} is $\np$-hard, then so are $\mathcal{E}$-\textsc{EDCAC+DC} and $\mathcal{E}$-\textsc{EDCRC}.
\end{lemma}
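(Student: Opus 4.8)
The plan is to mirror the reduction used for the constructive case in Lemma~\ref{lem:ibc-plus-ccdc}, now starting from $\mathcal{E}$-\textsc{DCDC}. Given an instance $(C,V,p,k)$ of $\mathcal{E}$-\textsc{DCDC}, I would introduce $k$ fresh \emph{padding} candidates $X=\{x_1,\dots,x_k\}$ to absorb the exact-deletion requirement, together with a pool $D=\{d_1,\dots,d_{\ell_{AC}}\}$ of fresh candidates (with $\ell_{AC}$ chosen large enough, in particular $\ell_{AC}\ge k$ for the replacing variant) to absorb the exact-addition/replacement requirement, appending all of $X$ and then all of $D$ at the bottom of every vote in $V$ to obtain $V'$. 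Setting $\ell_{DC}=\ell_{RC}=k$, this yields an instance $(C\cup X, D, V', p, \ell_{AC}, \ell_{DC})$ of $\mathcal{E}$-\textsc{EDCAC+DC} and an instance $((C\cup X, V'), p, \ell_{RC})$ of $\mathcal{E}$-\textsc{EDCRC}. Since $p\notin X\cup D$, the distinguished candidate is never a deletion, addition, or replacement target, which is consistent with the destructive-control convention.

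The correctness argument rests entirely on IBC. Every candidate in $X\cup D$ is bottom-ranked in every vote of $V'$, so by repeated application of insensitivity to bottom-ranked candidates, adding any subset of $D$ to, or deleting any subset of $X$ from, $(C\setminus C^d, V')$ leaves the winner set unchanged for every $C^d\subseteq C\setminus\{p\}$. For the forward direction, from a yes-instance of $\mathcal{E}$-\textsc{DCDC} I take a deletion set $C^d\subseteq C\setminus\{p\}$ with $|C^d|\le k$ witnessing that $p$ is not the unique winner of $(C\setminus C^d,V)$; I then delete $C^d$ together with exactly $k-|C^d|$ candidates from $X$ (a total of $\ell_{DC}$ deletions) and add exactly $\ell_{AC}$ candidates from $D$ (for the replacing variant, instead replace $C^d$ plus $k-|C^d|$ members of $X$ by $k$ members of $D$). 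By IBC the resulting election has the same winners as $(C\setminus C^d,V)$, so $p$ is still not the unique winner, giving a yes-instance. For the converse, from a no-instance of $\mathcal{E}$-\textsc{DCDC} I observe that any exact control action deletes exactly $\ell_{DC}=k$ candidates, say $C^d$ from $C\setminus\{p\}$ and the remainder from $X$, whence $|C^d|\le k$, and adds only members of $D$; by IBC the winner set of the controlled election equals that of $(C\setminus C^d,V)$, in which $p$ is the unique winner because the original is a no-instance. Hence no exact control action can succeed, so we have a no-instance, and the same accounting applies verbatim to $\mathcal{E}$-\textsc{EDCRC}.

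The reduction is plainly polynomial-time computable and, as with Lemma~\ref{lem:ibc-plus-ccdc}, works uniformly in the unique- and nonunique-winner models, because IBC preserves the entire winner set and hence both the ``is the unique winner of'' and ``is a winner of'' predicates. I do not expect a real obstacle here; the only points needing a little care are (i) making the padding pools large enough (take $|X|=k$ and $|D|\ge k$), (ii) checking that $p$ is never removed, so that the translated deletion or replacement sets are legal $\mathcal{E}$-\textsc{DCDC} witnesses, and (iii) noting that in the destructive setting ``success'' means that $p$ \emph{fails} to be the unique winner, a property that IBC transports exactly as well as ordinary winning.
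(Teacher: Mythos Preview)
Your proposal is correct and follows precisely the approach the paper intends: the paper's own proof of Lemma~\ref{lem:ibc-plus-dcdc} simply states that the same construction and proof idea from Lemma~\ref{lem:ibc-plus-ccdc} works, and you have spelled out exactly that adaptation to the destructive setting. The additional care you take regarding the size of the padding pools, the fact that $p$ is never a deletion target, and the observation that IBC transports the ``not the unique winner'' predicate are all appropriate and consistent with the paper's reasoning.
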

The same construction and proof idea used in the proof of Lemma~\ref{lem:ibc-plus-ccdc} works for Lemma~\ref{lem:ibc-plus-dcdc} as well.

\begin{lemma}\label{lem:schulze-ibc}
	Schulze and ranked pairs are insensitive to bottom-ranked candidates.
\end{lemma}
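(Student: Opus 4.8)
The plan is to verify IBC directly from the definitions of the two voting rules, using the key fact that adding a candidate $x$ as the bottom-ranked alternative in every vote leaves all pairwise comparisons among the old candidates untouched. First I would record the elementary observation: if $(C \cup \{x\}, V^x)$ is obtained from $(C,V)$ by appending $x$ at the bottom of each vote, then $\pairwise{c}{d} = \pairwise[V^x]{c}{d}$ for all $c,d \in C$, while $\pairwise[V^x]{c}{x} = n$ and $\pairwise[V^x]{x}{c} = 0$ for every $c \in C$, where $n = |V|$. Consequently $\pairwiseDiff[V^x]{c}{x} = n > 0$ for all $c \in C$ (or, if one wants strictness, note that the old winner set is nonempty and $n \geq 1$ in any nontrivial instance; the degenerate empty-vote case is handled separately). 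In WMG terms: the new vertex $x$ has every edge pointing into it with the maximum possible positive weight $n$, and no edge leaving it.

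For Schulze, the task is to show that the winner set is unchanged. I would argue that for any $c,d \in C$, the strongest path $P(c,d)$ is the same in both elections: any path in the new WMG that passes through $x$ must use an outgoing edge of $x$, but $x$ has no outgoing edges (all its incident edges are incoming), so $x$ can only be an endpoint of a path, never an interior vertex; hence every $c$-to-$d$ path for $c,d \in C$ avoids $x$ and the value $P(c,d)$ is computed over exactly the same set of paths as before. Therefore the pairwise relation ``$P(c,d) > P(d,c)$'' is preserved on $C$, so every $c \in C$ that was a Schulze winner still is one, and no $c \in C$ that was not becomes one. It remains to check $x$ is not a Schulze winner in the new election: pick any $c \in C$ (the candidate set is nonempty); then $P(c,x) \geq \pairwiseDiff[V^x]{c}{x} = n$, whereas $P(x,c) = -\infty$ (no path leaves $x$) — or, being careful about conventions, $P(x,c)$ is at most the weight of some edge out of $x$, of which there are none, so $P(x,c) < P(c,x)$, and $x$ has positive in-degree in the comparison graph, hence is not a winner. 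So the Schulze winners of $(C \cup \{x\}, V^x)$ are exactly those of $(C,V)$.

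For ranked pairs, I would trace through the construction. Order the pairs $(c,d)$ with $c,d \in C \cup \{x\}$ by $\pairwiseDiff[V^x]{\cdot}{\cdot}$ using the fixed tie-breaking rule. The pairs $(c,x)$ for $c \in C$ all have the maximal weight $n$ and so are considered first (in tie-broken order); locking each of these in creates no cycle since $x$ is a sink so far, so all edges $(c,x)$ get added, making $x$ a sink in the final graph $G$. After that, the remaining pairs are exactly the pairs among $C$, in the same relative weight order (and same tie-broken order, assuming the fixed tie-breaking rule on $C$ is the restriction of the one on $C \cup \{x\}$, which is the standard assumption), and crucially a new edge $(c,d)$ with $c,d \in C$ creates a cycle in the current graph on $C \cup \{x\}$ if and only if it creates one in the corresponding current graph on $C$ — because $x$, being a sink, lies on no cycle. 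Hence the restriction of the final graph $G$ to $C$ is precisely the ranked pairs graph for $(C,V)$, so the source of $G$ equals the source of that graph, and $x$ (a sink with incoming edge from the source, in particular with positive in-degree) is not the source. This gives the same winner, which for this resolute formulation is the same singleton winner set.

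The main obstacle I anticipate is bookkeeping rather than depth: being careful about the conventions for $P(x,\cdot)$ when no path exists (one should either treat it as $-\infty$ or simply observe $x$ has out-degree zero), handling the trivial edge case of an empty election or $n = 0$, and stating cleanly that the fixed tie-breaking rule for ranked pairs behaves consistently under adding a bottom candidate (one should assume, as is standard, that the tie-breaking on $C$ is inherited from the tie-breaking on $C \cup \{x\}$). None of these is hard, but they are the points where a careless argument could have a gap, so I would call them out explicitly. I expect the whole proof to be short once the ``$x$ is a universal sink in the WMG'' observation is in hand.
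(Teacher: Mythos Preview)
Your proposal is correct and follows essentially the same approach as the paper: both arguments hinge on the observation that the new candidate $x$ is a universal sink in the WMG (all edges point into $x$ with maximal weight, none out), from which it follows that $x$ lies on no path between old candidates (Schulze) and on no cycle (ranked pairs), so the winner computation on $C$ is unaffected. Your write-up is somewhat more careful about edge cases (e.g., explicitly verifying $x$ is not a winner, and flagging the tie-breaking consistency assumption for ranked pairs), but the substance is the same.
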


\begin{corollary}\label{cor:schulze-ccrc}
	In both the unique- and nonunique-winner model, \schulzeproblem{CCRC}  and \rpproblem{CCRC} are \np-complete.
\end{corollary}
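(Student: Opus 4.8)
The plan is to obtain this as a short consequence of the machinery already assembled, namely the IBC property together with the known \np-completeness of control by deleting candidates. The clean way to do it is to invoke the principle of Loreggia et al.---a voting rule that is resistant to \textsc{CCDC} and satisfies IBC is also resistant to \textsc{CCRC}---but since that reference is phrased for a single winner model, I would re-derive it directly, using essentially the construction from the proof of Lemma~\ref{lem:ibc-plus-ccdc}. Given a \textsc{CCDC} instance $(C,V,p,k)$, I would build a \textsc{CCRC} instance with registered candidates $C\cup X$, where $X=\{x_1,\dots,x_k\}$ is fresh, unregistered candidates $D=\{d_1,\dots,d_k\}$, votes $v\,X\,D$ for each $v\in V$ (so that every candidate of $X$ lies below every candidate of $C$, and every candidate of $D$ below every candidate of $X$), the same distinguished candidate $p$, and replacement bound $k$.

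The correctness argument has two directions, mirroring Lemma~\ref{lem:ibc-plus-ccdc}. If $p$ can be made a winner of $(C\setminus C^d,V)$ for some $C^d\subseteq C$ with $\lvert C^d\rvert\le k$, then in the constructed instance I replace $C^d$ by an equally large subset of $D$; the added candidates are bottom-ranked, so by IBC (Lemma~\ref{lem:schulze-ibc}), invoked one new candidate at a time, $p$ remains a winner, and the padding candidates of $X$ likewise do not affect the winner set. Conversely, any successful replacement in the constructed instance deletes a set $C''\subseteq C\cup X$ with $\lvert C''\rvert\le k$ and adds only bottom-ranked candidates from $D$; applying IBC again reduces this to deleting $C''\cap C$ (of size at most $k$) from $(C,V)$, so the \textsc{CCDC} instance is a yes-instance. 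Because IBC is stated in terms of the winner \emph{set}, "$p$ is a winner" and "$p$ is the unique winner" are both preserved, so the reduction works verbatim in the unique- and the nonunique-winner model.

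Plugging in the ingredients then finishes the proof. Lemma~\ref{lem:schulze-ibc} supplies IBC for Schulze and ranked pairs; Theorems~\ref{thm:schulze-ccdc} and~\ref{thm:schulze-ccdc-unique} give \np-hardness of Schulze-\textsc{CCDC} in the nonunique- and unique-winner models; and the results of Parkes and Xia recorded in Table~\ref{tab:controlresults} (which hold in both winner models) give \np-hardness of \rpproblem{CCDC}. The reduction above then transfers \np-hardness to \schulzeproblem{CCRC} and \rpproblem{CCRC} in both models. Membership in \np\ is routine: a certificate is the pair $(C',D')$ with $\lvert C'\rvert=\lvert D'\rvert\le\ell$, and one checks in polynomial time whether $p$ is the (unique) winner of the election on $(C\setminus C')\cup D'$, using the polynomial-time winner-determination procedures for Schulze and for ranked pairs with the fixed tie-breaking rule. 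The only point demanding care---and the single thing I would be explicit about in the write-up---is that the fresh candidates really are strictly below all original candidates in every vote, so that IBC can be applied iteratively both when padding the election and when peeling it back; beyond that the argument is short enough to state as a corollary.
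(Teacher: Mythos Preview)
Your proposal is correct and follows essentially the same approach as the paper: IBC for Schulze and ranked pairs (Lemma~\ref{lem:schulze-ibc}) combined with the known \np-hardness of \textsc{CCDC} for both rules, then the Loreggia et al.\ principle to transfer hardness to \textsc{CCRC}, and finally routine \np\ membership. The only difference is one of presentation---the paper simply cites Loreggia et al.\ for the \textsc{CCDC}$\to$\textsc{CCRC} step, whereas you re-derive that reduction explicitly via the Lemma~\ref{lem:ibc-plus-ccdc} construction and spell out why it goes through in both winner models; this extra care is sound but not a different route.
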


\begin{corollary}\label{cor:schulze-eccrc}
	In both the unique- and nonunique-winner model, \schulzeproblem{ECCAC+DC}, \schulzeproblem{ECCRC}, \rpproblem{ECCAC+DC}, and \rpproblem{ECCRC} are \np-complete.
\end{corollary}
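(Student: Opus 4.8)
The plan is to assemble the corollary directly from the machinery already built in this section. Lemma~\ref{lem:ibc-plus-ccdc} says that whenever a voting rule $\mathcal{E}$ satisfies IBC and $\mathcal{E}$-\textsc{CCDC} is $\np$-hard, then $\mathcal{E}$-\textsc{ECCAC+DC} and $\mathcal{E}$-\textsc{ECCRC} are $\np$-hard as well, in both winner models; and Lemma~\ref{lem:schulze-ibc} tells us that Schulze and ranked pairs are both insensitive to bottom-ranked candidates. So the only missing input is $\np$-hardness of the underlying \textsc{CCDC} problems.

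First I would supply that input. For Schulze, Theorems~\ref{thm:schulze-ccdc} and~\ref{thm:schulze-ccdc-unique} give $\np$-completeness of \schulzeproblem{CCDC} in the nonunique- and the unique-winner model, respectively. For ranked pairs, I would cite the result of Parkes and Xia~\shortcite{par-xia:c:strategic-schulze-ranked-pairs} recorded in Table~\ref{tab:controlresults} that \rpproblem{CCDC} is $\np$-complete, noting (as the caption of that table states) that this holds in both winner models. Feeding each of these four hardness facts into Lemma~\ref{lem:ibc-plus-ccdc}, using Lemma~\ref{lem:schulze-ibc} to discharge the IBC hypothesis, immediately yields the $\np$-hardness of \schulzeproblem{ECCAC+DC}, \schulzeproblem{ECCRC}, \rpproblem{ECCAC+DC}, and \rpproblem{ECCRC} in both winner models.

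For membership in $\np$, I would note that a solution to any of these problems consists of the sets of candidates to add and delete (or to replace) plus, for the multimode variants, the specification of which votes are bribed and how — all of polynomial size. Since winner determination is polynomial-time solvable for Schulze and for ranked pairs under the fixed tie-breaking rule used here, and since the exactness constraints ($\vert D'\vert=\lAC$, $\vert C'\vert=\lDC$, etc.) and the equal-cardinality constraint for replacing are trivially checkable, such a guess can be verified in polynomial time. I do not anticipate a genuine obstacle; the only point demanding care is the bookkeeping of winner models, i.e., making sure the cited \textsc{CCDC} hardness result is invoked in whichever model is being claimed and that Lemma~\ref{lem:ibc-plus-ccdc} is applied in the matching model.
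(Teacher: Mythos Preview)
Your proposal is correct and follows essentially the same route as the paper: invoke Lemma~\ref{lem:schulze-ibc} for IBC, feed the known \np-hardness of \textsc{CCDC} for Schulze (Theorems~\ref{thm:schulze-ccdc} and~\ref{thm:schulze-ccdc-unique}) and ranked pairs (Parkes and Xia) into Lemma~\ref{lem:ibc-plus-ccdc}, and observe \np-membership. One nitpick: neither \textsc{ECCAC+DC} nor \textsc{ECCRC} involves bribery, so the phrase about specifying bribed votes is superfluous here, though harmless.
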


\begin{corollary}\label{cor:rp-dcrc}
	In both the unique- and nonunique-winner model, \rpproblem{EDCAC+DC}, \rpproblem{EDCRC}, and \rpproblem{DCRC} are \np-complete.
\end{corollary}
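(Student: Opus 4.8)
The plan is to treat membership in \np\ once and for all, and then obtain all three hardness claims from the \np-hardness of \rpproblem{DCDC} (Table~\ref{tab:controlresults}, due to Parkes and Xia) together with the fact that ranked pairs is insensitive to bottom-ranked candidates (Lemma~\ref{lem:schulze-ibc}). For membership: since we fix a tie-breaking scheme, a ranked pairs winner is computable in polynomial time, so in each case one guesses the polynomial-size control action---the candidate sets $C'$ and $D'$ and, for \rpproblem{EDCAC+DC}, also the deleted and added voter sublists together with the $\lB$ bribed votes---and checks in polynomial time that the despised candidate is not the (unique) winner of the resulting election. This works identically in both winner models.

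For \rpproblem{EDCAC+DC} and \rpproblem{EDCRC} I would simply invoke Lemma~\ref{lem:ibc-plus-dcdc} with $\mathcal{E}$ instantiated as ranked pairs: Lemma~\ref{lem:schulze-ibc} supplies IBC, and \rpproblem{DCDC} is \np-hard by Table~\ref{tab:controlresults}, so Lemma~\ref{lem:ibc-plus-dcdc} yields \np-hardness of both problems in both winner models.

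For the nonexact problem \rpproblem{DCRC} there is no free reduction from the exact variant, so I would reduce from \rpproblem{DCDC} directly, using the destructive, exactness-free analogue of the reduction behind Corollary~\ref{cor:schulze-ccrc} (equivalently, of Lemma~\ref{lem:ibc-plus-ccdc}). Given a \rpproblem{DCDC} instance $((C,V),w,\ell)$, introduce a set $X$ of $\ell$ fresh dummy candidates appended at the bottom of every vote, a set $D$ of $\ell$ fresh unregistered dummy candidates ranked below $X$ everywhere, keep $w$ as the despised candidate, and set the replacement bound to $\ell$. Because ranked pairs is IBC, the candidates of $X$ and $D$ never affect the winner set, so replacing $C'\subseteq C\cup X$ by $D'\subseteq D$ with $\vert C'\vert=\vert D'\vert\le\ell$ has exactly the effect on winners of deleting the at-most-$\ell$ real candidates $C'\cap C$, and conversely deleting $C^d\subseteq C$ with $\vert C^d\vert\le\ell$ is simulated by replacing $C^d$ (topped up with enough members of $X$) by members of $D$. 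Hence $w$ is dethronable by at most $\ell$ replacements iff it is dethronable by at most $\ell$ deletions, which gives the reduction; as the whole construction is bottom-padding, it is correct in both winner models by the definition of IBC.

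The main obstacle is bookkeeping rather than a deep difficulty: I must (i) make $\vert X\vert$ and $\vert D\vert$ large enough that the replacement operation is never blocked for want of dummies, (ii) verify that the IBC argument really transfers to the \emph{unique}-winner model---i.e., that bottom-padding neither creates nor destroys uniqueness of a winner---which is precisely what IBC guarantees, and (iii) check that forbidding the despised candidate from being deleted or replaced is handled consistently on both sides of the DCDC$\to$DCRC reduction. All three points are routine once Lemmas~\ref{lem:ibc-plus-dcdc} and~\ref{lem:schulze-ibc} are in hand.
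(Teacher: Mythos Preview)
Your proposal is correct and takes essentially the same route as the paper: ranked pairs is IBC (Lemma~\ref{lem:schulze-ibc}), \rpproblem{DCDC} is \np-hard (Parkes and Xia), Lemma~\ref{lem:ibc-plus-dcdc} then handles the two exact variants, and \rpproblem{DCRC} follows from DCDC via IBC bottom-padding. The only real difference is presentational: for \rpproblem{DCRC} the paper simply cites the Loreggia et al.\ IBC argument (originally stated for the constructive case) rather than spelling out the destructive analogue as you do; your explicit reduction is arguably the more careful of the two, and the padding set $X$ is in fact unnecessary for the \emph{nonexact} replacing variant (only $D$ is needed), though including it is harmless. One small slip: in your \np-membership argument you say that for \rpproblem{EDCAC+DC} one also guesses voter sublists and bribed votes, but \textsc{EDCAC+DC} is purely candidate control, so the certificate is just $C'$ and~$D'$.
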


Schulze and ranked pairs are equally resistant to all constructive and destructive control actions of the voter list considered in this paper.

\begin{theorem}\label{theo:votercontrol}
	In both the unique- and nonunique-winner model, \schulzeproblem{ECCAV+DV}, \schulzeproblem{CCRV}, \schulzeproblem{EDCAV+DV}, and \schulzeproblem{DCRV} are \np-complete.
\end{theorem}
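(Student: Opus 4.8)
The plan is to prove membership in \np\ routinely and to obtain \np-hardness for each of the four Schulze problems by reduction from problems that are already known to be \np-hard. Membership is immediate: for each problem one guesses the (polynomial-size) set of added and/or deleted, resp.\ replaced, voters together with the choice of replacements, and then checks in polynomial time whether $p$ is the (unique) Schulze winner, using that Schulze winner determination is polynomial-time computable even in the presence of ties.

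For \schulzeproblem{ECCAV+DV} and \schulzeproblem{EDCAV+DV} I would first note that the \emph{non-exact} problems \schulzeproblem{CCAV+DV} and \schulzeproblem{DCAV+DV} are already \np-hard, since setting $\lDV=0$ recovers plain \schulzeproblem{CCAV} and setting $\lAV=0$ recovers \schulzeproblem{DCAV}, both of which are \np-complete in both winner models by Parkes and Xia~\shortcite{par-xia:c:strategic-schulze-ranked-pairs}; by the Turing reduction from the non-exact to the exact variant observed in Section~\ref{sec:prelim}, \np-hardness of the two exact multimode problems follows. To also get a clean many-one reduction for the supplement I would re-run the \xthreec-based construction of Parkes and Xia directly: since any exact cover uses exactly $q$ of the $3q$ triples, the number of voters a successful chair must add (resp.\ delete) is forced to be exactly the cover size, so the construction is ``automatically exact,'' and fixing the complementary bound ($\lDV$ or $\lAV$) to $0$ produces an instance of the exact multimode problem; one then only verifies that permitting zero deletions/additions cannot introduce spurious solutions, which is immediate.

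For \schulzeproblem{CCRV} and \schulzeproblem{DCRV} the key observation is that replacing $\ell$ voters amounts to simultaneously deleting $\ell$ registered voters and adding $\ell$ unregistered ones, so I would reduce from \schulzeproblem{CCAV} (resp.\ \schulzeproblem{DCAV})---again via the \xthreec-construction---by padding the \emph{registered} voter list with a block $Q$ of $q$ dummy voters engineered so that the chair is compelled to replace exactly $Q$: the dummies rank $p$ at the very bottom (in the destructive case, rank the despised candidate at the top), so keeping any of them is strictly harmful to the chair, while the original unregistered pool consists of the triple-voters (padded, if necessary, with further copies of a $p$-favouring vote so that enough replacements are available). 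A solution of the \xthreec\ instance then translates into replacing $Q$ by the $q$ triple-voters of an exact cover, reproducing exactly the winning election of the underlying \schulzeproblem{CCAV} instance; this is in the spirit of the reductions of Loreggia et al.~\shortcite{lor-nar-ros-ven-wal:c:replacing-candidates} and Erd\'elyi et al.~\shortcite{erd-nev-reg-rot-yan-zor:j:towards-completing-the-puzzle}. The step I expect to be the main obstacle is the soundness (``only if'') direction of the replacement reductions: unlike in pure voter addition, the chair may also delete \emph{original} (fixed or triple) voters or replace only part of $Q$, and one must show that no such move can make $p$ win (resp.\ dethrone the despised candidate) unless the \xthreec\ instance has an exact cover---this forces a careful choice of the dummy votes and of the padding of $U$ and an argument through the weighted majority graph in the style of the proof of Theorem~\ref{thm:schulze-ccdc}. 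Because both the Parkes--Xia constructions and the padding above work uniformly, all four statements hold in the unique- as well as the nonunique-winner model.
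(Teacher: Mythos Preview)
Your plan is genuinely different from the paper's. The paper does \emph{not} piggyback on Parkes and Xia's \textsc{CCAV}/\textsc{DCAV} reductions; it gives one direct reduction from \textsc{RX3C} that simultaneously serves all four problems. The candidate set is $B\cup\{p,w\}$, the WMG is engineered so that $w$ is the unique winner and the only way to help $p$ (equivalently, to dethrone $w$) is to push every edge $(b_j,p)$ down to the weight of $(p,w)$. The registered list contains $s$ ``bad'' votes $w\,B\,p$ and the unregistered list contains one vote $S_i\,p\,(B\setminus S_i)\,w$ per triple $S_i$. Because the edge weights are tuned to $2\lConst+2\lDV+2\lAV-2$, the chair must delete exactly $s$ registered votes \emph{and} add exactly $s$ unregistered votes, and the added votes must hit every $b_j$ exactly once---i.e., form an exact cover. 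With $\lAV=\lDV=\lRV=s$, the same instance is a correct reduction for \textsc{ECCAV+DV}, \textsc{EDCAV+DV}, \textsc{CCRV}, and \textsc{DCRV} in one shot; a one-line tweak of the edge weight handles the unique-winner model.

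About your route: the Turing-reduction step is fine for Turing-hardness but not for many-one \np-hardness, and you are right to add a many-one argument. Two concrete gaps remain. First, a small slip: setting $\lAV=0$ in \textsc{DCAV+DV} yields \textsc{DCDV}, not \textsc{DCAV} (harmless, since both are \np-hard for Schulze). Second, and more seriously, ``the Parkes--Xia construction is automatically exact'' is not obvious and needs checking: ordinary \xthreec\ reductions to \textsc{CCAV} typically allow adding \emph{at most} $q$ voters, and a yes-instance may be witnessed by fewer if the instance is degenerate; forcing exactly $q$ is exactly why the paper switches to \textsc{RX3C}. For \textsc{CCRV}/\textsc{DCRV} you correctly flag the soundness direction of the padding reduction as the crux, and indeed it is: with dummy votes the chair can mix deletions of originals and dummies, and ruling out all such mixes in the Schulze WMG is delicate. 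The paper's unified construction sidesteps both issues because the arithmetic of the edge weights \emph{forces} the chair to use all of $\lAV$ and all of $\lDV$, and because the only useful votes to delete are the planted $w\,B\,p$ votes; that is what your padding argument would ultimately have to reproduce.
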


\sproofsketch
We reduce from  \textsc{Restricted Exact Cover By 3-Sets} (\rxthreec)~\cite{gon:j:exact-cover-constrained}:
Given a set $B = \{b_1, \ldots, b_{3s}\}$ with $s \geq 1$ and a list $\mathcal{S} = \{S_1, \ldots, S_{3s}\}$, where $S_i = \{b_{i,1}, b_{i,2}, b_{i,3}\}$ and $S_i \subseteq B$ for all $S_i \in \mathcal{S}$ and each $b_j$ is contained in exactly three sets $S_i \in \mathcal{S}$, does there exist an exact cover, i.e., a sublist $\mathcal{S}' \subseteq \mathcal{S}$ such that each $b_i \in B$ occurs in exactly one $S_i \in \mathcal{S}'$?

Let $\lAV = \lDV = s$ for the \schulzeproblem{ECCAV+DV} and \schulzeproblem{EDCAV+DV} instances we construct, and let $\lRV = s$ for the \schulzeproblem{CCRV} and \schulzeproblem{DCRV} instances.
Further, let $\lConst \gg s$ be a constant much greater than~$s$.\footnote{The precise value of $\lConst$ is not important; all that matters is that when used as an edge weight in a WMG, $\lConst$ is large enough, such that any edge changed by the control actions must not change in direction, i.e., the sign of the edge weight must not flip.
	Recall that the strength of a path in a WMG is specified as the weight of the weakest edge on the path.}
From $(B,\mathcal{S})$ we construct an election $(C,V)$ as depicted in Figure~\ref{fig:ccrv-election-construction}.
Note that candidate $w$ is the unique winner of the election.
	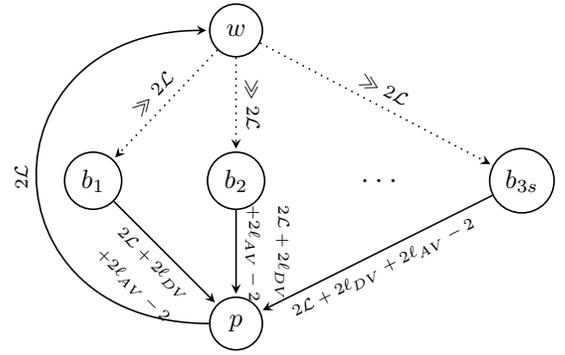
\begin{figure}
		\centering
		\begin{tikzpicture}[node distance=1.9cm, semithick]
			\tikzstyle{voteedge}=[->,>=stealth,shorten >=1pt]
			\tikzstyle{gt2ledge}=[->,>=stealth,shorten >=1pt, dotted]
			\tikzstyle{candidate}=[minimum size=0.7cm, draw, circle]
			\tikzstyle{dots}=[draw=none]
			
			\node[candidate](b2) 							{$b_2$};
			\node[candidate](b1) 		[left of=b2] 		{$b_1$};
			\node[dots](bdots) 			[right of=b2] 		{\Large$\ldots$};
			\node[candidate](b3s) 		[right of=bdots] 	{$b_{3s}$};
			\node[candidate](p) 		[below of=b2] 		{$p$};
			
			\node[candidate, node distance=2cm](w)         [above of=b2] {$w$};
			
			\draw[gt2ledge] (w) -- node[midway, above, sloped] {\scriptsize{$\gg 2\lConst$}} (b1);
			\draw[gt2ledge] (w) -- node[midway, above, sloped] {\scriptsize{$\gg 2\lConst$}} (b2);
			\draw[gt2ledge] (w) -- node[midway, above, sloped] {\scriptsize{$\gg 2\lConst$}} (b3s);
			
			\draw[voteedge,looseness=2] (p) to [bend left=90] node[midway, above, sloped] {\scriptsize{$2\lConst$}} (w);			
			
			\draw[voteedge] (b1) -- node[align=left, sloped, midway, below] {{\tiny $2\lConst+2\lDV$}\\ {\tiny $+2\lAV-2$}} (p);
			\draw[voteedge] (b2) -- node[align=left, sloped, midway,above] {{\tiny $2\lConst+2\lDV$}\\ {\tiny $+2\lAV-2$}}(p);
			\draw[voteedge] (b3s) -- node[sloped, midway, below] {{\tiny $2\lConst+2\lDV+2\lAV-2$}} (p);			
		\end{tikzpicture}
		\vspace*{-0.5em}
		\caption{The WMG of the election from the proof of Theorem~\ref{theo:votercontrol}.}
		\label{fig:ccrv-election-construction}
	\end{figure}
	The list of additional votes $U$ contains one vote
	\[
	S_i\, p\, (B \setminus S_i)\, w
	\hspace{0.5cm} \text{for each}\ S_i \in \mathcal{S}.
	\]
	Let $p$ be the distinguished candidate for the constructive case and $w$ be the despised candidate for the destructive case.
	\eproofsketch
	
	By adapting the above construction, %
	we obtain the same results for ranked pairs. %
	
	\begin{theorem}\label{theo:rp-votercontrol}
		In both the unique- and nonunique-winner model, \rpproblem{ECCAV+DV}, \rpproblem{CCRV}, \rpproblem{EDCAV+DV}, and \rpproblem{DCRV} are \np-complete.
	\end{theorem}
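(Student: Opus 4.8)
\sproofsketch
The plan is to reuse, up to a minor tweak, the reduction from \rxthreec\ in the proof of Theorem~\ref{theo:votercontrol}. From an \rxthreec\ instance $(B,\mathcal{S})$ we build the same election $(C,V)$ with $C=\{w,p\}\cup B$ and with WMG as in Figure~\ref{fig:ccrv-election-construction} -- the only change being a small constant shift of the weight of each edge $b_j\patharrow p$ (e.g.\ replacing the $-2$ in $2\lConst+2\lDV+2\lAV-2$ by $-3$; see below) -- the same list $U$ of additional votes $S_i\,p\,(B\setminus S_i)\,w$, parameters $\lAV=\lDV=\lRV=s$, and we designate $p$ as the distinguished candidate (constructive case) and $w$ as the despised candidate (destructive case). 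Two remarks let a single construction settle all four problems. First, ranked pairs under the fixed tie-breaking rule is resolute, so the unique- and nonunique-winner models coincide. Second, under $\lAV=\lDV=\lRV=s$ the exact AV+DV variant and the RV variant prescribe the same operation on these instances -- adding exactly $s$ votes of $U$ and deleting exactly $s$ votes of $V$ -- and, as we shall see, the constructive and destructive goals are reached under one and the same condition on the resulting WMG. Membership in \np\ is immediate, since the ranked pairs winner is computable in polynomial time under a fixed tie-breaking rule.

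The heart of the matter is to redo the winner analysis for ranked pairs. Any WMG obtained from Figure~\ref{fig:ccrv-election-construction} by adding at most $s$ votes of $U$ and deleting at most $s$ votes of $V$ still has exactly three kinds of edge of weight comparable to $2\lConst$: the edges $w\patharrow b_j$ (weight $\gg 2\lConst$), the edges $b_j\patharrow p$, and the edge $p\patharrow w$; every edge among $b_1,\dots,b_{3s}$ keeps absolute weight $O(s)\ll\lConst$, and $p$ has no out-edge other than $p\patharrow w$. Running ranked pairs, all edges $w\patharrow b_j$ are locked in first (they form no cycle); next, in decreasing-weight order, come the edges $b_j\patharrow p$ and $p\patharrow w$; and only much later the low-weight edges among the $b_j$. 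A short case distinction shows that $p\patharrow w$ is locked in precisely when it is considered before every edge $b_j\patharrow p$, i.e.\ precisely when its weight strictly exceeds that of each $b_j\patharrow p$; in that case every later $b_j\patharrow p$ closes the triangle $p\patharrow w\patharrow b_j\patharrow p$ and is discarded, so $p$ is the unique source of the final acyclic graph and hence the ranked pairs winner, whereas otherwise $w$ remains the unique source and wins. Therefore the chair succeeds -- making $p$ the winner in the constructive case, preventing $w$ from winning in the destructive case -- exactly when the control action makes the weight of $p\patharrow w$ strictly larger than that of every $b_j\patharrow p$, simultaneously for all $3s$ indices $j$. This is the very quantitative condition handled in the Schulze proof, so the \rxthreec\ bookkeeping transfers: the constant is chosen so that adding the $s$ votes of $U$ that correspond to an exact cover $\mathcal{S}'$, together with deleting the $s$ designated votes of $V$ (those ranking $w$ and all of $b_1,\dots,b_{3s}$ above $p$), makes the weight of $p\patharrow w$ exceed that of every $b_j\patharrow p$ by exactly one; hence an exact cover yields a successful control action, and, conversely, any choice of $s$ added $U$-votes that over-covers some $b_j$ leaves the weight of $b_j\patharrow p$ at least that of $p\patharrow w$ no matter which $s$ votes of $V$ are deleted, so only exact covers can work.

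The one delicate check is that the low-weight edges among $b_1,\dots,b_{3s}$ -- perturbed both by the added $U$-votes and by the deleted $V$-votes, the latter possibly unbalancing the gadget votes that realise $V$ -- stay harmless: they remain of weight $O(s)\ll\lConst$, are processed only after all of $w\patharrow b_j$, $b_j\patharrow p$ and $p\patharrow w$, and, running only between the $b_j$, can neither give $p$ an in-edge nor create a $w$-to-$p$ path at the moment $p\patharrow w$ is considered; hence they never change the source of the final graph. I expect this bookkeeping -- together with confirming that $V$ can indeed be realised as a gadget construction with WMG of Figure~\ref{fig:ccrv-election-construction} that contains the required $s$ deletable ``$w$-over-$p$'' votes -- to be the main, though routine, obstacle. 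Once it is in place, the characterization above holds, and the reduction yields \np-completeness of all four problems in both winner models.
\eproofsketch
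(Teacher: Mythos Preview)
Your proposal is correct and follows essentially the same approach as the paper: reuse the \rxthreec\ reduction from Theorem~\ref{theo:votercontrol} and redo the winner analysis for ranked pairs. The paper differs in two implementation details: it imposes a large-weight total order $D(b_i,b_j)\gg 2\lConst$ for $i<j$ among the $B$-candidates (so these edges lock in first and are harmless by construction), and it resolves the equal-weight case $D(p,w)=D(b_j,p)$ by assuming a tie-breaking rule that favors~$p$; your alternative---leaving the $B$--$B$ edges small and arguing they are processed too late to affect the source, together with the $-3$ shift to force strict inequality---is equally valid and has the advantage of working for any fixed tie-breaking rule.
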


	\section{Conclusion and Future Work}\label{sec:conclusions}
	
	We have studied %
	electoral control for Schulze and ranked pairs elections:
	After fixing a flaw in the proof of Menton and Singh~\shortcite[Theorem~2.2]{men-sin:c:control-complexity-schulze} for \schulzeproblem{CCDC} and extending the new construction to the unique-winner model (Section~\ref{sec:control-ccdc}), 
	we study variants of $s$-$t$ vertex cuts in graphs that are related to destructive control by deleting candidates in Schulze elections, and re-establish a vulnerability result on the corresponding problem (Section~\ref{sec:variantsdestructivecontrolbydeletingcandidates}). 
	Finally, in Section~\ref{sec:controlbymultimodeandreplacing}, we established a number of resistance results for control by replacing candidates or voters and exact multimode control. %
	Hence, our work establishes both polynomial-time algorithms and \np-completeness results.
	Tables~\ref{tab:controlresults} and~\ref{tab:multimode-replacing-overview} provide a summary of our results in the context of related known results.
		
	However, multiple variants of destructive control of the candidate set (see Section~\ref{sec:variantsdestructivecontrolbydeletingcandidates}),
	such as destructive control by adding candidates both in the unique- and nonunique-winner model and by deleting candidates in the unique-winner model, remain open for Schulze elections.
	Lastly, to the best of our knowledge, most cases of control by partition %
	are yet to be solved for ranked pairs elections.

\section*{Ethical Statement}

The subject of this paper---electoral control---is an ethically sensitive topic.
The goal of our work is not to help strategic parties to control elections but to `audit' some voting rules with respect to their vulnerability to electoral control.
In particular, our resistance results are clearly helpful to society since they may help protect society from attacks against elections, and our vulnerability results are clearly helpful to society since they can be used to
avoid voting rules that are vulnerable to certain types of electoral control that realistically may occur in the context of an election.
 Hence, the purpose of our research is to provide detailed information on the societal impact of voting rules and to ensure transparency.

\section*{Acknowledgments}
We are grateful to Ulle Endriss, Piotr Faliszewski, Edith Hemaspaandra, Lane A. Hemaspaandra, and J{\'e}r\^{o}me Lang for their ethical advice.
We thank the anonymous reviewers for helpful comments.
This work was supported in part by DFG research grant RO\nobreakdash-1202/21\nobreakdash-1. 
\bibliographystyle{named}
\bibliography{control}

\begin{thebibliography}{}

\bibitem[\protect\citeauthoryear{{Bartholdi III} and
  Orlin}{1991}]{bar-orl:j:polsci:strategic-voting}
John~J. {Bartholdi III} and James~B. Orlin.
\newblock Single transferable vote resists strategic voting.
\newblock {\em Social Choice and Welfare}, 8(4):341--354, 1991.

\bibitem[\protect\citeauthoryear{{Bartholdi III} \bgroup \em et al.\egroup
  }{1989}]{bar-tov-tri:j:manipulating}
John~J. {Bartholdi III}, Craig Tovey, and Michael Trick.
\newblock The computational difficulty of manipulating an election.
\newblock {\em Social Choice and Welfare}, 6(3):227--241, 1989.

\bibitem[\protect\citeauthoryear{{Bartholdi III} \bgroup \em et al.\egroup
  }{1992}]{bar-tov-tri:j:control}
John~J. {Bartholdi III}, Craig Tovey, and Michael Trick.
\newblock How hard is it to control an election?
\newblock {\em Mathematical and Computer Modelling}, 16(8/9):27--40, 1992.

\bibitem[\protect\citeauthoryear{Baumeister and
  Rothe}{2015}]{bau-rot:b:economics-and-computation-preference-aggregation-by-voting}
Dorothea Baumeister and J{\"o}rg Rothe.
\newblock Preference aggregation by voting.
\newblock In J{\"o}rg Rothe, editor, {\em Economics and Computation. An
  Introduction to Algorithmic Game Theory, Computational Social Choice, and
  Fair Division}, Springer Texts in Business and Economics, chapter~4, pages
  197--325. Springer-Verlag, Heidelberg and Berlin, Germany, 2015.

\bibitem[\protect\citeauthoryear{Baumeister \bgroup \em et al.\egroup
  }{2020}]{bau-erd-erd-rot-sel:j:complexity-of-control-in-judgment-aggregation-for-uniform-premise-based-quota-rules}
Dorothea Baumeister, G\'{a}bor Erd\'{e}lyi, Olivia Erd\'{e}lyi, J{\"{o}}rg
  Rothe, and Ann-Kathrin Selker.
\newblock Complexity of control in judgment aggregation for uniform
  premise-based quota rules.
\newblock {\em Journal of Computer and System Sciences}, 112:13--33, 2020.

\bibitem[\protect\citeauthoryear{Berker \bgroup \em et al.\egroup
  }{2022}]{ber-cas-ong-rob:t:obvious-independence-of-clones}
Ratip~Emin Berker, S{\'i}lvia Casacuberta, Christopher Ong, and Isaac Robinson.
\newblock Obvious independence of clones.
\newblock Technical Report arXiv:{\allowbreak}2210.04880v1~[cs.GT], ACM
  Computing Research Repository (CoRR), October 2022.

\bibitem[\protect\citeauthoryear{Brill and
  Fischer}{2012}]{bri-fis:c:price-of-neutrality}
Markus Brill and Felix Fischer.
\newblock The price of neutrality for the ranked pairs method.
\newblock In {\em Proceedings of the 26th AAAI Conference on Artificial
  Intelligence}, pages 1299--1305. AAAI Press, July 2012.

\bibitem[\protect\citeauthoryear{Conitzer and
  Sandholm}{2006}]{con-san:c:nonexistence}
Vincent Conitzer and Tuomas Sandholm.
\newblock Nonexistence of voting rules that are usually hard to manipulate.
\newblock In {\em Proceedings of the 21st National Conference on Artificial
  Intelligence}, pages 627--634. AAAI Press, July 2006.

\bibitem[\protect\citeauthoryear{Conitzer and
  Walsh}{2016}]{con-wal:b:handbook-comsoc-manipulation}
Vincent Conitzer and Toby Walsh.
\newblock Barriers to manipulation in voting.
\newblock In Felix Brandt, Vincent Conitzer, Ulle Endriss, J{\'e}r\^{o}me Lang,
  and Ariel~D. Procaccia, editors, {\em Handbook of Computational Social
  Choice}, chapter~6, pages 127--145. Cambridge University Press, Cambridge,
  UK, 2016.

\bibitem[\protect\citeauthoryear{Conitzer \bgroup \em et al.\egroup
  }{2009}]{con-rog-xia:c:mle}
Vincent Conitzer, Matthew Rognlie, and Lirong Xia.
\newblock Preference functions that score rankings and maximum likelihood
  estimation.
\newblock In {\em Proceedings of the 21st International Joint Conference on
  Artificial Intelligence}, pages 109--115. IJCAI, July 2009.

\bibitem[\protect\citeauthoryear{Erd\'{e}lyi \bgroup \em et al.\egroup
  }{2015}]{erd-hem-hem:c:more-natural-models-of-electoral-control-by-partition}
G{\'a}bor Erd\'{e}lyi, Edith Hemaspaandra, and Lane~A. Hemaspaandra.
\newblock More natural models of electoral control by partition.
\newblock In {\em Proceedings of the 4th International Conference on
  Algorithmic Decision Theory}, pages 396--413, Heidelberg and Berlin, Germany,
  September 2015. Springer-Verlag {\it Lecture Notes in Artificial
  Intelligence~\#9346}.

\bibitem[\protect\citeauthoryear{Erd{\'e}lyi \bgroup \em et al.\egroup
  }{2021}]{erd-nev-reg-rot-yan-zor:j:towards-completing-the-puzzle}
G{\'a}bor Erd{\'e}lyi, Marc Neveling, Christian Reger, J{\"o}rg Rothe, Yongjie
  Yang, and Roman Zorn.
\newblock Towards completing the puzzle: {Complexity} of control by replacing,
  adding, and deleting candidates or voters.
\newblock {\em Journal of Autonomous Agents and Multi-Agent Systems},
  35(2):41:1--41:48, 2021.

\bibitem[\protect\citeauthoryear{Faliszewski and
  Rothe}{2016}]{fal-rot:b:handbook-comsoc-control-and-bribery}
Piotr Faliszewski and J{\"o}rg Rothe.
\newblock Control and bribery in voting.
\newblock In Felix Brandt, Vincent Conitzer, Ulle Endriss, J{\'e}r\^{o}me Lang,
  and Ariel~D. Procaccia, editors, {\em Handbook of Computational Social
  Choice}, chapter~7, pages 146--168. Cambridge University Press, Cambridge,
  UK, 2016.

\bibitem[\protect\citeauthoryear{Faliszewski \bgroup \em et al.\egroup
  }{2009a}]{fal-hem-hem:j:bribery}
Piotr Faliszewski, Edith Hemaspaandra, and Lane~A. Hemaspaandra.
\newblock How hard is bribery in elections?
\newblock {\em Journal of Artificial Intelligence Research}, 35:485--532, 2009.

\bibitem[\protect\citeauthoryear{Faliszewski \bgroup \em et al.\egroup
  }{2009b}]{fal-hem-hem-rot:j:llull-copeland-full-techreport}
Piotr Faliszewski, Edith Hemaspaandra, Lane~A. Hemaspaandra, and J{\"o}rg
  Rothe.
\newblock Llull and {Copeland} voting computationally resist bribery and
  constructive control.
\newblock {\em Journal of Artificial Intelligence Research}, 35:275--341, 2009.

\bibitem[\protect\citeauthoryear{Faliszewski \bgroup \em et al.\egroup
  }{2011}]{fal-hem-hem:j:multimode-control}
Piotr Faliszewski, Edith Hemaspaandra, and Lane~A. Hemaspaandra.
\newblock Multimode control attacks on elections.
\newblock {\em Journal of Artificial Intelligence Research}, 40:305--351, 2011.

\bibitem[\protect\citeauthoryear{Gaspers \bgroup \em et al.\egroup
  }{2013}]{gas-ka-nar-wal:c:coalitional-manipulation-schulze}
Serge Gaspers, Thomas Kalinowski, Nina Narodytska, and Toby Walsh.
\newblock Coalitional manipulation for {S}chulze's rule.
\newblock In {\em Proceedings of the 12th International Conference on
  Autonomous Agents and Multiagent Systems}, pages 431--438. IFAAMAS, May 2013.

\bibitem[\protect\citeauthoryear{Gonzalez}{1985}]{gon:j:exact-cover-constrained}
Teofilo~F. Gonzalez.
\newblock Clustering to minimize the maximum intercluster distance.
\newblock {\em Theoretical Computer Science}, 38:293--306, 1985.

\bibitem[\protect\citeauthoryear{Hemaspaandra \bgroup \em et al.\egroup
  }{2007}]{hem-hem-rot:j:destructive-control}
Edith Hemaspaandra, Lane~A. Hemaspaandra, and J{\"o}rg Rothe.
\newblock Anyone but him: {T}he complexity of precluding an alternative.
\newblock {\em Artificial Intelligence}, 171(5--6):255--285, 2007.

\bibitem[\protect\citeauthoryear{Hemaspaandra \bgroup \em et al.\egroup
  }{2013}]{hem-lav-men:c:schulze-ranked-pairs-fpt}
Lane~A. Hemaspaandra, Rahman Lavaee, and Curtis Menton.
\newblock Schulze and ranked-pairs voting are fixed-parameter tractable to
  bribe, manipulate, and control.
\newblock In {\em Proceedings of the 12th International Conference on
  Autonomous Agents and Multiagent Systems}, pages 1345--1346. IFAAMAS, May
  2013.

\bibitem[\protect\citeauthoryear{Kaczmarek and
  Rothe}{2024}]{kac-rot:j:controlling-weighted-voting-games-by-deleting-or-adding-players-with-or-without-changing-the-quota}
Joanna Kaczmarek and J{\"{o}}rg Rothe.
\newblock Controlling weighted voting games by deleting or adding players with
  or without changing the quota.
\newblock {\em Annals of Mathematics and Artificial Intelligence}, 2024.
\newblock In print. DOI: \url{https://doi.org/10.1007/s10472-023-09874-x}.

\bibitem[\protect\citeauthoryear{Kaczmarek \bgroup \em et al.\egroup
  }{2023}]{kac-rot-tal:c:complexity-of-control-by-adding-or-deleting-edges-in-graph-restricted-weighted-voting-games}
Joanna Kaczmarek, J{\"{o}}rg Rothe, and Nimrod Talmon.
\newblock Complexity of control by adding or deleting edges in graph-restricted
  weighted voting games.
\newblock In {\em Proceedings of the 26th European Conference on Artificial
  Intelligence}, volume 372 of {\em Frontiers in Artificial Intelligence and
  Applications}, pages 1190--1197. IOS Press, October 2023.

\bibitem[\protect\citeauthoryear{Lang \bgroup \em et al.\egroup
  }{2013}]{lan-mau-pol:c:equilibria-strategic-games}
J{\'e}r\^{o}me Lang, Nicolas Maudet, and Maria Polukarov.
\newblock New results on equilibria in strategic candidacy.
\newblock In {\em Proceedings of the 6th International Symposium on Algorithmic
  Game Theory}, pages 13--25. Springer-Verlag {\it Lecture Notes in Computer
  Science \#8146}, October 2013.

\bibitem[\protect\citeauthoryear{Loreggia \bgroup \em et al.\egroup
  }{2015}]{lor-nar-ros-ven-wal:c:replacing-candidates}
Andrea Loreggia, Nina Narodytska, Francesca Rossi, Kristen~Brent Venable, and
  Toby Walsh.
\newblock Controlling elections by replacing candidates or votes (extended
  abstract).
\newblock In {\em Proceedings of the 14th International Conference on
  Autonomous Agents and Multiagent Systems}, pages 1737--1738. IFAAMAS, May
  2015.

\bibitem[\protect\citeauthoryear{McGarvey}{1953}]{mcg:j:election-graph}
David~C. McGarvey.
\newblock A theorem on the construction of voting paradoxes.
\newblock {\em Econometrica}, 21(4):608--610, 1953.

\bibitem[\protect\citeauthoryear{Menton and
  Singh}{2012}]{men-sin:t:manipulation-control-schulze-voting-v1}
Curtis Menton and Preetjot Singh.
\newblock Manipulation and control complexity of {Schulze} voting.
\newblock Technical Report arXiv:{\allowbreak}1206.2111v1~[cs.GT], ACM
  Computing Research Repository (CoRR), June 2012.

\bibitem[\protect\citeauthoryear{Menton and
  Singh}{2013}]{men-sin:c:control-complexity-schulze}
Curtis Menton and Preetjot Singh.
\newblock Control complexity of {Schulze} voting.
\newblock In {\em Proceedings of the 23rd International Joint Conference on
  Artificial Intelligence}, pages 286--292. AAAI Press/IJCAI, August 2013.

\bibitem[\protect\citeauthoryear{Parkes and
  Xia}{2012}]{par-xia:c:strategic-schulze-ranked-pairs}
David Parkes and Lirong Xia.
\newblock A complexity-of-strategic-behavior comparison between {S}chulze's
  rule and ranked pairs.
\newblock In {\em Proceedings of the 26th AAAI Conference on Artificial
  Intelligence}, pages 1429--1435. AAAI Press, July 2012.

\bibitem[\protect\citeauthoryear{Rey and
  Rothe}{2018}]{rey-rot:j:structural-control-in-weighted-voting-games}
Anja Rey and J{\"{o}}rg Rothe.
\newblock Structural control in weighted voting games.
\newblock {\em The B.E.\ Journal on Theoretical Economics}, 18(2):1--15, 2018.

\bibitem[\protect\citeauthoryear{Schulze}{2011}]{sch:j:schulze-voting}
Markus Schulze.
\newblock A new monotonic, clone-independent, reversal symmetric, and
  {Condorcet}-consistent single-winner election method.
\newblock {\em Social Choice and Welfare}, 36(2):267--303, 2011.

\bibitem[\protect\citeauthoryear{Schulze}{2023}]{sch:t:schulze-method-of-voting}
Markus Schulze.
\newblock The {Schulze} method of voting.
\newblock Technical Report arXiv:{\allowbreak}1804.02973v13~[cs.GT], ACM
  Computing Research Repository (CoRR), May 2023.

\bibitem[\protect\citeauthoryear{Sornat \bgroup \em et al.\egroup
  }{2021}]{sor-vas-xu:c:fine-grained-complexity-schulze}
Krzysztof Sornat, Virginia Vassilevska~Williams, and Yinzhan Xu.
\newblock Fine-grained complexity and algorithms for the schulze voting method.
\newblock In {\em Proceedings of the 22nd ACM Conference on Economics and
  Computation}, pages 841--859. ACM Press, July 2021.

\bibitem[\protect\citeauthoryear{Tideman}{1987}]{tid:j:independence-of-clones}
T.~Nicolaus Tideman.
\newblock Independence of clones as a criterion for voting rules.
\newblock {\em Social Choice and Welfare}, 4(3):185--206, 1987.

\bibitem[\protect\citeauthoryear{Wang \bgroup \em et al.\egroup
  }{2019}]{wan-sik-sha-zha-jia-xia:c:algorithms-multi-stage-voting-put}
Jun Wang, Sujoy Sikdar, Tyler Shepherd, Zhibing Zhao, Chunheng Jiang, and
  Lirong Xia.
\newblock Practical algorithms for multi-stage voting rules with parallel
  universes tiebreaking.
\newblock In {\em Proceedings of the 33rd AAAI Conference on Artificial
  Intelligence}, pages 2189--2196. AAAI Press, January/February 2019.

\bibitem[\protect\citeauthoryear{West}{2017}]{wes:b:introduction-graph-theory}
Douglas~B. West.
\newblock {\em Introduction to Graph Theory}.
\newblock Pearson, Cham, Switzerland, 2 edition, 2017.

\bibitem[\protect\citeauthoryear{Xia \bgroup \em et al.\egroup
  }{2009}]{xia-zuc-pro-con-ros:c:unweighted-coalitional-manipulation}
Lirong Xia, Michael Zuckerman, Ariel~D. Procaccia, Vincent Conitzer, and
  Jeffrey~S. Rosenschein.
\newblock Complexity of unweighted coalitional manipulation under some common
  voting rules.
\newblock In {\em Proceedings of the 21st International Joint Conference on
  Artificial Intelligence}, pages 348--353. IJCAI, July 2009.

\end{thebibliography}

\clearpage

\appendix

\section{Deferred Example from Section~\ref{sec:introduction}}

\begin{example}\label{ex:exact-vs-non-exact}
	The following simple example illustrates the difference between the exact and non exact versions of electoral control.
	Given an election $(\{p,d\}, V)$ and unregistered candidates $D=\{a,b\}$. The pairwise comparisons are illustrated in Figure~\ref{fig:necessity-del-recently-added}. Clearly, $d$ wins in the original election.
	Consider constructive control by adding candidates where $k = 2$ and candidate $p$ is the preferred winner. For non exact control a chair may add $a$, which $p$ beats directly and this also introduces a stronger path from $p$ to $d$, making $p$ the unique winner of the election. However, should the chair also add $b$, the successful control action is immediately reversed, as $b$ beats $p$ in the pairwise contest. Thus, exact control is not possible in this setting.

	\begin{figure}[ht]
		\centering
		\begin{tikzpicture}[node distance={20mm}, 
			main/.style = {draw, circle, scale=0.8},
			el/.style = {inner sep=2pt, align=left, sloped},
			every label/.append style = {font=\normalsize}]
			\tikzset{edge/.style = {->,> = latex'}}
			
			\node[main, minimum size=0.7cm] (d) {$d$};
			\node[main, minimum size=0.7cm] [above left of=d] (a) {$a$};
			\node[main, minimum size=0.7cm] [below left of=d] (b) {$b$};
			\node[main, minimum size=0.7cm] [below left of=a] (p) {$p$};
			
			\draw[edge] (d) -- (p) node[midway, above] {$2$};
			\draw[edge, dashed] (a) -- (d) node[midway, right] {$4$};
			\draw[edge, dashed] (p) -- (a) node[midway, left] {$4$};
			\draw[edge, dashed] (b) -- (p) node[midway, left] {$6$};
			\draw[edge, dashed] (d) -- (b) node[midway, right] {$6$};
			
		\end{tikzpicture}
		\caption{A WMG of the election in Example~\ref{ex:exact-vs-non-exact}. Candidates from the set of unregistered candidates $D$ are depicted using dashed lines.}
		\label{fig:example-exact-vs-non-exact}
	\end{figure}
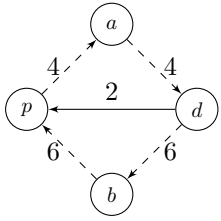
\end{example}

\section{Deferred Example from Section~\ref{sec:prelim}}\label{sec:apdx-prelim}

\begin{example}\label{ex:schulzemethod}
	Consider an election $(C,V)$ with $C= \{a,b,c,d\}$ and the following votes:
	\begin{align*}
		&4\times v_1:\ a\ c\ b\ d,\\
		&2\times v_2:\ d\ a\ c\ b,\\
		&3\times v_3:\ d\ c\ a\ b,\\
		&2\times v_4:\ b\ d\ a\ c.
	\end{align*}
	First, we determine $\numberrankedabove{V}{x}{y}$ for each pair of candidates $x,y \in C$ to build the WMG:
	\begin{align*}
		&\numberrankedabove{V}{a}{b} = 9, &\numberrankedabove{V}{a}{c} = 8, \hspace*{2em} &\numberrankedabove{V}{a}{d} = 4, \\
		& \numberrankedabove{V}{b}{a} = 2, & \numberrankedabove{V}{b}{c} = 2, \hspace*{2em} & \numberrankedabove{V}{b}{d} = 6, \\
		&\numberrankedabove{V}{c}{a} = 3, &\numberrankedabove{V}{c}{b} = 9, \hspace*{2em} &\numberrankedabove{V}{c}{d} = 4,\\
		& \numberrankedabove{V}{d}{a} = 7, & \numberrankedabove{V}{d}{b} = 5, \hspace*{2em} & \numberrankedabove{V}{d}{c} = 7.
	\end{align*}
	The WMG is shown in Figure~\ref{fig:wmg}. The strengths of the strongest paths for each pair of candidates $x,y \in C$ are given in Table~\ref{tab:stropa}. 
	Since $P(d,x) > P(x,d)$ for each $x \in C\setminus \{d\}$, candidate $d$ is the unique Schulze winner of the election.
	\begin{figure}[th]
		\begin{minipage}[h!]{0.4\columnwidth}
			\centering
			\begin{tikzpicture}[node distance={20mm}, 
				main/.style = {draw, circle, scale=0.8},
				el/.style = {inner sep=2pt, align=left, sloped},
				every label/.append style = {font=\normalsize}]
				\tikzset{edge/.style = {->,> = latex'}}
				
				\node[main,  minimum size=0.7cm] (a) {a};
				\node[main,  minimum size=0.7cm] [right of=a] (b) {b};
				\node[main,  minimum size=0.7cm] [below of=a] (c) {c};
				\node[main,  minimum size=0.7cm] [right of=c] (d) {d};
				
				\draw[edge] (a) -- (b) node[midway, above] {$7$};
				\draw[edge] (a) -- (c) node[midway, left] {$5$};
				\draw[edge] (b) -- (d) node[midway, right] {$1$};
				\draw[edge] (c) -- (b) node[pos=0.75, below] {$7$};
				\draw[edge] (d) -- (a) node[pos=0.75, below] {$3$};
				\draw[edge] (d) -- (c) node[midway, below] {$3$};
			\end{tikzpicture}
			\captionof{figure}{Weighted majority graph in Example~\ref{ex:schulzemethod}.}
			\label{fig:wmg}
		\end{minipage}
		\hspace{\fill}
		\begin{minipage}[h!]{0.55\columnwidth}
			\centering
			\begin{tabularx}{\textwidth}{c|CCCC}
				$\strongestpath{x}{y}$ & $a$ & $b$ & $c$ & $d$\\
				\midrule			
				$a$  & -- 			& $\mathbf{7}$  & $\mathbf{5}$  & $1$ \\
				$b$  & $1$ 			& -- 			& $1$ 			& $1$ \\		
				$c$  & $1$  		& $\mathbf{7}$ 	& -- 			& $1$ \\
				$d$  & $\mathbf{3}$ & $\mathbf{3}$ 	& $\mathbf{3}$	& --  \\
			\end{tabularx}
			\captionof{table}{Strengths of the strongest paths in Example~\ref{ex:schulzemethod}. Bold font indicates $P(x,y) \ge P(y,x)$.}
			\label{tab:stropa}
		\end{minipage}
	\end{figure}
\end{example}

\begin{example}\label{ex:rankedpairsmethod}
	Consider the election described in Example~\ref{ex:schulzemethod}.
	By using lexicographical, i.e., here alphabetical, tie-breaking, we get the weight order shown in Table~\ref{tab:weightorder}.
	We now start by considering the first pair $(a,b)$.
	The corresponding edge $(a,b)$ can of course be inserted into the directed graph~$G$.
	In the next four steps, one after another we consider the pairs (or, edges) $(c,b)$, $(a,c)$, $(d,a)$, and $(d,c)$, which can all be added to $G$ since none of them creates a cycle.
	Only the last edge $(b,d)$ cannot be inserted since we would then get a cycle among $b$, $c$, and~$d$.
	The directed graph $G$ is depicted in Figure~\ref{fig:directedgraphrankedpairs}.
	Being the source of the graph, $d$ is the ranked pairs winner of the election.
	
	\begin{figure}[th]
		\begin{minipage}[h!]{0.55\columnwidth}
			\centering
			\begin{tabularx}{\textwidth}{c|c|c}
				& pair $(c_i, c_j)$ & $\differencerankedabove{V}{c_i}{c_j}$ \\
				\midrule
				1. & $(a,b)$ & $7$ \\
				2. & $(c,b)$ & $7$ \\
				3. & $(a,c)$ & $5$ \\
				4. & $(d,a)$ & $3$ \\
				5. & $(d,c)$ & $3$ \\
				6. & $(b,d)$ & $1$ \\
			\end{tabularx}
			\captionof{table}{Candidate pairs ordered by difference in pairwise comparison in Example~\ref{ex:rankedpairsmethod}.}
			\label{tab:weightorder}
		\end{minipage}
		\hspace{\fill}
		\begin{minipage}[h!]{0.4\columnwidth}
			\centering
			\begin{tikzpicture}[node distance={20mm}, 
				main/.style = {draw, circle, scale=0.8},
				el/.style = {inner sep=2pt, align=left, sloped},
				every label/.append style = {font=\normalsize}]
				\tikzset{edge/.style = {->,> = latex'}}
				
				\node[main,  minimum size=0.7cm] (a) {a};
				\node[main,  minimum size=0.7cm] [right of=a] (b) {b};
				\node[main,  minimum size=0.7cm] [below of=a] (c) {c};
				\node[main,  minimum size=0.7cm] [right of=c] (d) {d};
				
				\draw[edge] (a) -- (b) node[midway, above] {};
				\draw[edge] (a) -- (c) node[midway, left] {};
				\draw[edge] (c) -- (b) node[pos=0.75, below] {};
				\draw[edge] (d) -- (a) node[pos=0.75, below] {};
				\draw[edge] (d) -- (c) node[midway, below] {};
			\end{tikzpicture}
			\captionof{figure}{Directed graph $G$ in Example~\ref{ex:rankedpairsmethod}.}
			\label{fig:directedgraphrankedpairs}
		\end{minipage}
	\end{figure}
\end{example}

\begin{example}\label{ex:deleterecentlyadded}
	Consider a Schulze election where apart from the preferred candidate $p$ and the current winner $w$ only two more candidates, $c_1$ and~$c_2$, are present.
	Let the election have one stronger path from $w$ to $p$ through $c_1$ and one weaker path from $p$ to $w$ through~$c_2$.
	The set $D$ of as yet unregistered candidates to be added contains only~$d$, who is on a path from $w$ to $p$ that is as strong as the path from $w$ to $p$ through~$c_1$.
	This scenario is illustrated in Figure~\ref{fig:necessity-del-recently-added}.
	Let $\ell_{DC} = 2$ and $\ell_{AC} = 1$.
	If we do not allow deletion of recently added candidates, we are forced to delete both candidates $c_1$ and $c_2$ in any exact control action and to add the one candidate from~$D$.
	But this would make it impossible for $p$ to win.
	However, if we are allowed to delete $d$ after adding (thus in effect reducing both $\ell_{DC}$ and $\ell_{AC}$ by one), our control action---first adding $d$ and then deleting $c_1$ and~$d$---is successful.

	\begin{figure}[ht]
		\centering
		\begin{tikzpicture}[node distance={20mm}, 
			main/.style = {draw, circle, scale=0.8},
			el/.style = {inner sep=2pt, align=left, sloped},
			every label/.append style = {font=\normalsize}]
			\tikzset{edge/.style = {->,> = latex'}}
			
			\node[main, minimum size=0.7cm] (w) {$w$};
			\node[main, minimum size=0.7cm] [below left of=w] (c1) {$c_1$};
			\node[main, dashed, minimum size=0.7cm] [left of=c1] (d1) {$d$};
			\node[main, minimum size=0.7cm] [below right of=c1] (p) {$p$};
			\node[main, minimum size=0.7cm] [below right of=w] (c2) {$c_2$};
			
			\draw[edge] (w) -- (c1) node[midway, right] {$4$};
			\draw[edge, dashed] (w) -- (d1) node[midway, left] {$4$};
			\draw[edge] (c1) -- (p) node[midway, right] {$4$};
			\draw[edge, dashed] (d1) -- (p) node[midway, left] {$4$};
			\draw[edge] (p) -- (c2) node[midway, right] {$2$};
			\draw[edge] (c2) -- (w) node[midway, right] {$2$};
		\end{tikzpicture}
		\caption{A WMG of the election in Example~\ref{ex:deleterecentlyadded}, illustrating that it may be necessary to delete recently added candidates for an exact control action to be successful.}
		\label{fig:necessity-del-recently-added}
	\end{figure}
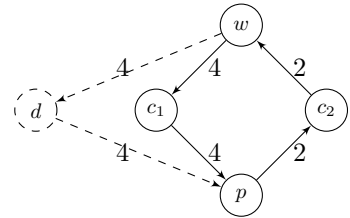
\end{example}

\section{Deferred Counterexample and Proof of Section~\ref{sec:control-ccdc}}\label{sec:appendix_ccdc}

\begin{example}\label{ex:ccdc}
	We start with a yes-instance of $\threesat$, which will be mapped to a 
	no-instance of Schulze-\textsc{CCDC} by their reduction.  
	Let $(X,Cl)$ be our given $\threesat$ instance, with $X=\{x_1,x_2,x_3\}$ and $Cl=\{(x_1\vee x_2 \vee \neg {x}_3), \left(\neg {x}_{1} \vee x_{2} \vee x_3\right)\}$, i.e., we consider the CNF formula
	\[
	\varphi = \left( x_{1} \vee x_{2} \vee \neg {x}_{3}\right) \wedge
	\left(\neg {x}_{1} \vee x_{2} \vee x_3\right).
	\]
	According to the proof of \cite[Thm.~2.2]{men-sin:c:control-complexity-schulze}, we construct an instance $((C, V'), p, \ell)$ of Schulze-\textsc{CCDC} as follows.
	The deletion limit is $k = 2$, the number of clauses~$\vert Cl \vert$.
	The set of candidates, $C = K \cup L \cup N \cup \{p,a\}$, consists of
	\begin{itemize}
		\item the clause candidates $c_1^1$, $c_1^2$, $c_1^3,c_2^1$, $c_2^2$,  and $c_2^3$,
		\item the literal candidates $x_1^1$, $x_1^2$, $x_1^3$, $x_2^1$, $x_2^2$, and $x_2^3$,
		\item the negation candidates $n_{1,1,2,1}^{1}$, $n_{1,1,2,1}^{2}$, $n_{1,1,2,1}^{3}$, $n_{1,3,2,3}^{1}$,\linebreak[4] $\ n_{1,3,2,3}^{2}$, and $n_{1,3,2,3}^{3}$ (abbreviate by $n_{1}^{1}$, $n_{1}^{2}$, $n_{1}^{3}$, $n_{2}^{1}$, $n_{2}^{2}$, and~$n_{2}^{3}$),
		\item the distinguished candidate $p$ and the additional candidate~$a$.
	\end{itemize}
	The preferences are represented by the WMG in Figure~\ref{fig:congra} using McGarvey's trick~\cite{mcg:j:election-graph}.
	Each edge in this graph has a weight of two.
	
	\begin{figure}[h!]
		\centering
		\tikzset{every picture/.style={line width=0.75pt}} 
		\vspace*{-0.8cm}
		\small\begin{tikzpicture}[x=0.43pt,y=0.43pt,yscale=-1,xscale=1]
			\draw   (53,45) .. controls (53,37.82) and (58.82,32) .. (66,32) .. controls (73.18,32) and (79,37.82) .. (79,45) .. controls (79,52.18) and (73.18,58) .. (66,58) .. controls (58.82,58) and (53,52.18) .. (53,45) -- cycle ;
			\draw   (53,85) .. controls (53,77.82) and (58.82,72) .. (66,72) .. controls (73.18,72) and (79,77.82) .. (79,85) .. controls (79,92.18) and (73.18,98) .. (66,98) .. controls (58.82,98) and (53,92.18) .. (53,85) -- cycle ;
			\draw   (53,125) .. controls (53,117.82) and (58.82,112) .. (66,112) .. controls (73.18,112) and (79,117.82) .. (79,125) .. controls (79,132.18) and (73.18,138) .. (66,138) .. controls (58.82,138) and (53,132.18) .. (53,125) -- cycle ;
			\draw   (53,222) .. controls (53,214.82) and (58.82,209) .. (66,209) .. controls (73.18,209) and (79,214.82) .. (79,222) .. controls (79,229.18) and (73.18,235) .. (66,235) .. controls (58.82,235) and (53,229.18) .. (53,222) -- cycle ;
			\draw   (53,262) .. controls (53,254.82) and (58.82,249) .. (66,249) .. controls (73.18,249) and (79,254.82) .. (79,262) .. controls (79,269.18) and (73.18,275) .. (66,275) .. controls (58.82,275) and (53,269.18) .. (53,262) -- cycle ;
			\draw   (53,302) .. controls (53,294.82) and (58.82,289) .. (66,289) .. controls (73.18,289) and (79,294.82) .. (79,302) .. controls (79,309.18) and (73.18,315) .. (66,315) .. controls (58.82,315) and (53,309.18) .. (53,302) -- cycle ;
			\draw   (325,72.01) .. controls (332.18,72.01) and (338,77.83) .. (338,85.01) .. controls (338,92.19) and (332.18,98.01) .. (325,98.01) .. controls (317.82,98.01) and (312,92.18) .. (312,85) .. controls (312,77.83) and (317.82,72.01) .. (325,72.01) -- cycle ;
			\draw   (245,72) .. controls (252.18,72) and (258,77.82) .. (258,85) .. controls (258,92.18) and (252.18,98) .. (245,98) .. controls (237.82,98) and (232,92.18) .. (232,85) .. controls (232,77.82) and (237.82,72) .. (245,72) -- cycle ;
			\draw   (163,71.99) .. controls (170.18,71.99) and (176,77.82) .. (176,85) .. controls (176,92.17) and (170.18,97.99) .. (163,97.99) .. controls (155.82,97.99) and (150,92.17) .. (150,84.99) .. controls (150,77.81) and (155.82,71.99) .. (163,71.99) -- cycle ;
			\draw   (325,250.01) .. controls (332.18,250.01) and (338,255.83) .. (338,263.01) .. controls (338,270.19) and (332.18,276.01) .. (325,276.01) .. controls (317.82,276.01) and (312,270.18) .. (312,263) .. controls (312,255.83) and (317.82,250.01) .. (325,250.01) -- cycle ;
			\draw   (245,250) .. controls (252.18,250) and (258,255.82) .. (258,263) .. controls (258,270.18) and (252.18,276) .. (245,276) .. controls (237.82,276) and (232,270.18) .. (232,263) .. controls (232,255.82) and (237.82,250) .. (245,250) -- cycle ;
			\draw   (163,248.99) .. controls (170.18,248.99) and (176,254.82) .. (176,262) .. controls (176,269.17) and (170.18,274.99) .. (163,274.99) .. controls (155.82,274.99) and (150,269.17) .. (150,261.99) .. controls (150,254.81) and (155.82,248.99) .. (163,248.99) -- cycle ;
			\draw    (79,45) -- (148.26,84.01) ;
			\draw [shift={(150,84.99)}, rotate = 209.39] [fill={rgb, 255:red, 0; green, 0; blue, 0 }  ][line width=0.08]  [draw opacity=0] (12,-3) -- (0,0) -- (12,3) -- cycle    ;
			\draw    (79,85) -- (148,84.99) ;
			\draw [shift={(150,84.99)}, rotate = 179.99] [fill={rgb, 255:red, 0; green, 0; blue, 0 }  ][line width=0.08]  [draw opacity=0] (12,-3) -- (0,0) -- (12,3) -- cycle    ;
			\draw    (79,125) -- (148.26,85.97) ;
			\draw [shift={(150,84.99)}, rotate = 150.6] [fill={rgb, 255:red, 0; green, 0; blue, 0 }  ][line width=0.08]  [draw opacity=0] (12,-3) -- (0,0) -- (12,3) -- cycle    ;
			\draw    (79,222) -- (148.26,261.01) ;
			\draw [shift={(150,261.99)}, rotate = 209.39] [fill={rgb, 255:red, 0; green, 0; blue, 0 }  ][line width=0.08]  [draw opacity=0] (12,-3) -- (0,0) -- (12,3) -- cycle    ;
			\draw    (79,262) -- (148,261.99) ;
			\draw [shift={(150,261.99)}, rotate = 179.99] [fill={rgb, 255:red, 0; green, 0; blue, 0 }  ][line width=0.08]  [draw opacity=0] (12,-3) -- (0,0) -- (12,3) -- cycle    ;
			\draw    (80,301) -- (148.25,262.96) ;
			\draw [shift={(150,261.99)}, rotate = 150.87] [fill={rgb, 255:red, 0; green, 0; blue, 0 }  ][line width=0.08]  [draw opacity=0] (12,-3) -- (0,0) -- (12,3) -- cycle    ;
			\draw   (203.76,158.28) .. controls (210.94,158.15) and (216.86,163.86) .. (216.99,171.04) .. controls (217.12,178.22) and (211.41,184.14) .. (204.23,184.27) .. controls (197.05,184.4) and (191.13,178.69) .. (191,171.51) .. controls (190.87,164.33) and (196.58,158.41) .. (203.76,158.28) -- cycle ;
			\draw   (163.76,159) .. controls (170.94,158.87) and (176.87,164.59) .. (177,171.76) .. controls (177.13,178.94) and (171.41,184.87) .. (164.24,185) .. controls (157.06,185.13) and (151.13,179.41) .. (151,172.24) .. controls (150.87,165.06) and (156.59,159.13) .. (163.76,159) -- cycle ;
			\draw   (123.77,159.73) .. controls (130.95,159.6) and (136.87,165.31) .. (137,172.49) .. controls (137.13,179.67) and (131.42,185.59) .. (124.24,185.72) .. controls (117.06,185.85) and (111.14,180.14) .. (111.01,172.96) .. controls (110.88,165.78) and (116.59,159.86) .. (123.77,159.73) -- cycle ;
			\draw   (364.76,156.28) .. controls (371.94,156.15) and (377.86,161.86) .. (377.99,169.04) .. controls (378.12,176.22) and (372.41,182.14) .. (365.23,182.27) .. controls (358.05,182.4) and (352.13,176.69) .. (352,169.51) .. controls (351.87,162.33) and (357.58,156.41) .. (364.76,156.28) -- cycle ;
			\draw   (324.76,157) .. controls (331.94,156.87) and (337.87,162.59) .. (338,169.76) .. controls (338.13,176.94) and (332.41,182.87) .. (325.24,183) .. controls (318.06,183.13) and (312.13,177.41) .. (312,170.24) .. controls (311.87,163.06) and (317.59,157.13) .. (324.76,157) -- cycle ;
			\draw   (284.77,157.73) .. controls (291.95,157.6) and (297.87,163.31) .. (298,170.49) .. controls (298.13,177.67) and (292.42,183.59) .. (285.24,183.72) .. controls (278.06,183.85) and (272.14,178.14) .. (272.01,170.96) .. controls (271.88,163.78) and (277.59,157.86) .. (284.77,157.73) -- cycle ;
			\draw    (163,97.99) -- (124.84,158.04) ;
			\draw [shift={(123.77,159.73)}, rotate = 302.43] [fill={rgb, 255:red, 0; green, 0; blue, 0 }  ][line width=0.08]  [draw opacity=0] (12,-3) -- (0,0) -- (12,3) -- cycle    ;
			\draw    (163,97.99) -- (163.74,157) ;
			\draw [shift={(163.76,159)}, rotate = 269.28] [fill={rgb, 255:red, 0; green, 0; blue, 0 }  ][line width=0.08]  [draw opacity=0] (12,-3) -- (0,0) -- (12,3) -- cycle    ;
			\draw    (163,97.99) -- (202.64,156.62) ;
			\draw [shift={(203.76,158.28)}, rotate = 235.94] [fill={rgb, 255:red, 0; green, 0; blue, 0 }  ][line width=0.08]  [draw opacity=0] (12,-3) -- (0,0) -- (12,3) -- cycle    ;
			\draw    (163,248.99) -- (125.29,187.43) ;
			\draw [shift={(124.24,185.72)}, rotate = 58.51] [fill={rgb, 255:red, 0; green, 0; blue, 0 }  ][line width=0.08]  [draw opacity=0] (12,-3) -- (0,0) -- (12,3) -- cycle    ;
			\draw    (163,248.99) -- (164.2,187) ;
			\draw [shift={(164.24,185)}, rotate = 91.1] [fill={rgb, 255:red, 0; green, 0; blue, 0 }  ][line width=0.08]  [draw opacity=0] (12,-3) -- (0,0) -- (12,3) -- cycle    ;
			\draw    (163,248.99) -- (203.15,185.96) ;
			\draw [shift={(204.23,184.27)}, rotate = 122.5] [fill={rgb, 255:red, 0; green, 0; blue, 0 }  ][line width=0.08]  [draw opacity=0] (12,-3) -- (0,0) -- (12,3) -- cycle    ;
			\draw    (325,97.99) -- (285.89,156.07) ;
			\draw [shift={(284.77,157.73)}, rotate = 303.96] [fill={rgb, 255:red, 0; green, 0; blue, 0 }  ][line width=0.08]  [draw opacity=0] (12,-3) -- (0,0) -- (12,3) -- cycle    ;
			\draw    (325,97.99) -- (324.77,155) ;
			\draw [shift={(324.76,157)}, rotate = 270.23] [fill={rgb, 255:red, 0; green, 0; blue, 0 }  ][line width=0.08]  [draw opacity=0] (12,-3) -- (0,0) -- (12,3) -- cycle    ;
			\draw    (325,97.99) -- (363.63,154.62) ;
			\draw [shift={(364.76,156.28)}, rotate = 235.7] [fill={rgb, 255:red, 0; green, 0; blue, 0 }  ][line width=0.08]  [draw opacity=0] (12,-3) -- (0,0) -- (12,3) -- cycle    ;
			\draw    (326,249.99) -- (286.29,185.43) ;
			\draw [shift={(285.24,183.72)}, rotate = 58.41] [fill={rgb, 255:red, 0; green, 0; blue, 0 }  ][line width=0.08]  [draw opacity=0] (12,-3) -- (0,0) -- (12,3) -- cycle    ;
			\draw    (326,249.99) -- (325.26,185) ;
			\draw [shift={(325.24,183)}, rotate = 89.34] [fill={rgb, 255:red, 0; green, 0; blue, 0 }  ][line width=0.08]  [draw opacity=0] (12,-3) -- (0,0) -- (12,3) -- cycle    ;
			\draw    (325,250.01) -- (364.21,183.99) ;
			\draw [shift={(365.23,182.27)}, rotate = 120.71] [fill={rgb, 255:red, 0; green, 0; blue, 0 }  ][line width=0.08]  [draw opacity=0] (12,-3) -- (0,0) -- (12,3) -- cycle    ;
			\draw    (176,85) -- (230,85) ;
			\draw [shift={(232,85)}, rotate = 180] [fill={rgb, 255:red, 0; green, 0; blue, 0 }  ][line width=0.08]  [draw opacity=0] (12,-3) -- (0,0) -- (12,3) -- cycle    ;
			\draw    (258,85) -- (310,85) ;
			\draw [shift={(312,85)}, rotate = 180] [fill={rgb, 255:red, 0; green, 0; blue, 0 }  ][line width=0.08]  [draw opacity=0] (12,-3) -- (0,0) -- (12,3) -- cycle    ;
			\draw    (176,263) -- (230,263) ;
			\draw [shift={(232,263)}, rotate = 180] [fill={rgb, 255:red, 0; green, 0; blue, 0 }  ][line width=0.08]  [draw opacity=0] (12,-3) -- (0,0) -- (12,3) -- cycle    ;
			\draw    (258,263) -- (310,263) ;
			\draw [shift={(312,263)}, rotate = 180] [fill={rgb, 255:red, 0; green, 0; blue, 0 }  ][line width=0.08]  [draw opacity=0] (12,-3) -- (0,0) -- (12,3) -- cycle    ;
			\draw   (503.76,155.28) .. controls (510.94,155.15) and (516.86,160.86) .. (516.99,168.04) .. controls (517.12,175.22) and (511.41,181.14) .. (504.23,181.27) .. controls (497.05,181.4) and (491.13,175.69) .. (491,168.51) .. controls (490.87,161.33) and (496.58,155.41) .. (503.76,155.28) -- cycle ;
			\draw   (424.76,156) .. controls (431.94,155.87) and (437.87,161.59) .. (438,168.76) .. controls (438.13,175.94) and (432.41,181.87) .. (425.24,182) .. controls (418.06,182.13) and (412.13,176.41) .. (412,169.24) .. controls (411.87,162.06) and (417.59,156.13) .. (424.76,156) -- cycle ;
			\draw   (123.77,159.73) .. controls (174.74,121.19) and (353.2,100.21) .. (423.71,155.17) ;
			\draw [shift={(424.76,156)}, rotate = 218.76] [fill={rgb, 255:red, 0; green, 0; blue, 0 }  ][line width=0.08]  [draw opacity=0] (12,-3) -- (0,0) -- (12,3) -- cycle    ;
			\draw   (163.76,159) .. controls (214.74,120.47) and (353.6,100.2) .. (423.71,155.17) ;
			\draw [shift={(424.76,156)}, rotate = 218.76] [fill={rgb, 255:red, 0; green, 0; blue, 0 }  ][line width=0.08]  [draw opacity=0] (12,-3) -- (0,0) -- (12,3) -- cycle    ;
			\draw    (203.76,158.28) .. controls (254.73,119.74) and (354,100.2) .. (423.72,155.17) ;
			\draw [shift={(424.76,156)}, rotate = 218.76] [fill={rgb, 255:red, 0; green, 0; blue, 0 }  ][line width=0.08]  [draw opacity=0] (12,-3) -- (0,0) -- (12,3) -- cycle    ;
			\draw    (285.24,183.72) .. controls (312.86,221.81) and (377.35,237.84) .. (424.53,182.83) ;
			\draw [shift={(425.24,182)}, rotate = 130.15] [fill={rgb, 255:red, 0; green, 0; blue, 0 }  ][line width=0.08]  [draw opacity=0] (12,-3) -- (0,0) -- (12,3) -- cycle    ;
			\draw    (325.24,183) .. controls (352.85,221.08) and (377.75,237.83) .. (424.53,182.83) ;
			\draw [shift={(425.24,182)}, rotate = 130.15] [fill={rgb, 255:red, 0; green, 0; blue, 0 }  ][line width=0.08]  [draw opacity=0] (12,-3) -- (0,0) -- (12,3) -- cycle    ;
			\draw    (365.23,182.27) .. controls (374.02,209.2) and (373.07,236.07) .. (424.46,182.81) ;
			\draw [shift={(425.24,182)}, rotate = 133.83] [fill={rgb, 255:red, 0; green, 0; blue, 0 }  ][line width=0.08]  [draw opacity=0] (12,-3) -- (0,0) -- (12,3) -- cycle    ;
			\draw    (438,168.76) -- (489,168.52) ;
			\draw [shift={(491,168.51)}, rotate = 179.73] [fill={rgb, 255:red, 0; green, 0; blue, 0 }  ][line width=0.08]  [draw opacity=0] (12,-3) -- (0,0) -- (12,3) -- cycle    ;

			\draw    (337.97,265.01) .. controls (360,260) and (400.02,240) .. (425.76,182) ;
			\draw [shift={(425.76,182)}, rotate = 114] [fill={rgb, 255:red, 0; green, 0; blue, 0 }  ][line width=0.08]  [draw opacity=0] (12,-3) -- (0,0) -- (12,3) -- cycle    ;
			
			\draw    (337.97,84.13) .. controls (360,90) and (400.02,120) .. (425.76,156) ;
			\draw [shift={(425.76,156)}, rotate = 235] [fill={rgb, 255:red, 0; green, 0; blue, 0 }  ][line width=0.08]  [draw opacity=0] (12,-3) -- (0,0) -- (12,3) -- cycle    ;

			\draw    (503.76,154) .. controls (503,41.57) and (244.6,-53.04) .. (244.98,70.13) ;
			\draw [shift={(245,72)}, rotate = 268.63] [fill={rgb, 255:red, 0; green, 0; blue, 0 }  ][line width=0.08]  [draw opacity=0] (12,-3) -- (0,0) -- (12,3) -- cycle    ;
			\draw    (503.76,154) .. controls (503,41.57) and (323.81,-53.04) .. (324.97,70.13) ;
			\draw [shift={(325,72.01)}, rotate = 268.63] [fill={rgb, 255:red, 0; green, 0; blue, 0 }  ][line width=0.08]  [draw opacity=0] (12,-3) -- (0,0) -- (12,3) -- cycle    ;
			\draw    (503.76,155.28) .. controls (503,42.84) and (163.42,-53.04) .. (162.98,70.12) ;
			\draw [shift={(163,71.99)}, rotate = 268.63] [fill={rgb, 255:red, 0; green, 0; blue, 0 }  ][line width=0.08]  [draw opacity=0] (12,-3) -- (0,0) -- (12,3) -- cycle    ;
			\draw    (504.23,181.27) .. controls (505.99,298.27) and (164,452) .. (163,274.99) ;
			\draw [shift={(163,274.99)}, rotate = 89.68] [fill={rgb, 255:red, 0; green, 0; blue, 0 }  ][line width=0.08]  [draw opacity=0] (12,-3) -- (0,0) -- (12,3) -- cycle    ;
			\draw    (504.23,181.27) .. controls (505.99,298.27) and (246,453.01) .. (245,276) ;
			\draw [shift={(245,276)}, rotate = 89.68] [fill={rgb, 255:red, 0; green, 0; blue, 0 }  ][line width=0.08]  [draw opacity=0] (12,-3) -- (0,0) -- (12,3) -- cycle    ;
			\draw    (504.23,181.27) .. controls (505.99,298.27) and (326,453.01) .. (325,276.01) ;
			\draw [shift={(325,276.01)}, rotate = 89.68] [fill={rgb, 255:red, 0; green, 0; blue, 0 }  ][line width=0.08]  [draw opacity=0] (12,-3) -- (0,0) -- (12,3) -- cycle    ;

			\draw (498,162) node [anchor=north west][inner sep=0.75pt]  [xscale=0.87,yscale=0.87] [align=left] {\footnotesize{a}};
			\draw (420,162) node [anchor=north west][inner sep=0.75pt]  [xscale=0.87,yscale=0.87] [align=left] {\scriptsize{p}};
			\draw (58,33) node [anchor=north west][inner sep=0.75pt]  [xscale=0.87,yscale=0.87]  {\scriptsize{$c_{1}^{1}$}};
			\draw (58,73) node [anchor=north west][inner sep=0.75pt]  [xscale=0.87,yscale=0.87]  {\scriptsize{$c_{1}^{2}$}};
			\draw (58,113) node [anchor=north west][inner sep=0.75pt]  [xscale=0.87,yscale=0.87]  {\scriptsize{$c_{1}^{3}$}};
			\draw (57,210) node [anchor=north west][inner sep=0.75pt]  [xscale=0.87,yscale=0.87]  {\scriptsize{$c_{2}^{1}$}};
			\draw (57,249) node [anchor=north west][inner sep=0.75pt]  [xscale=0.87,yscale=0.87]  {\scriptsize{$c_{2}^{2}$}};
			\draw (57,290) node [anchor=north west][inner sep=0.75pt]  [xscale=0.87,yscale=0.87]  {\scriptsize{$c_{2}^{3}$}};
			\draw (154,73) node [anchor=north west][inner sep=0.75pt]  [xscale=0.87,yscale=0.87]  {\scriptsize{$x_{1}^{1}$}};
			\draw (236,73) node [anchor=north west][inner sep=0.75pt]  [xscale=0.87,yscale=0.87]  {\scriptsize{$x_{1}^{2}$}};
			\draw (316,73) node [anchor=north west][inner sep=0.75pt]  [xscale=0.87,yscale=0.87]  {\scriptsize{$x_{1}^{3}$}};
			\draw (153,250) node [anchor=north west][inner sep=0.75pt]  [xscale=0.87,yscale=0.87]  {\scriptsize{$x_{2}^{1}$}};
			\draw (236,251) node [anchor=north west][inner sep=0.75pt]  [xscale=0.87,yscale=0.87]  {\scriptsize{$x_{2}^{2}$}};
			\draw (316,251) node [anchor=north west][inner sep=0.75pt]  [xscale=0.87,yscale=0.87]  {\scriptsize{$x_{2}^{3}$}};
			\draw (115,161.4) node [anchor=north west][inner sep=0.75pt]  [xscale=0.87,yscale=0.87]  {\scriptsize{$n_{1}^{1}$}};
			\draw (154,160.4) node [anchor=north west][inner sep=0.75pt]  [xscale=0.87,yscale=0.87]  {\scriptsize{$n_{1}^{2}$}};
			\draw (194,160.4) node [anchor=north west][inner sep=0.75pt]  [xscale=0.87,yscale=0.87]  {\scriptsize{$n_{1}^{3}$}};
			\draw (275,158) node [anchor=north west][inner sep=0.75pt]  [xscale=0.87,yscale=0.87]  {\scriptsize{$n_{2}^{1}$}};
			\draw (315,158) node [anchor=north west][inner sep=0.75pt]  [xscale=0.87,yscale=0.87]  {\scriptsize{$n_{2}^{2}$}};
			\draw (355,158) node [anchor=north west][inner sep=0.75pt]  [xscale=0.87,yscale=0.87]  {\scriptsize{$n_{2}^{3}$}};
		\end{tikzpicture}
		\vspace*{-12mm}
		\caption{Weighted majority graph corresponding to the election $(C,V')$ constructed from a \textsc{3SAT} instance in Example~\ref{ex:ccdc}.
		All edges have a weight of two.}
		\label{fig:congra}
	\end{figure}
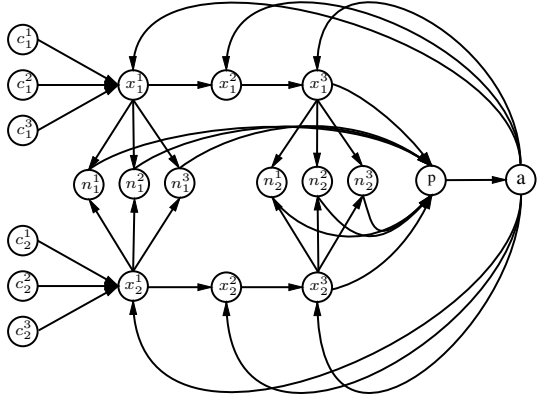

	Before control, every clause candidate $c_i^j$, with  $i \in [2]$ and $j \in [3]$, ties the other clause candidates and has a  path to every other candidate while no candidate has a path to this clause candidate.\footnote{Note that all paths have a strength of two, as there are no other edge weights in the weighted majority graph.}
	Each remaining candidate, including the distinguished candidate~$p$, ties every other candidate except for the six clause candidates, to whom they lose.
	Thus the six clause candidates $c_1^1$, $c_1^2$, $c_1^3$, $c_2^1$, $c_2^2$, and $c_2^3$ are the Schulze winners of the election.
	Now we want to delete at most two candidates to make $p$ a winner of the election.
	Consider the assignment $x_1=\textsc{true}$, which satisfies the first clause, and $x_2=\textsc{true}$, which satisfies the second clause of the CNF formula $\varphi$, i.e., we have a yes-instance of \textsc{3SAT}.\footnote{The truth assignment of $x_3$ is irrelevant, as both clauses and thus the formula $\varphi$ are already true due to the truth assignment of $x_1$ and~$x_2$.} 
	To ensure that $p$ is a winner of the election, it is necessary to destroy each path from a clause candidate to $p$.
	Menton and Singh~\shortcite{men-sin:c:control-complexity-schulze} argue that in order to do so, it is sufficient to ``{\normalfont delete one literal candidate for each clause, selecting a literal that is satisfied by the satisfying assignment for $Cl$}.''
	However, if we delete the literal candidates $x_1^1$ and $x_2^2$, there is still a  path from the candidates $c_2^1,c_2^2,$ and $c_2^3$, traversing $x_2^1$ and $n_1^i$ with $1\leq i\leq 3$, to the distinguished candidate $p$.
	Therefore, $\strongestpath{c_2^i}{p} > \strongestpath{p}{c_2^i}$ for $i \in [3]$ and thus $p$ is still not a Schulze winner of the election.
	
	We now argue that $p$ cannot be made a Schulze winner of the election by deleting any other possible choice of $k = 2$ candidates, so we indeed have a no-instance of Schulze-\textsc{CCDC}.
	First, there are $k+1 = 3$ candidates in each group of clause candidates $C_i$ and negation candidates $N_{ijmn}$, and since all of them have the same incoming and outgoing edges in the WMG, deleting only two of them cannot make $p$ a Schulze winner of the election.
	Further, deleting candidate $a$ would result in $p$ losing to all literal candidates in~$L$.
	Therefore, the only way to guarantee $p$`s victory is to delete two of the literal candidates $x_1^{j}, x_2^{j}$ for $j \in [k+1]$. 
	It is easy to see that at least one literal candidate for each clause must be deleted to break the paths between all clause candidates and~$p$.
	It thus suffices to consider only those pairs of literal candidates where $i=1$ for one and $i=2$ for the other.
	The following table lists those candidates to whom $p$ loses when the pair in the corresponding column is deleted, thus preventing $p$ from becoming a Schulze winner of the election:
	{
		\noindent
		\begin{tabularx}{\columnwidth}{lCCCCCCCCc}
			\multirow{ 2}{*}{deleted pair} & $x_1^1$ & $x_1^1$ & $x_1^1$ & $x_1^2$ & $x_1^2$ & $x_1^2$ & $x_1^3$ & $x_1^3$ & $x_1^3$ \\
			
			& $x_2^1$ & $x_2^2$ & $x_2^3$ & $x_2^1$ & $x_2^2$ & $x_2^3$ & $x_2^1$ & $x_2^2$ & $x_2^3$ \\
			\midrule
			wins against $p$ & $n^j_1$ & $c^j_2$ & $c^j_2$ & $c^j_1$ & $c^j_i$ & $c^j_i$ & $c^j_1$ & $c^j_i$ & $c^j_i,~n^j_2$
		\end{tabularx}
	}
	\mbox{}\\
	Note that $j \in [k+1]$ and $i \in [2]$.
	This shows that a yes-instance of $\threesat$ 
	has been mapped to a no-instance of Schulze-\textsc{CCDC} by the reduction in the proof of \cite[Thm.~2.2]{men-sin:c:control-complexity-schulze}.
\end{example}

\sproofof{Theorem~\ref{thm:schulze-ccdc-unique}}
	The proof is analogous to that of Theorem~\ref{thm:schulze-ccdc}, only the voter lists must be adapted as follows:
	
	{
		\vspace{1em}
		\noindent
		\begin{tabularx}{\columnwidth}{Cll}
			$\#$ & preferences & for each \\
			\midrule
			$3$ & $\voteW{c_i^j}{x_i^1}$ & $i\in [k]$, $j\in [k+1]$\\
			$3$ & $\voteW{x_i^1}{x_i^2}$ & $i\in [k]$\\
			$3$ & $\voteW{x_i^2}{x_i^3}$ & $i\in [k]$\\
			$3$ & $\voteW{x_i^3}{p}$ & $i \in [k]$\\
			$4$ & $\voteW{a}{x}$ & $x\in L$\\
			$4$ & $\voteW{p}{a}$&\\
			$1$ & $\voteW{n}{p}$ & $n\in N$\\ 
			$2$ & $\voteW{a}{c}$ & $c\in K$\\
			$2$ & $\voteW{x_i^j}{n^l_{ijmn}}$& $l \in [k+1]$ where $x_m^n$ is the negation of $x_i^j$
		\end{tabularx}
	}
\eproofof{Theorem~\ref{thm:schulze-ccdc-unique}}

\section{Variations of Destructive Control in Schulze Elections}

Control with the additional twist of candidate groups was introduced by Erd{\'e}lyi et al.~\shortcite{erd-hem-hem:c:more-natural-models-of-electoral-control-by-partition} as one of their ``more natural'' models of electoral control.
Intuitively, all candidates are assigned some group (i.e., a label) and control actions can only consider whole groups of candidates.
For example, one cannot simply delete a candidate without also deleting all other candidates of that same group.
As noted by Erd{\'e}lyi et al.~\shortcite{erd-hem-hem:c:more-natural-models-of-electoral-control-by-partition}, resistance to the standard control problem carries over to the variant with groups.
However, the reverse is not always true.
A voting rule may very well be vulnerable to the standard control problem, but resistant once groups are introduced.
In the following, we will denote the group variant of a control problem by appending the suffix \textsc{G}, e.g., \textsc{DCACG} for \textsc{Destructive Control by Adding Candidate Groups}.

Ranked pairs voting is resistant to standard variants of manipulation, bribery, and control by deleting as well as adding
candidates.
All subsequently considered control problems are either solved in previous sections (Corollary~\ref{cor:rp-dcrc}) or resistance carries over from the standard control problems.
Consequently, in this section we focus on the Schulze method only.

Following the approach by Menton and Singh~\shortcite{men-sin:c:control-complexity-schulze}, we study the relation of the given problems to variations of minimum vertex cut.
First, we define two variations of \ppvclong, called \mippvclong\ and \cvclong. 

In \mippvclong, we are looking for a set of vertices such that no path from $s$ to $t$ is present while a path from $t$ to $s$ remains, with the additional twist that one must include a given number of labeled vertices.

\EP{\mippvclong}
{A directed graph $G=(V,E)$, where $V$ contains some labeled vertices $V_l \subset V$, two vertices $s, t \in V$ with $s\neq t$, and two integers
	$x$ and~$y$.}
{Is there a subset $V' \subseteq V$ such that $V'$ is a path-preserving vertex cut and $\cardinality{V' \setminus V_l} \le x$ and $\cardinality{V' \cap V_l} \ge y$?}

In \cvclong, we are given a graph with colored vertices and each color must either be fully included in the cut set or not at all.
Note that this also applies to $s$ and~$t$, and
vertices with their colors
must not be in the cut set.
For a vertex~$v$, we define $\mathrm{col}(v)$ to return
$v$'s color,
and we define $V[\mathrm{col}] \subseteq V$ to be the set of vertices with the given color~$\mathrm{col}$.

\EP{\cvclong}
{A directed graph $G=(V,E)$, where each vertex is colored by $\mathrm{col}$, two vertices $s, t \in V$ with $s\neq t$, and an integer~$k$.}
{Is there a subset $V' \subseteq V$ such that $V'$ is a path-preserving vertex cut, $\cardinality{V'} \le k$, and $V[\mathrm{col}(v)] \subseteq V'$ for each $v \in V'$?}

To show that Schulze is \emph{vulnerability-combining} (which according to \cite[Definition 4.8]{fal-hem-hem:j:multimode-control} means that combining only vulnerable prongs leads to vulnerability to the resulting multimode attacks) and thus to fully classify multimode control in Schulze elections, it suffices to show that Schulze-\textsc{DCAC+DC} is polynomial-time solvable.
By reducing Schulze-\textsc{DCAC+DC} to \mippvclong, we transfer this challenge to solving whether \mippvclong\ is solvable in polynomial time.

\begin{proposition}\label{prop:ccrc-min-cut}
	In the nonunique-winner model, \schulzeproblem{DCAC+DC} polynomial-time Turing-re\-duces to \mippvclong.
\end{proposition}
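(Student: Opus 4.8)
The plan is to turn a single \schulzeproblem{DCAC+DC} instance into polynomially many \mippvc\ queries, one for each choice of a candidate to be boosted and each choice of a path-strength threshold. First I would record two facts. (i) In the nonunique-winner model a destructive action succeeds exactly when, in the resulting election, some candidate $c\neq w$ (with $w$ the despised candidate) has $P(c,w)>P(w,c)$, i.e.\ $w$ is no longer a Schulze winner. (ii) Restricting the votes to a candidate subset leaves every pairwise margin unchanged, so the WMG behaves as a fixed digraph $\hat G$ on $C\cup D$, and adding or deleting a candidate merely inserts or removes a vertex together with its fixed-weight incident edges. Hence an admissible control action is the same as choosing the surviving set $S=(C\setminus C')\cup D'$ with $w\in S$, $C'\subseteq C\setminus\{w\}$, $|C'|\le\ell_{DC}$, $D'\subseteq D$, $|D'|\le\ell_{AC}$; equivalently, it is the same as removing from $\hat G$ the set $Z=C'\cup(D\setminus D')$ with $|Z\cap C|\le\ell_{DC}$ and $|Z\cap D|\ge|D|-\ell_{AC}$, which is exactly the shape of an \mippvclong\ cut (the unlabeled, i.e.\ $C$, part of the cut is capped, while the labeled, i.e.\ $D$, part is forced large).

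Second I would use a threshold decomposition: for an integer $\tau$ let $\hat G_{\ge\tau}$ be the unweighted digraph on $C\cup D$ keeping only the edges of weight $\ge\tau$. For any $S$ and any $c\neq w$ in $S$, one has $P_{\hat G[S]}(c,w)>P_{\hat G[S]}(w,c)$ if and only if for some $\tau$ the digraph $\hat G_{\ge\tau}[S]$ contains a $c\to w$ path but no $w\to c$ path, and it suffices to try the $O(n)$ values $\tau$ that occur as edge weights of $\hat G$ (the strength of a strongest path is the weight of its weakest edge, and a strongest $c\to w$ path exists whenever the strict inequality holds, so such a $\tau$ is an actual edge weight). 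The condition ``$\hat G_{\ge\tau}[S]$ has a $c\to w$ path but no $w\to c$ path'' says exactly that $Z=(C\cup D)\setminus S$ is a path-preserving vertex cut for $s=w$, $t=c$ in $\hat G_{\ge\tau}$ (and $s,t$ automatically lie outside any such cut).

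Third I would assemble the Turing reduction: for every $c\in(C\cup D)\setminus\{w\}$ and every admissible $\tau$, build the \mippvc\ instance $\bigl(\hat G_{\ge\tau},\,V_l=D,\,s=w,\,t=c,\,x=\ell_{DC},\,y=\max(0,|D|-\ell_{AC})\bigr)$, query the oracle, and answer ``yes'' iff some query does. For correctness: a successful control action yields a boosted $c$, and with $\tau:=P(c,w)$ in the resulting election the set $Z=C'\cup(D\setminus D')$ witnesses instance $(c,\tau)$, since $|Z\cap C|\le\ell_{DC}$ and $|Z\cap D|=|D|-|D'|\ge|D|-\ell_{AC}$; conversely a positive answer on $(c,\tau)$ returns a PPVC $V'$, and setting $C'=V'\cap C$, $D'=D\setminus(V'\cap D)$ respects both budgets and produces an election with $P(c,w)\ge\tau>P(w,c)$, so $w$ is dethroned. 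Only polynomially many queries, each built in polynomial time, are used, so this is a polynomial-time Turing reduction.

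I expect the main obstacle to be the accounting that collapses the two budgets $\ell_{AC}$ and $\ell_{DC}$ into the single asymmetric constraint of \mippvclong: the key point is that capping the number of \emph{added} $D$-candidates by $\ell_{AC}$ is equivalent to \emph{forcing} the cut to contain at least $|D|-\ell_{AC}$ of the $D$-vertices, while the deletion budget $\ell_{DC}$ maps transparently to the cap $x$ on the unlabeled part of the cut. A smaller point needing care is verifying that fact (i) and the threshold decomposition are faithful, including the degenerate cases $\ell_{AC}=0$, $D=\emptyset$, or $w$ already not a winner, which the construction should handle automatically.
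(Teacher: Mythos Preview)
Your proposal is correct and takes essentially the same approach as the paper: label the spoiler set $D$, iterate over potential rivals of the despised candidate, and issue \mippvc\ queries with $x=\ell_{DC}$ and $y=\lvert D\rvert-\ell_{AC}$. The paper's proof is terser---it checks the two trivial cases up front and defers the path-strength bookkeeping to Menton and Singh's Lemma~2.4---whereas you spell out the threshold decomposition over edge-weight values $\tau$ explicitly and let the trivial cases be absorbed by the construction, but the underlying reduction is the same.
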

\begin{proof}
	Consider a \schulzeproblem{DCAC+DC} instance
	$(C, D, V, p, \lAC, \lDC)$.
	We start by checking the trivial cases.
	First, if $p$ is not a Schulze winner
	of $(C,V)$, then
	the destructive goal of the chair is accomplished without having to add or delete any candidate.
	Second, if $p$ is a Condorcet winner of the election $(C \cup D, V)$, then control is
	impossible.\footnote{Note that Condorcet voting is immune to destructive control by deleting candidates~\cite{hem-hem-rot:j:destructive-control} (i.e., no matter which candidates are deleted from~$C \setminus \{p\}$, a Condorcet winner $p$ cannot be dethroned).  Further, no matter which candidates from $D$ are added to~$C$ (even if all candidates from $D$ are added), $p$ remains a Condorcet winner.  Finally, Schulze is Condorcet-consistent, so a Condorcet winner is also a Schulze winner.}
	If none of these cases are true, we continue as follows.
	We first label all candidates from $D$ and then add them to the given election $(C,V)$ to receive a weighted majority graph, where all candidates from $D$ are labeled, while all original candidates from $C$ remain unlabeled.
	Now we iterate through all candidates to check if we can make $p$ lose against one of them.
	This can now be done similarly to the proof of
	\cite[Lemma~2.4]{men-sin:c:control-complexity-schulze}, except for the following: Instead of applying a subroutine for
	solving \ppvclong, we apply a subroutine for solving \mippvclong, where
	$x = \lDC$ and $y = \vert D \vert - \lAC$.
	We return ``yes'' if we ever receive a positive result from this call;
	otherwise, we return ``no.''
	This is correct, as
	we return ``yes'' if and only if we have deleted (at most) $\lDC$ unlabeled candidates while (at most) $\lAC$ candidates from $D$ remain in the controlled election.
\end{proof}

In \textsc{DCAC+DC}, the limit of how many candidates may be added, $\lAC$, and the limit of how many candidate may be deleted, $\lDC$, bear no relation to one another.
For the exact variant of this multimode control problem and for control by replacing candidates, this is different, though, and we need to reflect this in the vertex cut.
For the reduction to apply to \schulzeproblem{EDCAC+DC} and \schulzeproblem{DCRC}, we thus need to change how the limits in the vertex cut are handled.
That is, we need to (a)~set an exact number of unlabeled as well as an exact number of labeled vertices included in the vertex cut for \schulzeproblem{EDCAC+DC}, and (b)~set these two numbers to be actually the same for \schulzeproblem{DCRC}.

\begin{proposition}\label{prop:ccdvg-min-cut}
	In the nonunique-winner model, \schulzeproblem{DCDCG} and \schulzeproblem{DCACG} poly\-no\-mi\-al-time Turing-reduces to \cvclong. 
\end{proposition}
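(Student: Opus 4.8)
The plan is to mirror the reduction behind Proposition~\ref{prop:ccrc-min-cut}, which is itself an adaptation of the Schulze-\textsc{DCDC}-to-\ppvclong\ argument of Menton and Singh~\shortcite[Lemma~2.4]{men-sin:c:control-complexity-schulze}, and to replace the role of \ppvclong\ by \cvclong, using the vertex colours to carry the group structure. As there, I first dispose of the trivial cases: if the despised candidate $w$ is already not a Schulze winner of $(C,V)$, answer ``yes''; if $w$ is a Condorcet winner (of $(C,V)$ in the deleting case, and of the election with all pool candidates added in the adding case), answer ``no'', since a Condorcet winner survives every deletion and every addition and, Schulze being Condorcet-consistent, is always a Schulze winner. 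Otherwise I iterate over every potential rival $c$ (ranging over $C\setminus\{w\}$ for \schulzeproblem{DCDCG}, and over the full candidate set minus $w$ for \schulzeproblem{DCACG}) together with every threshold $\theta$ among the $O(m^2)$ distinct WMG edge weights. For fixed $c$ and $\theta$ I form the graph $G_{c,\theta}$ on the current candidate set whose edges are exactly the WMG arcs of weight at least $\theta$; a path-preserving $w$-$c$ vertex cut of $G_{c,\theta}$ is precisely a set of candidates whose removal destroys every $w$-to-$c$ path of strength $\ge\theta$ while leaving some $c$-to-$w$ path of strength $\ge\theta$ intact, which forces $P(c,w)>P(w,c)$ in the resulting election and hence ejects $w$ from the winner set. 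The control instance is a ``yes'' instance if and only if at least one of these polynomially many auxiliary instances admits such a cut respecting the group and limit constraints, so the Turing reduction simply OR's the results. Note that here, unlike in Theorem~\ref{thm:ccdc-neighbors}, I do not need to restrict the search to in-neighbourhoods, since I already enumerate all rivals.

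For \schulzeproblem{DCDCG} the colours of $G_{c,\theta}$ are taken to be exactly the input candidate groups. Then ``the cut is a union of whole groups'' is literally the \cvclong\ requirement that $V[\mathrm{col}(v)]\subseteq V'$ for every $v\in V'$, and the group of $w$ and the group of $c$, which must not be removed, coincide with the colour classes of $s=w$ and $t=c$, which \cvclong\ protects automatically. The control limit is encoded in the size bound $k$; if that limit is stated as a bound on the number of deleted \emph{groups} rather than on the number of deleted candidates, I first pad each group with fresh isolated vertices to a common cardinality~$M$, so that ``at most $\ell$ groups'' becomes ``$|V'|\le\ell M$'' (isolated vertices never lie on a $w$-$c$ path and so do not change which cuts are path-preserving). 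Correctness in both directions then follows essentially verbatim from Menton and Singh's argument: a successful control action yields a rival $c$, a surviving $c$-to-$w$ path that fixes $\theta$, and a colour-respecting path-preserving cut of $G_{c,\theta}$; conversely, any such cut returned by \cvclong\ names a legal group deletion after which $c$ beats $w$.

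For \schulzeproblem{DCACG} I proceed analogously but in the ``add everything, then cut'' formulation used for the adding side of Proposition~\ref{prop:ccrc-min-cut}: first add all pool candidates to the election, colour each pool candidate by its pool group, and give all original candidates one common ``protected'' colour. This is consistent because $s=w$ lies in $C$, so that colour is protected anyway, and because if the current rival $c$ is a pool candidate then \cvclong\ additionally protects $c$'s group, i.e.\ forces it to be added, which is exactly what boosting $c$ requires. A colour-respecting path-preserving cut of this graph is then a union of whole pool groups whose removal, i.e.\ whose non-addition, destroys the strong $w$-to-$c$ paths while keeping a strong $c$-to-$w$ path.

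\textbf{Main obstacle.} I expect the graph-theoretic core above to be routine; the delicate part is the numeric bookkeeping, in two respects. First, the threshold accounting: one must argue that ranging over all candidate edge weights $\theta$ and over all rivals $c$ captures every way of dethroning $w$, and that a path-preserving cut of $G_{c,\theta}$ really does yield the \emph{strict} inequality $P(c,w)>P(w,c)$ in the controlled election; this is inherited, with only cosmetic changes, from \shortcite[Lemma~2.4]{men-sin:c:control-complexity-schulze}. Second, and this is the genuine difficulty, for \schulzeproblem{DCACG} the addition limit $\ell_{AC}$ constrains the number of \emph{non}-added pool groups from \emph{below}, whereas \cvclong\ bounds the cut only from above. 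I would resolve this by the standard complementation argument: after padding all pool groups to a common cardinality, enumerate (inside the Turing reduction) the intended number $j\le\ell_{AC}$ of groups to add and, for each $j$, test with a suitably chosen \cvclong\ size bound whether there is a path-preserving colour-respecting cut whose complementary (non-added) part has the required cardinality --- the same device by which \mippvclong\ augments \ppvclong\ with a lower bound, here pushed into the colour/size accounting rather than into a separate parameter. Verifying that this enumeration is faithful and that the padding vertices never create a spurious $w$-to-$c$ path or destroy a needed $c$-to-$w$ path is the step I would carry out most carefully.
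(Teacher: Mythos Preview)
Your \schulzeproblem{DCDCG} argument is essentially the paper's approach; the paper's own proof is a single sentence that defers to Menton and Singh's Lemma~2.4 and says ``use \cvclong\ instead of \ppvclong,'' so your write-up is considerably more detailed than theirs while following the same line. The padding device for translating a group-count limit into a vertex-count limit is a reasonable addition that the paper does not spell out.

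For \schulzeproblem{DCACG} you correctly isolate the real difficulty, namely that the ``add all, then cut'' encoding turns the addition limit $\ell_{AC}$ into a \emph{lower} bound on the number of cut groups, whereas \cvclong\ only bounds the cut from above. The paper's one-line proof does not address this at all, so you are ahead of the paper in even noticing it. However, your proposed fix does not close the gap. Enumerating $j\le\ell_{AC}$ and querying \cvclong\ with size bound $k=(|\text{pool groups}|-j)\cdot M$ only tells you whether there is a colour-respecting path-preserving cut of size \emph{at most} $k$; a ``yes'' answer can be witnessed by a cut that is strictly smaller, i.e., by adding \emph{more} than $j$ groups, which may exceed $\ell_{AC}$. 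Comparing the answers for $k$ and $k-M$ pins down the \emph{minimum} colour-respecting path-preserving cut, but path-preserving cuts are not upward closed: a small path-preserving cut existing does not preclude a larger one, and a large one being needed is exactly what you must certify. So the oracle calls you describe cannot, as stated, detect ``there exists a path-preserving colour-respecting cut with at least $|\text{pool groups}|-\ell_{AC}$ groups.'' This is precisely why Proposition~\ref{prop:ccrc-min-cut} introduces the separate lower-bound parameter $y$ in \mippvclong\ rather than trying to simulate it with \ppvclong\ queries; your sketch would need an analogous mechanism here, and \cvclong\ as defined does not provide one.
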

\begin{proof}
	Our proof for the reduction of \dcdcg\ and \dcacg\ works similar to the proof of
	\cite[Lemma~2.4]{men-sin:c:control-complexity-schulze}, but instead of applying a subroutine for
	solving \ppvclong, we apply one for solving \cvclong.
\end{proof}

While we do not know the complexity of \textsc{Maximum-Inclusion Path-Preserving Vertex Cut} and \textsc{Colored Path-Preserving Vertex Cut} yet, by Propositions~\ref{prop:ccrc-min-cut} and~\ref{prop:ccdvg-min-cut}
membership of these variants of vertex cut in~\p\
immediately gives \p\ results for the control problems \textsc{DCDCG} and
\textsc{DCACG} in Schulze elections.
Note that
the standard vertex cut problem can be solved in polynomial time~\cite{wes:b:introduction-graph-theory}.

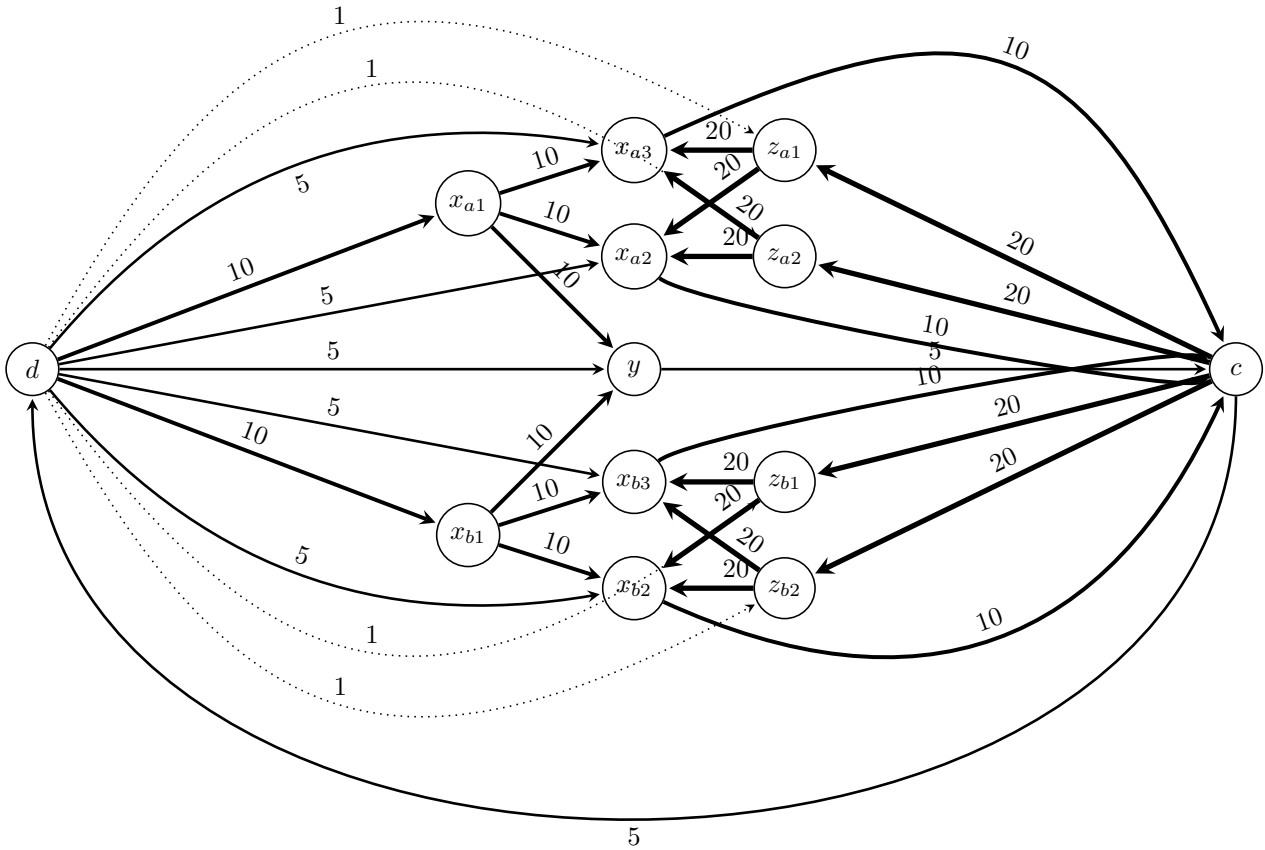
\begin{figure*}
	\begin{tikzpicture}
		[node distance=1cm, semithick]			
		\tikzstyle{voteedge}=[->,>=stealth,shorten >=1pt]
		\tikzstyle{candidate}=[minimum size=0.7cm, draw, circle]

		\node[candidate](xa3)										{$x_{a3}$};
		\node[candidate](xa1)	[below left of=xa3, xshift=-1.5 cm]	{$x_{a1}$};
		\node[candidate](xa2)	[below right of=xa1, xshift=1.5 cm]	{$x_{a2}$};
		\node[candidate](y) 	[below of =xa2, yshift=-0.5cm]						{$y$};
		
		\node[candidate](xb3)	[below of =y,yshift=-0.5cm]						{$x_{b3}$};
		\node[candidate](xb1)	[below left of=xb3, xshift=-1.5 cm]	{$x_{b1}$};
		\node[candidate](xb2)	[below right of=xb1, xshift=1.5 cm]	{$x_{b2}$};
		
		\node[candidate](za1)	[right of =xa3, xshift=1cm]			{$z_{a1}$};
		\node[candidate](za2)	[right of =xa2, xshift=1cm]			{$z_{a2}$};
		\node[candidate](zb1)	[right of =xb3, xshift=1cm]			{$z_{b1}$};
		\node[candidate](zb2)	[right of =xb2, xshift=1cm]			{$z_{b2}$};
		
		\node[candidate](d) 	[left of =y, xshift=-7cm]			{$d$};
		\node[candidate](c) 	[right of =y, xshift=7cm, ]	{$c$};

		\foreach \x/\bend/\lpos in {xa2//above,xa3/bend left/below,xb2/bend right/above,xb3//above}
		\draw[voteedge,line width=1pt] (d) to [\bend] node[midway, sloped, \lpos] {$5$}  (\x);
		
		\foreach \x in {xa1,xb1}
		\draw[voteedge, line width=1.5pt] (d) -- (\x) node[midway, sloped, above] {$10$};

		\foreach \x in {xa1,xb1}
		\draw[voteedge, line width=1.5pt] (\x) -- (y) node[midway, sloped, above] {$10$};
		
		\foreach \x/\z/\pos in {xa2/za2/0.2,xa3/za1/0.4,xa3/za2/0.2,xa2/za1/0.2,
			xb2/zb2/0.2,xb3/zb1/0.2,xb3/zb2/0.2,xb2/zb1/0.2}
		\draw[voteedge, line width=2pt] (\z) -- (\x) node[pos=\pos, sloped, above] {$20$};

		\foreach \z in {za1,za2,zb1,zb2}
		\draw[voteedge, line width=2pt] (c) -- (\z) node[midway, sloped, above] {$20$};

		\foreach \x/\bend/\ln in {xa2/bend right/0.2,xb3/bend left/0.2}
		\draw[voteedge,looseness=\ln,line width=1.5pt] (\x) to [\bend] node[midway, sloped, above] {$10$} (c);
		\draw[voteedge,looseness=1.5,line width=1.5pt] (xa3) to [bend left = 45] node[midway, sloped, above] {$10$} (c);
		\draw[voteedge,looseness=1.2,line width=1.5pt] (xb2) to [bend right = 45] node[midway, sloped, above] {$10$} (c);

		\foreach \x/\y in {xa1/xa2,xa1/xa3,xb1/xb2,xb1/xb3}
		\draw[voteedge, line width=1.5pt] (\x) -- (\y) node[midway, sloped, above] {$10$};
		
		\draw[voteedge, line width=1pt] (d) to node[midway, sloped, above] {$5$} (y);
		\draw[voteedge, line width=1pt] (y) to node[midway, sloped, above] {$5$} (c);
		\draw[voteedge, looseness=1.2, line width=1pt] (c) to [bend left=90] node[midway, below] {$5$} (d);

		\foreach \x/\bend in {za1/bend left,za2/bend left,zb1/bend right,zb2/bend right}
		\draw[voteedge,dotted, looseness = 1.4] (d) to [\bend= 45] node[midway, above] {$1$}  (\x);
		
	\end{tikzpicture}
	\caption{A WMG of an election where for $\ell = 2$ control by deleting candidates is possible in the unique-winner model but impossible in the nonunique-winner model. The edge width visually illustrates the edge weight.}
	\label{fig:election-dcdc-unique-vs-nonunique}
\end{figure*}

\section{Schulze Is Vulnerable to Destructive Control by Deleting Candidates}\label{apdx:schulzeDCDC}

In this section, we give a polynomial-time algorithm for solving Schulze-\textsc{DCDC} in the nonunique-winner model.

Consider an election $(C,V)$ and a corresponding control instance $((C,V),d,\ell)$ of Schulze-\textsc{DCDC}.
If the despised candidate $d$ initially is a Schulze winner of $(C,V)$, our goal is to find a candidate $c$ with a stronger path to $d$ than $d$ has to $c$ by deleting at most $\ell$ candidates.
The algorithm works as follows.

First, if the despised candidate $d$ is already not a Schulze winner of $(C,V)$, return true.
Otherwise, iterate over the set of candidates $C\setminus \{d\}$.
For each candidate $c\in C\setminus\{d\}$, we check whether $c$ is a possible rival of~$d$.
If $d$ beats $c$ directly, i.e., the edge from $d$ to $c$ is stronger than any path from $c$ to~$d$, we exclude $c$ as a rival and move on to the next possible candidate.
Additionally, if $P[c,d] = 0$, we also exclude~$c$ and move on.
Otherwise, $c$ is a possible rival of $d$ and we move on to the deletion stage and initialize a deletion counter $ctr = 0$ for this candidate~$c$. 
Next, for the graph $G$ of all stronger paths from $d$ to~$c$, i.e., all paths where the strength of the path is greater than $P[c,d]$, we check whether deleting the in-neighbors $N^+_G(c)$ of $c$ is possible within our deletion limit and accomplishes our goal of dethrowning~$d$:
We increment the deletion counter by $ctr  += \cardinality{N^+_G(c)}$ and check whether $ctr > \ell$; if so, we move on to the next possible candidate; otherwise, we delete $N^+_G(c)$ from the original election.
If $d$ is not a Schulze winner of the election after deletion, return true.
If $d$ still wins, we have cut a path from $c$ to $d$ by deleting $N^+_G(c)$.
Repeat the above steps until either the deletion limit is reached or $d$ is not a winner of the election anymore.
Finally, if there is no candidate such that deleting at most $\ell$ candidates makes this candidate win against~$d$, return false.

\vspace{1em}
\sproofof{Theorem~\ref{thm:schulze-dcdc-our-result}}
We first show that the algorithm runs in polynomial time, and prove correctness thereafter.
The algorithm contains several subroutines all of which can be executed in polynomial time.
First, the winner problem for Schulze is known to be solvable in polynomial time \cite{sch:j:schulze-voting}.\footnote{\label{footnote:fine-grained complexitySchulze}The Schulze voting method was recently examined through the lens of fine-grained complexity by Sornat et al.~\shortcite{sor-vas-xu:c:fine-grained-complexity-schulze}.}
Then we use a subroutine to construct a graph of the stronger paths from the despised candidate to some other candidate $c \in C$, a possible rival of~$d$.
This construction can be done in polynomial time by adapting the algorithm for solving the winner problem for Schulze, which itself is an adaption of the Floyd--Warshall algorithm for determining shortest paths in a directed weighted graph.
Finally, the loop over all candidates is bounded by the number of candidates, and the execution of the subroutines in the loop is bounded by the deletion limit~$\ell$.
Overall, the algorithm runs in polynomial time.

We now show that the algorithm will always return true if $((C,V),d,\ell)$ is a yes-instance of Schulze-\textsc{DCDC}, and will return false otherwise.

Assume that $((C,V),d,\ell)$ is a yes-instance of Schulze-\textsc{DCDC}, that is, there exists a subset of candidates $C' \subset C$ such that $\cardinality{C'} \le \ell$ and $d$ is not a Schulze winner of the election $(C\setminus C', V)$.
By Theorem~\ref{thm:ccdc-neighbors}, we know that it suffices to consider the in-neighborhood of candidates $c \in C \setminus \{d\}$ to determine whether one of them beats $d$ by deleting at most $\ell$ candidates.
Obviously, our algorithm always finds such a candidate (whenever there exists one) and thus returns true.

Assume that $((C,V),d,k)$ is a no-instance of Schulze-\textsc{DCDC}, that is, there exists no subset of candidates $C' \subset C$ with $\cardinality{C'} \le \ell$ such that $d$ is not a Schulze winner of the election $(C\setminus C', V)$.
Since any set of candidates $C'$ for which control is successful must be greater than $\ell$ and by Theorem~\ref{thm:ccdc-neighbors} the smallest such set can be found in the neighborhood of some $c \in C \setminus \{d\}$, we have that the algorithm will always exceed the deletion counter and thus return false after examining every $c \in C \setminus \{d\}$.
\eproofof{Theorem~\ref{thm:schulze-dcdc-our-result}}

Unfortunately, we cannot easily transfer our algorithm to work in the unique-winner model.
In this model, the goal is to prevent the sole victory of a despised candidate~$d$.
In addition to outright beating~$d$, this goal can also be achieved by making a candidate win alongside~$d$.
This leads to a situation where some candidate $c$ ties the despised but the victory of $c$ is dependent on all other candidates in the election.
It is possible that the same election, along with $d$ and the deletion limit~$\ell$, gives a yes-instance in the unique-winner model and a no-instance in the nonunique-winner model.
In Figure~\ref{fig:election-dcdc-unique-vs-nonunique}, we give an example for this scenario.
Candidate $d$ is the unique winner of the election.
It is impossible to prevent $d$ from winning altogether by deleting at most two candidates.
However, by deleting both $x_{a1}$ and $x_{b1}$ we can prevent $d$ from being the sole winner of the election, as candidate $c$ will win alongside~$d$.
Note that this can only be achieved by deleting $x_{a1}$ and $x_{b1}$ neither of which are in-neighbors of~$c$.

\section{Deferred Proofs from Section~\ref{sec:controlbymultimodeandreplacing}}

\sproofof{Lemma~\ref{lem:schulze-ibc}}
	Let $(C,V)$ be an election and $x$ a new candidate.
	Let $(C \cup \{x\},V^x)$ be the election where candidate $x$ is added to every vote in $V$ as the least preferred option.
	Clearly, we have $\pairwise[V^x]{x}{c} = 0$ and thus $\pairwiseDiff[V^x]{x}{c} = -\cardinality{V^x}$ for all $c \in C \setminus \{x\}$. 
	
	For Schulze, it follows that in the WMG candidate $x$ has an incoming edge with weight $\cardinality{V^x}$ for every $c \in C$ and the outdegree of $x$ is~$0$.
	Thus candidate $x$ cannot win and can also not be part of any path between any other candidates $c,c' \in C$. Ultimately, $x$ has no influence whatsoever on the election.
	
	For ranked pairs, it follows that the pair $(c,x)$ will be in the top ranking for each $c \in C$ (possibly among others).
	These pairs will be added in the first round of rankings, leaving a graph where the	vertex $x$ will have no outgoing edges and any edge to $x$ cannot be part of a cycle.
	Therefore, all other rankings and edges in the graph are untouched.
	The winner of election $(C \cup \{x\},V^x)$ is the same as of election $(C,V)$.
\eproofof{Lemma~\ref{lem:schulze-ibc}}

\sproofof{Corollary~\ref{cor:schulze-ccrc}}
	Schulze and ranked pairs are IBC (see Lemma~\ref{lem:schulze-ibc}), from Theorem~\ref{thm:schulze-ccdc} we know that \schulzeproblem{CCDC} is \np-hard, and Parkes and Xia~\shortcite{par-xia:c:strategic-schulze-ranked-pairs} showed \np-hardness of \rpproblem{CCDC}.
	Hence, by the result of Lorregia et al.~\shortcite{lor-nar-ros-ven-wal:c:replacing-candidates}, \schulzeproblem{CCRC} and \rpproblem{CCRC} are also \np-hard.
	It is easy to see that \schulzeproblem{CCRC} and \rpproblem{CCRC} are in \np\ and thus, \np-complete.\footnote{As noted earlier, we use a fixed tie-breaking scheme to ensure tractability.}
\eproofof{Corollary~\ref{cor:schulze-ccrc}}

\sproofof{Corollary~\ref{cor:schulze-eccrc}}
	By Lemma~\ref{lem:ibc-plus-ccdc}, we can extend the proof of Corollary~\ref{cor:schulze-ccrc} to \schulzeproblem{ECCAC+DC}, \schulzeproblem{ECCRC}, \rpproblem{ECCAC+DC}, and \rpproblem{ECCRC}.  
\eproofof{Corollary~\ref{cor:schulze-eccrc}}

\sproofof{Corollary~\ref{cor:rp-dcrc}}
	By Lemma~\ref{lem:schulze-ibc}, ranked pairs is IBC and Parkes and Xia~\shortcite{par-xia:c:strategic-schulze-ranked-pairs} showed \np-hardness for \rpproblem{DCDC}.
	By the result of Lorregia et al.~\shortcite{lor-nar-ros-ven-wal:c:replacing-candidates}, \rpproblem{DCRC} is also \np-hard.
	Since \rpproblem{DCRC} is in \np, it is \np-complete.
	By Lemma~\ref{lem:ibc-plus-dcdc}, this also applies to \rpproblem{EDCAC+DC} and \rpproblem{ECCRC}.  
\eproofof{Corollary~\ref{cor:rp-dcrc}}

\sproofof{Theorem~\ref{theo:votercontrol}}
	It is easy to see that \schulzeproblem{ECCAV+DV}, \schulzeproblem{CCRV}, \schulzeproblem{EDCAV+DV} and \schulzeproblem{DCRV} are in \np.
	To show \np-hardness, we reduce from the $\np$-complete problem
	\textsc{Restricted Exact Cover By 3-Sets} (\rxthreec)~\cite{gon:j:exact-cover-constrained}:
	Given a set $B = \{b_1, \ldots, b_{3s}\}$ with $s \geq 1$ and a list $\mathcal{S} = \{S_1, \ldots, S_{3s}\}$, where $S_i = \{b_{i,1}, b_{i,2}, b_{i,3}\}$ and $S_i \subseteq B$ for all $S_i \in \mathcal{S}$ and each $b_j$ is contained in exactly three sets $S_i \in \mathcal{S}$, does there exist an exact cover, i.e., a sublist $\mathcal{S}' \subseteq \mathcal{S}$ such that each $b_i \in B$ occurs in exactly one $S_i \in \mathcal{S}'$?
	
	Let $(B,\mathcal{S})$ be an \rxthreec\ instance, such that $3s = n$.
	We first present the reduction in the nonunique-winner model;
	the slightly adapted reduction for the unique-winner model will be considered later.
	Let $\lAV = \lDV = s$ for the \schulzeproblem{ECCAV+DV} and \schulzeproblem{EDCAV+DV} instances we construct, and let $\lRV = s$ for the \schulzeproblem{CCRV} and \schulzeproblem{DCRV} instances.
	Further, let $\lConst \gg s$ be a constant much greater than~$s$.\footnote{The precise value of $\lConst$ is not important; all that matters is that when used as an edge weight in a WMG, $\lConst$ is large enough, such that any edge changed by the control actions must not change in direction, i.e., the sign of the edge weight must not flip.
		Recall that the strength of a path in a WMG is specified as the weight of the weakest edge on the path.}
	From $(B,\mathcal{S})$ we construct an election $(C,V)$ as follows.
	Let the candidate set be
	\[
	C = B \cup \{p,w\}.
	\]
	The list of votes contains $\lDV$ votes of the form $w\,B\,p$ and the remaining votes are constructed such that	
	\begin{itemize}
		\item $p$ beats $w$ by $2\lConst$ votes,
		\item each $b_i$ beats $p$ by $2\lConst+2\lDV+2\lAV-2$ votes,
		\item $w$ beats each $b_i$ by votes $\gg2\lConst$ (so as to make these edges not relevant for the strength of a path), and 
		\item all other pairwise differences are $0$ or at least smaller than~$\lConst$.
	\end{itemize}
	
	The WMG of the resulting election is shown in Figure~\ref{fig:apdx:ccrv-election-construction}.
	Note that candidate $w$ is the unique winner of the election.
	
	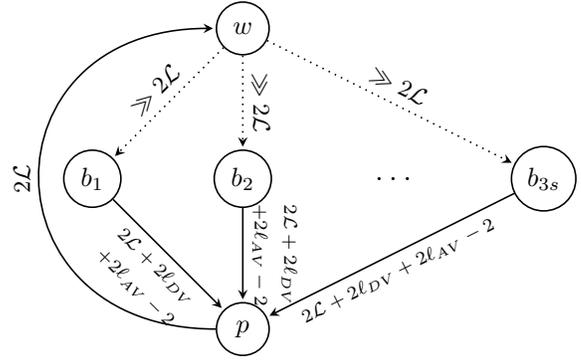
\begin{figure}
		\centering
		\begin{tikzpicture}[node distance=2cm, semithick]
			
			\tikzstyle{voteedge}=[->,>=stealth,shorten >=1pt]
			\tikzstyle{gt2ledge}=[->,>=stealth,shorten >=1pt, dotted]
			\tikzstyle{candidate}=[minimum size=0.7cm, draw, circle]
			\tikzstyle{dots}=[draw=none]
			
			\node[candidate](b2) 							{$b_2$};
			\node[candidate](b1) 		[left of=b2] 		{$b_1$};
			\node[dots](bdots) 			[right of=b2] 		{\Large$\ldots$};
			\node[candidate](b3s) 		[right of=bdots] 	{$b_{3s}$};
			\node[candidate](p) 		[below of=b2] 		{$p$};

			\node[candidate, node distance=2cm](w)         [above of=b2] {$w$};
			
			\draw[gt2ledge] (w) -- node[midway, above, sloped] {\footnotesize{$\gg 2\lConst$}} (b1);
			\draw[gt2ledge] (w) -- node[midway, above, sloped] {\footnotesize{$\gg 2\lConst$}} (b2);
			\draw[gt2ledge] (w) -- node[midway, above, sloped] {\footnotesize{$\gg 2\lConst$}} (b3s);
			
			\draw[voteedge,looseness=2] (p) to [bend left=90] node[midway, above, sloped] {\footnotesize{$2\lConst$}} (w);

			\draw[voteedge] (b1) -- node[align=left, sloped, midway, below] { {\scriptsize $2\lConst+2\lDV$}\\ {\scriptsize $+2\lAV-2$}} (p);
			\draw[voteedge] (b2) -- node[align=left, sloped, midway,above] { {\scriptsize $2\lConst+2\lDV$}\\ {\scriptsize $+2\lAV-2$}}(p);
			\draw[voteedge] (b3s) -- node[sloped, midway, below] { {\scriptsize $2\lConst+2\lDV+2\lAV-2$}} (p);
			
		\end{tikzpicture}
		\caption{The WMG of the election $(C,V)$ from the proof of Theorem~\ref{theo:votercontrol}.}
		\label{fig:apdx:ccrv-election-construction}
	\end{figure}
	
	The list of additional votes $U$ contains one vote
	\[
	S_i\, p\, (B \setminus S_i)\, w
	\hspace{0.5cm} \text{for each}\ S_i \in \mathcal{S}.
	\]
	Let $p$ be the distinguished candidate for the constructive case and $w$ be the despised candidate for the destructive case.
	We claim that $p$ can be made a Schulze winner (and $w$ can be prevented from being the unique Schulze winner) by adding exactly $\lAV$ voters and deleting exactly $\lDV$ voters if and only if $(B,\mathcal{S})$ is a yes-instance of \rxthreec.
	
	From left to right, let $(B,\mathcal{S})$ be a yes-instance of \rxthreec\ and let $\mathcal{S}' \subset \mathcal{S}$ be an exact cover of~$B$.
	We have $\cardinality{\mathcal{S}'} = s$.
	Remove $\lDV$ votes from $V$ of the form $w\,B\,p$ and add $s$ votes $S_i\, p\, (\mathcal{S}\setminus S_i)\, w$ from~$U$, where $S_i \in \mathcal{S}'$.
	Let $V'$ be the resulting list of votes.
	Due to the construction, each strongest path from any candidate to another one can only be composed of edges where the edge weight is greater than $2\lConst$, and each edge with a weight that is much greater than $2\lConst$ has no influence on the strength of the strongest path.
	Therefore, it suffices to calculate the weight changes of edges to and from~$p$.
	In the added votes from $U$ we have
	$b_j \succ p$ in exactly one vote and $p \succ b_j$ in the remaining $s-1$ votes for each $b_j \in B$.
	Therefore, $p$ gains $(s-1) - 1$ in each pairwise comparison to $b_j \in B$. 
	By adding the $s$ votes from~$U$,
	\begin{itemize}
		\item $p$ gains $s-2 = \lAV-2$ against each $b_j \in B$, and
		\item $p$ gains $s = \lAV$ against~$w$.
	\end{itemize}
	By deleting the $\lDV$ votes from~$V$,
	\begin{itemize}
		\item $p$ gains $\lDV$ against each $b_j \in B$, and
		\item $p$ gains $\lDV$ against~$w$.
	\end{itemize}

	We have $\pairwiseDiff[V']{p}{b_j} = 2\lConst+\lDV+\lAV$ for each $b_j \in B$ and $\pairwiseDiff[V']{p}{w} =2\lConst +\lDV+\lAV$.
	Thus we have $\strPath[V']{p}{c} = 2\lConst+\lDV+\lAV = \strPath[V']{c}{p}$ for each $c \in C \setminus \{p\}$, so $p$ is a Schulze winner of the election $(C, V')$, which resulted from $(C,V)$ by adding $s=\lAV$ voters and deleting $s=\lDV$ voters.
	Both the constructive and destructive control actions were successful in the nonunique-winner model.
	
		\begin{figure}
				\centering
				\begin{tikzpicture}[node distance=2cm, semithick]			
						\tikzstyle{voteedge}=[->,>=stealth,shorten >=1pt]
						\tikzstyle{gt2ledge}=[->,>=stealth,shorten >=1pt, dotted]
						\tikzstyle{candidate}=[minimum size=0.7cm, draw, circle]
						\tikzstyle{dots}=[draw=none]
						
						\node[candidate](b2) 							{$b_2$};
						\node[candidate](b1) 		[left of=b2] 		{$b_1$};
						\node[dots](bdots) 			[right of=b2] 		{\Large$\ldots$};
						\node[candidate](b3s) 		[right of=bdots] 	{$b_{3s}$};
						\node[candidate](p) 		[below of=b2] 		{$p$};
						
						\node[candidate, node distance=2cm](w)         [above of=b2] {$w$};
						
						\draw[gt2ledge] (w) -- node[midway, above, sloped] {{\footnotesize $\gg 2\lConst$}} (b1);
						\draw[gt2ledge] (w) -- node[midway, above, sloped] {{\footnotesize $\gg 2\lConst$}} (b2);
						\draw[gt2ledge] (w) -- node[midway, above, sloped] {{\footnotesize $\gg 2\lConst$}} (b3s);
						
						\draw[voteedge,looseness=2] (p) to [bend left=90] node[midway, above, sloped] {{\scriptsize $2\lConst+\lDV+\lAV$}} (w);
						
						\draw[voteedge] (b1) -- node[midway, below, sloped] {{\scriptsize $2\lConst+\lDV+\lAV$}} (p);
						\draw[voteedge] (b2) -- node[midway, above,sloped, align=left] {{\scriptsize $2\lConst+\lDV$}\\ {\scriptsize $+\lAV$}} (p);
						\draw[voteedge] (b3s) -- node[midway, below, sloped] {{\scriptsize $2\lConst+\lDV+\lAV$}} (p);
						
					\end{tikzpicture}
				\caption{The WMG of the resulting election $(C,V')$ from the proof of Theorem~\ref{theo:votercontrol} after successful control.}
				\label{fig:ccrv-controlled-election}
			\end{figure}
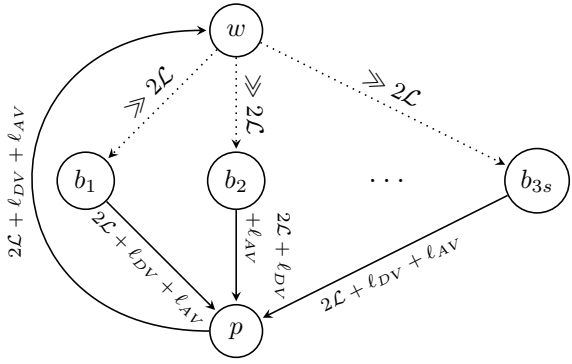
		
	From right to left, let $(B,\mathcal{S})$ be a no-instance of \rxthreec.
	Hence, we know that in any $\mathcal{S}' \subset \mathcal{S}$ with $\cardinality{\mathcal{S}'} = s$ at least one $b_j$ is contained in more than one $S_i \in \mathcal{S}'$.
	The difference between the strength of a strongest path from $w$ to $p$ and the strength of a strongest path from $p$ to $w$ is $\strPath[V]{b_j}{p} - \strPath[V]{p}{b_j} = 2\lDV+2\lAV-2$.
	By deleting $\lDV$ voters, $p$ can make up at most $2\lDV$ and a difference of $2\lAV-2$ remains.
	In any subset $U'$ of $s=\lAV$ votes from $U$ we have one $b_j \in B$ with $\pairwise[U']{b_j}{p} > 1$ and, therefore, $\pairwiseDiff[U']{p}{b_j} \le \lAV-4$.
	It follows that at least one $b_j \in B$ with $\strPath{b_j}{p} \ge 2\lConst+\lDV+\lAV$ exists and, therefore, it holds that $\strongestpath{w}{p} > \strongestpath{p}{w}$, so $p$ cannot be made a Schulze winner by adding $s=\lAV$ votes and deleting $s=\lDV$ votes.
	Since $p$ is the only candidate where the difference in the strength of a strongest path against $w$ is less than $2\lDV+\lAV$, it follows that a sole victory of candidate $w$ cannot be prevented.
	
	For the unique-winner model, we need to slightly adapt the above construction.
	In the voter list~$V$, we need to have every $b_j$ win against $p$ by $2\lConst+2\lDV+2\lAV-3$ votes.
	We then claim that $p$ can be made a \emph{unique} Schulze winner (and $w$ can be prevented from being a winner) by adding exactly $\lAV$ voters and deleting exactly $\lDV$ voters if and only if $(B,\mathcal{S})$ is a yes-instance of \rxthreec.
	The proof of correctness for this slightly modified reduction is analogous.
	Moreover, $\np$-hardness of \schulzeproblem{CCRV} and \schulzeproblem{DCRV} can also be shown analogously in both winner models.
\eproofof{Theorem~\ref{theo:votercontrol}}

\sproofof{Theorem~\ref{theo:rp-votercontrol}}
	We slightly adapt the construction used in the proof of Theorem~\ref{theo:votercontrol} and therefore only provide the overall reduction.
	Instead of having the pairwise differences between each pair of candidates $b_i, b_j \in B$ be zero, we have each $b_i$ win by $\gg 2\lConst$ against all $b_j \in B$, where $i < j$, i.e.,
	\[
	D(b_i,b_j) \gg 2\lConst \hspace{1cm} \forall i < j,\ 1 \le i,j \le 3s.
	\]
	This ensures that no cycles can be formed by inserted edges between any pair of candidates from $B$.
	The rest of the construction remains the same.
	By writing $(c,c')$ we denote the pair of candidates $c$ and $c'$ and---slightly overloading the notation---we use $(c,M)$ to denote all pairs $(c, m)$ where $m \in M$.
		
	In the resulting election, the first ranking of pairwise differences contains all pairs $(w,B)$ and $(b_i, b_j)$, $i < j$, $1 \le i,j \le 3s$.
	As no cycle can be created by adding these pairs, all edges are added.
	The next ranking are all pairs $(B,p)$.
	Again, no cycle can be created and all edges are added.
	However, the remaining pair $(p,w)$ would create an edge from $p$ to~$w$, which would induce a cycle.
	Candidate $w$ remains the sole source and is the unique	ranked pairs winner of the election.
	
	We claim that $p$ can be made a ranked pairs winner (and $w$ can be prevented from being the unique ranked pairs winner) by adding exactly $\lAV$ voters and deleting exactly $\lDV$ voters if and only if $(B,\mathcal{S})$ is a yes-instance of \rxthreec.
	
	From left to right, let $(B,\mathcal{S})$ be a yes-instance of \rxthreec\ and let $\mathcal{S}' \subset \mathcal{S}$ be an exact cover of~$B$.
	Remove $s=\lDV$ votes from $V$ of the form $w\, B\, p$ and add $s = \lAV$ votes $S_i\,  p\, (B\setminus S_i)\, w$ from~$U$, where $S_i \in \mathcal{S}'$.
	The resulting pairwise differences for all pairs are as follows:
	\begin{align*}
		&D(w,B) \gg 2\lConst, \\
		&D(B,p) = 2\lConst + \lDV +\lAV, \text{ and} \\
		&D(p,w) = 2\lConst + \lDV +\lAV.
	\end{align*}
	Assuming a fixed tie-breaking scheme, where the preferred candidate is always favored, we add the pair $(p,w)$  before any of the pairs $(B,p)$, which will then create an edge and thus be skipped.
	Candidate $p$ is the only source and unique ranked pairs winner of the election.
	
	From right to left, let $(B,\mathcal{S})$ be a no-instance of \rxthreec.
	Hence, it holds that in any $\mathcal{S}' \subset \mathcal{S}$ with $\cardinality{\mathcal{S}'} = s$ at least one $b_i$ is contained in more than one $S_i \in \mathcal{S}'$.
	If $D(b_i, p) > D(p,w)$ holds for even one $b_i \in B$, the pair $(b_i, B)$ will be ranked above $(p,w)$ and therefore added.
	When considering $(p,w)$, a path from $w$ to $p$ already exists and therefore the edge is skipped.
	The only way to prevent $w$ from being a unique winner and making $p$ the unique winner is to ensure $D(B, p) \le D(p,w)$.
	By the same argumentation as in the proof of Theorem~\ref{theo:votercontrol}, we cannot make up the pairwise difference unless each $b_i \in B$ is present in exactly one $S_i \in \mathcal{S}'$, which is a contradiction to $(B,\mathcal{S})$ being a no-instance of \rxthreec.
\eproofof{Theorem~\ref{theo:rp-votercontrol}}

\end{document}